\DeclareFontShape{OT1}{lmtt}{m}{it}
     {<->sub*lmtt/m/sl}{}
\lstdefinestyle{default}{language=Anglican, basicstyle=\ttfamily, columns=flexible, showstringspaces=false}
\definecolor{blue}{rgb}{0,0.3,0.7}
\definecolor{red}{rgb}{0.60,0.0,0.0}
\definecolor{purple}{rgb}{0.5,0,0.7}
\definecolor{cyan}{rgb}{0.0,0.6,0.5}
\definecolor{gray}{rgb}{0.4,0.4,0.4}
\lstdefinelanguage{scheme}
{sensitive, %
 alsoletter={:,-,+,*,?,/,!,>,<}, %
 morecomment=[l];, %
}[comments]
\lstdefinelanguage{anglican}%
{
 morekeywords=[1]{},
 morekeywords=[2]{%
   def, def-, defn, defn-, defmacro, defmulti, defmethod, %
   defstruct, defonce, declare, definline, definterface, %
   defprotocol, defrecord, defstruct, deftype, defproject, ns, %
 }, %
 morekeywords=[3]{->, ->>, .., amap, and, areduce, as->, assert, binding, %
   bound-fn, case, comment, cond, cond->, cond->>, condp, declare, definline, %
   definterface, defmacro, defmethod, defmulti, defn, defn-, defonce, %
   defprotocol, defrecord, defstruct, deftype, delay, doseq, dosync, dotimes, %
   doto, extend-protocol, extend-type, fn, for, future, gen-class, %
   gen-interface, if, if-let, if-not, if-some, import, io!, lazy-cat, lazy-seq, let, %
   letfn, locking, loop, memfn, ns, or, proxy, proxy-super, pvalues, %
   recur, refer-clojure, reify, some->, some->>, sync, time, when, when-first, %
   when-let, when-not, when-some, while, with-bindings, with-in-str, %
   with-loading-context, with-local-vars, with-open, with-out-str, %
   with-precision, with-redefs}, %
  morekeywords=[4]{*, *', +, +', -, -', ->ArrayChunk, ->Vec, ->VecNode, %
    ->VecSeq, -cache-protocol-fn, -reset-methods, /, <, <=, =, ==, >, >=, %
    accessor, aclone, add-classpath, add-watch, agent, agent-error, %
    agent-errors, aget, alength, alias, all-ns, alter, alter-meta!, %
    alter-var-root, ancestors, apply, array-map, aset, aset-boolean, aset-byte, %
    aset-char, aset-double, aset-float, aset-int, aset-long, aset-short, assoc, %
    assoc!, assoc-in, associative?, atom, await, await-for, await1, bases, bean, %
    bigdec, bigint, biginteger, bit-and, bit-and-not, bit-clear, bit-flip, %
    bit-not, bit-or, bit-set, bit-shift-left, bit-shift-right, bit-test, %
    bit-xor, boolean, boolean-array, booleans, bound-fn*, bound?, butlast, byte, %
    byte-array, bytes, cast, char, char-array, char?, chars, chunk, %
    chunk-append, chunk-buffer, chunk-cons, chunk-first, chunk-next, chunk-rest, %
    chunked-seq?, class, class?, clear-agent-errors, clojure-version, coll?, %
    commute, comp, comparator, compare, compare-and-set!, compile, complement, %
    concat, conj, conj!, cons, constantly, construct-proxy, contains?, count, %
    counted?, create-ns, create-struct, cycle, dec, dec', decimal?, delay?, %
    deliver, denominator, deref, derive, descendants, destructure, disj, disj!, %
    dissoc, dissoc!, distinct, distinct?, doall, dorun, double, double-array, %
    doubles, drop, drop-last, drop-while, empty, empty?, ensure, %
    enumeration-seq, error-handler, error-mode, eval, even?, every-pred, every?, %
    ex-data, ex-info, extend, extenders, extends?, false?, ffirst, file-seq, %
    filter, filter-ns-publics, filterv, find, find-keyword, find-ns, %
    find-protocol-impl, find-protocol-method, find-var, first, flatten, float, %
    float-array, float?, floats, flush, fn?, fnext, fnil, force, format, %
    frequencies, future-call, future-cancel, future-cancelled?, future-done?, %
    future?, gensym, get, get-in, get-method, get-proxy-class, %
    get-thread-bindings, get-validator, group-by, hash, hash-combine, hash-map, %
    hash-ordered-coll, hash-set, hash-unordered-coll, identical?, identity, %
    ifn?, in-ns, inc, inc', init-proxy, instance?, int, int-array, integer?, %
    interleave, intern, interpose, into, into-array, ints, isa?, iterate, %
    iterator-seq, juxt, keep, keep-indexed, key, keys, keyword, keyword?, last, %
    line-seq, list, list*, list?, load, load-file, load-reader, load-string, %
    loaded-libs, long, long-array, longs, macroexpand, macroexpand-1, %
    make-array, make-hierarchy, map, map-indexed, map?, mapcat, mapv, max, %
    max-key, memoize, merge, merge-with, meta, method-sig, methods, min, %
    min-key, mix-collection-hash, mod, munge, name, namespace, namespace-munge, %
    neg?, newline, next, nfirst, nil?, nnext, not, not-any?, not-empty, %
    not-every?, not=, ns-aliases, ns-functions, ns-imports, ns-interns, %
    ns-macros, ns-map, ns-name, ns-publics, ns-refers, ns-resolve, ns-unalias, %
    ns-unmap, nth, nthnext, nthrest, num, number?, numerator, object-array, %
    odd?, parents, partial, partition, partition-all, partition-by, pcalls, %
    peek, persistent!, pmap, pop, pop!, pop-thread-bindings, pos?, pr, pr-str, %
    prefer-method, prefers, print, print-ctor, print-simple, print-str, printf, %
    println, println-str, prn, prn-str, promise, proxy-call-with-super, %
    proxy-mappings, proxy-name, push-thread-bindings, quot, rand, rand-int, %
    rand-nth, range, ratio?, rational?, rationalize, re-find, re-groups, %
    re-matcher, re-matches, re-pattern, re-seq, read, read-line, read-string, %
    realized?, record?, reduce, reduce-kv, reduced, reduced?, reductions, ref, %
    ref-history-count, ref-max-history, ref-min-history, ref-set, refer, %
    release-pending-sends, rem, remove, remove-all-methods, remove-method, %
    remove-ns, remove-watch, repeat, repeatedly, replace, replicate, require, %
    reset!, reset-meta!, resolve, rest, restart-agent, resultset-seq, reverse, %
    reversible?, rseq, rsubseq, satisfies?, second, select-keys, send, send-off, %
    send-via, seq, seq?, seque, sequence, sequential?, set, %
    set-agent-send-executor!, set-agent-send-off-executor!, set-error-handler!, %
    set-error-mode!, set-validator!, set?, short, short-array, shorts, shuffle, %
    shutdown-agents, slurp, some, some-fn, some?, sort, sort-by, sorted-map, %
    sorted-map-by, sorted-set, sorted-set-by, sorted?, special-symbol?, spit, %
    split-at, split-with, str, string?, struct, struct-map, subs, subseq, %
    subvec, supers, swap!, symbol, symbol?, take, take-last, take-nth, %
    take-while, test, the-ns, thread-bound?, to-array, to-array-2d, trampoline, %
    transient, tree-seq, true?, type, unchecked-add, unchecked-add-int, %
    unchecked-byte, unchecked-char, unchecked-dec, unchecked-dec-int, %
    unchecked-divide-int, unchecked-double, unchecked-float, unchecked-inc, %
    unchecked-inc-int, unchecked-int, unchecked-long, unchecked-multiply, %
    unchecked-multiply-int, unchecked-negate, unchecked-negate-int, %
    unchecked-remainder-int, unchecked-short, unchecked-subtract, %
    unchecked-subtract-int, underive, unsigned-bit-shift-right, update-in, %
    update-proxy, use, val, vals, var-get, var-set, var?, vary-meta, vec, %
    vector, vector-of, vector?, with-bindings*, with-meta, with-redefs-fn, %
    xml-seq, zero?, zipmap}, %
  morekeywords=[5]{def-cps-fn, defanglican, defm, defquery, defun, defproc, defdist}, %
  morekeywords=[6]{cps-fn, fm, lambda, mem, query, with-primitive-procedures}, %
  morekeywords=[7]{%
    doquery, %
    conditional, %
    collect-by, equalize, exec, infer, log-marginal, print-predicts, %
    rand, rand-int, rand-nth, rand-roulette, stripdown, warmup, %
    ->CRP-process, ->DP-process, ->GP-process, %
    ->bernoulli-distribution, ->beta-distribution, ->binomial-distribution, %
    ->categorical-crp-distribution, ->categorical-distribution, %
    ->categorical-dp-distribution, ->chi-squared-distribution, %
    ->dirichlet-distribution, ->discrete-distribution, %
    ->exponential-distribution, ->flip-distribution, ->gamma-distribution, %
    ->mvn-distribution, ->normal-distribution, ->poisson-distribution, %
    ->sample, ->observe, sample*, observe*, %
    ->uniform-continuous-distribution, ->uniform-discrete-distribution, %
    ->wishart-distribution, CRP, DP, GP, abs, absorb, acos, asin, atan, %
    bernoulli, beta, binomial, categorical, categorical-crp, categorical-dp, %
    cbrt, ceil, chi-squared, cos, cosh, cov, dirichlet, discrete, exp, %
    exponential, flip, floor, gamma, gen-matrix, log, log-gamma-fn, %
    log-mv-gamma-fn, log-sum-exp, map->CRP-process, map->DP-process, %
    map->GP-process, map->bernoulli-distribution, map->beta-distribution, %
    map->binomial-distribution, map->categorical-crp-distribution, %
    map->categorical-distribution, map->categorical-dp-distribution, %
    map->chi-squared-distribution, map->dirichlet-distribution, %
    map->discrete-distribution, map->exponential-distribution, %
    map->flip-distribution, map->gamma-distribution, map->mvn-distribution, %
    map->normal-distribution, map->poisson-distribution, %
    map->uniform-continuous-distribution, map->uniform-discrete-distribution, %
    map->wishart-distribution, mvn, normal, poisson, pow, produce, %
    rint, round, signum, sin, sinh, sqrt, tag, tan, tanh, transform-sample, %
    uniform-continuous, uniform-discrete, wishart, %
    add-log-weight, add-predict, clear-predicts, get-log-weight, %
    get-mem, get-predicts, in-mem?, set-log-weight, set-mem, %
  }, %
  morekeywords=[8]{factor, observe, predict, retrieve, sample, store}, %
  sensitive, %
  alsoletter={:,-,+,*,?,/,!,>,<}, %
  morecomment=[l];, %
  morestring=[b]", %
  keywordsprefix=:, %
}[keywords,comments,strings]
\em\color{gray},
\newcommand{\commentout}[1]{}
\newcommand{\shortversion}[1]{#1}
\newcommand{\longversion}[1]{}
\newcommand{\hy}[1]{
\smallskip\noindent\fbox{\begin{minipage}{.97\linewidth}{\bf HY:}
{\rm #1}\end{minipage}}}
\newcommand{\sss}[1]{
\smallskip\noindent\fbox{\begin{minipage}{.97\linewidth}{\bf SS:}
{\rm #1}\end{minipage}}}
\newcommand{\ok}[1]{
\smallskip\noindent\fbox{\begin{minipage}{.97\linewidth}{\bf OK:}
{\rm #1}\end{minipage}}}
\newcommand{\ch}[1]{
\smallskip\noindent\fbox{\begin{minipage}{.97\linewidth}{\bf CH:}
{\rm #1}\end{minipage}}}
\renewcommand{\hy}[1]{}
\renewcommand{\sss}[1]{}
\renewcommand{\ok}[1]{}
\renewcommand{\ch}[1]{}
\newcommand{\XI}{X^\omega}
\newcommand{\Xn}{X^n}
\newcommand{\RR}{\mathbb{R}}
\newcommand{\NN}{\mathbb{N}}
\newcommand{\sigalg}[1]{\Sigma_{#1}}
\newcommand{\qbtosig}[1]{\Sigma_{#1}}
\newcommand{\qb}[1]{M_{#1}}
\newcommand{\sigtoqb}[1]{M_{#1}}
\newcommand{\inv}[1]{{#1}^{\textup{-1}}}
\newcommand{\id}[1]{\mathrm{id}_{#1}}
\newcommand{\iidn}{\mathrm{iid}_n}
\newcommand{\iotan}{\iota_n}
\newcommand{\inviotan}{\inv{\iota}_n}
\newcommand{\op}[1]{{#1}^{\mathrm{op}}}
\newcommand{\defeq}{\stackrel {\textup{def}}=}
\newcommand{\QBS}{\mathbf{QBS}}
\newcommand{\Meas}{\mathbf{Meas}}
\newcommand{\Set}{\mathbf{Set}}
\newcommand{\SMeas}{\mathbf{SMeas}}
\newcommand{\dd}{\mathrm{d}}
\newcommand{\denot}[1]{\llbracket#1\rrbracket}
\newcommand{\Pmonad}{P}
\newcommand{\Giry}{G}
\newcommand{\curry}{\mathsf{curry}}
\newcommand{\uncurry}{\mathsf{uncurry}}
\newcommand{\bindsymbol}{\scalebox{0.5}[1]{$>\!>=$}}
\newcommand{\bind}[2]{#1\mathrel{\bindsymbol} #2}
\newcommand{\bindname}{(\bindsymbol)}
\newcommand{\bindGsymbol}{\scalebox{0.5}[1]{$>\!>=$}_G}
\newcommand{\bindG}[2]{#1\mathrel{\bindGsymbol} #2}
\newcommand{\munit}{\eta}
\newtheorem{definition}{Definition}
\newtheorem{proposition}[definition]{Proposition}
\newtheorem{theorem}[definition]{Theorem}
\newtheorem{lemma}[definition]{Lemma}
\newtheorem{example}[definition]{Example}
\begin{document}
%
\title{A Convenient Category for \\ Higher-Order Probability Theory}

\author{
\IEEEauthorblockN{Chris Heunen}
\IEEEauthorblockA{University of Edinburgh, UK}
\and
\IEEEauthorblockN{Ohad Kammar}
\IEEEauthorblockA{University of Oxford, UK}
\and
\IEEEauthorblockN{Sam Staton}
\IEEEauthorblockA{University of Oxford, UK}
\and
\IEEEauthorblockN{Hongseok Yang}
\IEEEauthorblockA{University of Oxford, UK}
}


%

\IEEEoverridecommandlockouts
\IEEEpubid{\makebox[\columnwidth]{978-1-5090-3018-7/17/\$31.00~
\copyright2017 IEEE \hfill} \hspace{\columnsep}\makebox[\columnwidth]{ }}

\maketitle

\begin{abstract}
Higher-order probabilistic programming languages allow programmers to write sophisticated models in machine learning and statistics in a succinct and structured way, 
but step outside the standard measure-theoretic formalization of probability theory.
Programs may use both higher-order functions and continuous distributions, or even define a probability distribution on functions.
But standard probability theory does not handle higher-order functions well: the category of measurable spaces is not cartesian closed.

Here we introduce quasi-Borel spaces. 
We show that these spaces:
form a new formalization of probability theory replacing measurable spaces;
form a cartesian closed category and so support higher-order functions;
form a well-pointed category and so support good proof principles for equational reasoning;
and support continuous probability distributions.
We demonstrate the use of quasi-Borel spaces for higher-order functions and probability by: showing that a well-known construction of probability theory involving random functions gains a cleaner expression; and generalizing de Finetti's theorem, that is a crucial theorem in probability theory, to quasi-Borel spaces.

\end{abstract}

\IEEEpeerreviewmaketitle
\allowdisplaybreaks

\section{Introduction}\label{sec:intro}
To express probabilistic models in machine learning and statistics in a succinct and structured way, it pays to use \emph{higher-order} programming languages, such as Church~\cite{goodman_uai_2008}, Venture~\cite{Mansinghka-venture14}, or Anglican~\cite{wood-aistats-2014}. These languages support advanced features from both
programming language theory and probability theory, while
providing generic inference algorithms for answering probabilistic queries, such as marginalization and posterior computation, for all models written in the language. As a result, the programmer can succinctly express a sophisticated probabilistic model and explore its properties while avoiding the nontrivial busywork of designing a custom inference algorithm.
%
%

This exciting development comes at a foundational price.
Programs in these languages may combine higher-order functions and continuous
distributions, or even define a probability distribution on functions.
But the standard measure-theoretic formalization of probability theory does not handle higher-order functions well, as the category of measurable spaces is not cartesian closed~\cite{aumann:functionspaces}. 
For instance, the Anglican implementation of Bayesian linear regression in Figure~\ref{fig:linearregression} goes beyond the standard measure-theoretic foundation of probability theory, as it defines a probability distribution on functions $\RR \to \RR$. 

\makeatletter
\lst@Key{countblanklines}{true}[t]%
    {\lstKV@SetIf{#1}\lst@ifcountblanklines}

\lst@AddToHook{OnEmptyLine}{%
    \lst@ifnumberblanklines\else%
       \lst@ifcountblanklines\else%
         \advance\c@lstnumber-\@ne\relax%
       \fi%
    \fi}
\makeatother
\lstset{frame=single, 
 xleftmargin=8.25mm, framexleftmargin=7mm, numbers=left, numberblanklines=false, linewidth=.48\textwidth
}
\begin{figure}
%
\begin{lstlisting}[style=default,countblanklines=false, basicstyle=\ttfamily\small,escapechar=\|]
(defquery Bayesian-linear-regression |%
\vskip-.6\baselineskip|

  (let [f (let [s (sample (normal 0.0 3.0))
                  b (sample (normal 0.0 3.0))]
              (fn [x] (+ (* s x) b)))] |%
\vskip-.6\baselineskip|

    (observe (normal (f 1.0) 0.5) 2.5)        
    (observe (normal (f 2.0) 0.5) 3.8)
    (observe (normal (f 3.0) 0.5) 4.5)
    (observe (normal (f 4.0) 0.5) 6.2)
    (observe (normal (f 5.0) 0.5) 8.0)  |%
\vskip-.6\baselineskip|

    (predict :f f)))
\end{lstlisting}
\fbox{\includegraphics[width=.97\linewidth]{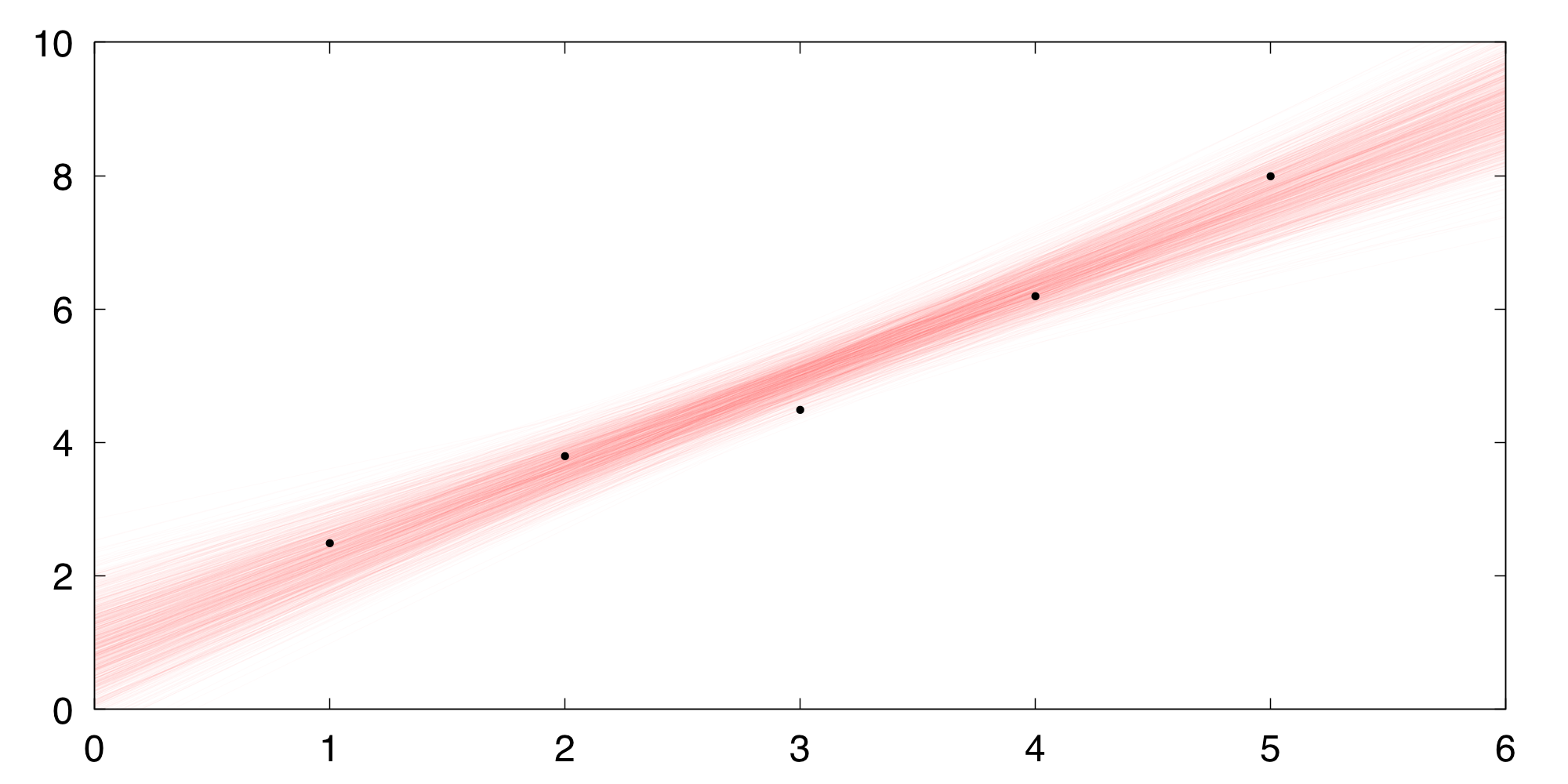}}
\caption{Bayesian linear regression in Anglican. The program defines a probability
distribution on functions $\RR \to \RR$. It first samples
a random linear function \texttt{f} by randomly selecting
slope \texttt{s} and intercept \texttt{b}. It then adjusts the probability distribution of the function
to better describe five observations $(1.0,2.5)$, $(2.0,3.8)$, $(3.0,4.5)$, $(4.0,6.2)$
and $(5.0,8.0)$ by posterior computation.
In the graph, 
each line has been sampled from the posterior distribution over linear functions.}
\label{fig:linearregression}
\end{figure}

We introduce a new formalization of probability theory that accommodates higher-order functions. The main notion replacing a measurable space is a \emph{quasi-Borel space}: a set~$X$ equipped with a collection of functions $\qb X \subseteq {[\RR \to X]}$ satisfying
certain conditions (Def.~\ref{def:qbs}). Intuitively, $\qb X$ is the set of random variables of type $X$.
Here $\RR$ means that the randomness of random variables in $\qb X$
comes from (a probability distribution on) $\RR$, one of the best behaving measurable spaces. 
Thus the primitive notion shifts from measurable subset to random variable, which is traditionally a derived notion.
For related ideas see \S\ref{sec:related}.

Quasi-Borel spaces have good properties and structure.
\begin{itemize}
\item The category of quasi-Borel spaces is \emph{well-pointed}, since a morphism is just a structure-preserving function (\S\ref{sec:quasiborel}). (This is in contrast to~\cite[\S8]{statonyangheunenkammarwood:higherorder}).

\item The category of quasi-Borel spaces is cartesian closed (\S\ref{sec:structure}), so that it becomes a setting to study probability distributions on \emph{higher-order} functions. 

\item There is a natural notion of probability measure on quasi-Borel spaces (Def.~\ref{def:probabilitymeasure}).
The space of all probability measures is again a 
quasi-Borel space, and forms the basis for a commutative \emph{monad} on the category of quasi-Borel spaces (\S\ref{sec:giry}). Thus quasi-Borel spaces form semantics for a probabilistic programming language in the monadic style~\cite{moggi-monads}.
\end{itemize}
We also illustrate the use of quasi-Borel spaces.
\begin{itemize}
\item \emph{Bayesian regression} (\S\ref{sec:example}). 
  Quasi-Borel spaces are a natural setting for understanding programs such as the one in Figure~\ref{fig:linearregression}:
  the prior (Lines 2--4) defines a probability distribution over functions~\lstinline|f|, i.e.\ a measure on $\RR^\RR$, and the posterior (illustrated in the graph), is again a probability measure on $\RR^\RR$, conditioned by the observations (Lines 5--9). 
\item \emph{Randomization} (\S\ref{sec:functions}). 
A key idea of categorical logic is that $\forall\exists$ statements should become statements
about quotients of objects. 
The structure of quasi-Borel spaces allows us to rephrase a crucial randomization lemma in this way. 
Classically, it says that every probability kernel arises from a random function. 
In the setting of quasi-Borel spaces, it says that
the space of probability kernels $P(\RR)^X$ is a quotient of the space of random functions, $P(\RR^X)$ (Theorem~\ref{theorem:random-quotient}).
Notice that the higher-order structure of quasi-Borel spaces allows us to succinctly state this result.
\item \emph{De Finetti's theorem} (\S\ref{sec:definetti}). 
Probability theorists often encounter problems when working with arbitrary probability measures on arbitrary measurable spaces.
Quasi-Borel spaces allow us to better manage the source of randomness. 
For example, de Finetti's theorem is a foundational result in Bayesian statistics which says that every exchangeable random sequence can be generated
by randomly mixing multiple independent and identically distributed sequences.
The theorem is known to hold for standard Borel spaces~\cite{deFinetti37} 
or measurable spaces that arise from good topologies~\cite{HewittSavage55}, but not for arbitrary measurable spaces~\cite{Dubins1979}. We show that it holds for all quasi-Borel spaces (Theorem~\ref{thm:deFinetti-qbs}). 
\end{itemize}
All of this is evidence that quasi-Borel spaces form a convenient category for higher-order probability theory.



\section{Preliminaries on probability measures and measurable spaces}\label{sec:prelims}
\begin{definition}\label{def:borel}
  The \emph{Borel sets} form the least collection $\Sigma_\RR$ of subsets of $\RR$ that
  satisfies the following properties:
  \begin{itemize}
   \item intervals $(a,b)$ are Borel sets;
   \item complements of Borel sets are Borel;
   \item countable unions of Borel sets are Borel.
  \end{itemize}
\end{definition}

The Borel sets play a crucial role in probability theory because of the tight connection
between the notion of probability measure and the axiomatization of Borel sets. 

\begin{definition}\label{def:borelprob}
  A \emph{probability measure} on $\RR$ is a function $\mu\colon \sigalg\RR\to [0,1]$ satisfying $\mu(\RR)=1$ and $\mu(\biguplus S_i)=\sum \mu(S_i)$ for any countable sequence of disjoint Borel sets $S_i$.
\end{definition}

The natural generalization gives measurable spaces.

\begin{definition}\label{def:meas-space}
  A \emph{$\sigma$-algebra} on a set $X$ is a nonempty family of subsets of $X$ that is closed under complements and countable unions. A \emph{measurable space} is a pair $(X,\sigalg X)$ of a set $X$ and a $\sigma$-algebra $\sigalg X$ on it. A probability measure on a measurable space $X$ is a function $\mu\colon \sigalg X\to [0,1]$ satisfying $\mu(X)=1$ and $\mu(\biguplus S_i)=\sum \mu(S_i)$ for any countable sequence of disjoint sets $S_i\in\sigalg X$.
\end{definition}

The Borel sets of the reals form a leading example of a $\sigma$-algebra.
Other important examples are countable sets with their \emph{discrete $\sigma$-algebra}, which contains all subsets.
We can characterize these spaces as standard Borel spaces, but first introduce the appropriate structure-preserving maps.

\begin{definition}\label{def:measurablefunction}
  Let $(X,\sigalg X)$ and $(Y,\sigalg Y)$ be measurable spaces. 
  A \emph{measurable function} $f\colon X\to Y$ is a function such that $\inv{f}(U)\in\sigalg X$ when $U\in\sigalg Y$. 
\end{definition}

Thus a measurable function $f\colon X\to Y$ lets us \emph{push-forward} a probability measure $\mu$ on $X$ to a probability measure $f_*\mu$ on $Y$ by $(f_*\mu)(U)=\mu(\inv{f}(U))$.  Measurable spaces and measurable functions form a category $\Meas$.

Real-valued measurable functions $f \colon X \to \RR$ can be integrated with respect to a probability measure $\mu$ on $(X,\sigalg X)$. The \emph{integral} of a nonnegative function $f$ is
\[
\int_X f \,\dd\mu \defeq \sup_{\{U_i\}} \sum_i \left(\mu(U_i) \cdot \inf_{x \in U_i} f(x)\right)\text,
\]
where $\{U_i\}$ ranges over finite partitions of $X$ into measurable subsets.
When $f$ may be negative, its integral is
\[
        \int_X f\,\dd\mu \defeq \left(\int_X \max(0,f)\,\dd\mu\right) - \left(\int_X \max(0,-f)\,\dd\mu\right)
\]
when those two integrals exist.
When it is convenient to make the integrated variable explicit, we write $\int_{x \in U} f(x)\,\dd\mu$ for $\int_X (\lambda x.\,f(x) \cdot [x \in U])\,\dd\mu$, where $U \in \sigalg X$ is a measurable subset and $[\varphi]$ has the value $1$ if $\varphi$ holds
and $0$ otherwise.

\subsection{Standard Borel spaces}
\begin{proposition}[e.g.~\cite{kallenberg}, App.~A1]\label{prop:char-sbs}
  For a measurable space $(X,\sigalg X)$ the following are equivalent:
  \begin{itemize}
   \item $(X,\sigalg X)$ is a retract of $(\RR,\sigalg \RR)$, that is, there exist measurable \smash{$X\xrightarrow f \RR\xrightarrow g X$} such that $g \circ f=\id X$;
   \item $(X,\sigalg X)$ is either measurably isomorphic to $(\RR,\sigalg \RR)$ or countable and discrete;
   \item $X$ has a complete metric with a countable dense subset and $\sigalg X$ is the least $\sigma$-algebra containing all open sets.
  \end{itemize}
\end{proposition}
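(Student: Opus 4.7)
The plan is to prove the cycle $(3) \Rightarrow (2) \Rightarrow (1) \Rightarrow (3)$. For $(3) \Rightarrow (2)$, I would invoke Kuratowski's classical isomorphism theorem, which asserts that any uncountable Polish space with its Borel $\sigma$-algebra is measurably isomorphic to $(\RR, \sigalg\RR)$. A countable Polish space is $T_1$, so every singleton is closed and hence Borel, forcing $\sigalg X$ to be the discrete $\sigma$-algebra.

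For $(2) \Rightarrow (1)$, the isomorphism case is immediate since the isomorphism and its inverse constitute a retraction. For the countable discrete case, enumerate $X$ as $\{x_n\}_{n \in I}$ with $I \subseteq \NN$, pick any base point $x_0 \in X$, and set $f(x_n) = n$ and $g(r) = x_n$ if $r = n \in I$, else $g(r) = x_0$. Since $X$ is discrete, $f$ is automatically measurable; each fibre $\inv g(\{x_n\})$ is either a singleton or the complement of a countable subset of $\RR$, hence Borel, so $g$ is measurable and $g \circ f = \id X$ by construction.

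The substantive direction is $(1) \Rightarrow (3)$. From a retraction $g \circ f = \id X$ with $f$ and $g$ measurable, $f$ is injective and every $S \in \sigalg X$ satisfies $S = \inv f(\inv g(S))$, so $\sigalg X = \inv f(\sigalg\RR)$ and $f$ exhibits $X$ as measurably isomorphic to $(f(X), \sigalg\RR|_{f(X)})$, where the latter is the trace $\sigma$-algebra. The crucial observation is that
\[
  f(X) = \{r \in \RR : f(g(r)) = r\}
\]
is the preimage of the closed diagonal $\Delta \subseteq \RR^2$ under the measurable map $r \mapsto (f(g(r)), r)$, and therefore a Borel subset of $\RR$. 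The final step invokes the standard result from descriptive set theory that every Borel subset of a Polish space carries a compatible complete separable metric whose Borel $\sigma$-algebra coincides with the trace $\sigma$-algebra, giving $(3)$.

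The main obstacle lies in this last invocation: for a Borel set $B \subseteq \RR$, the trace $\sigma$-algebra is generally \emph{not} generated by the subspace topology (for example, $\mathbb{Q}$ with the subspace topology is not Polish), so one must refine the topology on $B$ while preserving its Borel structure. This refinement is the nontrivial content borrowed from descriptive set theory (treated in Kallenberg's appendix A1); everything else is direct construction and routine bookkeeping about how $f$ transports $\sigma$-algebras.
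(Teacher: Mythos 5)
Your proof is correct: the cycle $(3)\Rightarrow(2)\Rightarrow(1)\Rightarrow(3)$ goes through, the Borel-ness of $f(X)$ via the diagonal trick $f(X)=\{r : f(g(r))=r\}$ is the right key observation, and the two classical ingredients you invoke (Kuratowski's Borel isomorphism theorem and the fact that a Borel subset of a Polish space can be re-topologized as a Polish space with the same trace $\sigma$-algebra) are precisely what the paper's citation to Kallenberg, App.~A1 covers --- the paper gives no proof of its own beyond that citation, so your route is the standard one. The only caveat is the tacit assumption $X\neq\emptyset$ when you pick the base point $x_0$ in $(2)\Rightarrow(1)$; for $X=\emptyset$ condition (1) actually fails while (2) and (3) hold, so the proposition implicitly assumes nonemptiness anyway and this is not a defect of your argument.
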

When $(X,\sigalg X)$ satisfies any of the above conditions,  we call it
\emph{standard Borel space}. These spaces play an important role in
probability theory because they enjoy properties that do not hold for
general measurable spaces, such as the existence of conditional
probability kernels~\cite{kallenberg,Preston-Borel08}
and de Finetti's theorem for exchangeable random processes~\cite{Dubins1979}. 

Besides $\RR$, another popular uncountable standard Borel space is $(0,1)$ with
the $\sigma$-algebra ${\{U \cap (0,1) ~|~ U \in \sigalg \RR\}}$. As
the above proposition indicates, these spaces are isomorphic by, for instance,
$\lambda r.\,\frac1{(1+e^{-r})} : \RR \to (0,1)$.

\subsection{Failure of cartesian closure}
\begin{proposition}[Aumann, \cite{aumann:functionspaces}]\label{prop:not-ccc}
  The category $\Meas$ is not cartesian closed: there is no space of functions $\RR\to\RR$. 
\end{proposition}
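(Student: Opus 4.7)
The plan is to argue by contradiction. Suppose there is a measurable space $E = (F, \Sigma_E)$ that serves as an exponential $\RR^\RR$ in $\Meas$. Applying the exponential adjunction $\Meas(X \times \RR, \RR) \cong \Meas(X, E)$ with $X$ the terminal object shows that the underlying set $F$ coincides with the set of measurable functions $\RR \to \RR$, and the counit $\mathrm{ev}\colon E \times \RR \to \RR$, $(f,r) \mapsto f(r)$, is jointly measurable with respect to the product $\sigma$-algebra $\Sigma_E \otimes \Sigma_\RR$. In particular, the Borel preimage
\[
B \defeq \mathrm{ev}^{\textup{-1}}((0,\infty)) = \{(f,r) \in F \times \RR : f(r) > 0\}
\]
belongs to $\Sigma_E \otimes \Sigma_\RR$.

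I would then invoke the standard countable-generation lemma for product $\sigma$-algebras: every element of $\Sigma_1 \otimes \Sigma_2$ lies in $\Sigma_1' \otimes \Sigma_2'$ for countably generated sub-$\sigma$-algebras $\Sigma_i' \subseteq \Sigma_i$. Since $\Sigma_\RR$ is itself countably generated by the rational open intervals, this produces a countably generated $\Sigma' \subseteq \Sigma_E$ with $B \in \Sigma' \otimes \Sigma_\RR$. A countably generated $\sigma$-algebra partitions its underlying set into atoms, and any $\Sigma' \otimes \Sigma_\RR$-measurable set has the property that, for each $r$, its fiber in $F$ is a union of $\Sigma'$-atoms. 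Consequently, any two functions $f_1, f_2 \in F$ in the same $\Sigma'$-atom must satisfy $f_1(r) > 0 \iff f_2(r) > 0$ for \emph{every} $r \in \RR$.

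The contradiction is obtained by producing two distinct $f_1, f_2 \in F$ in the same $\Sigma'$-atom whose signs disagree at some $r^* \in \RR$. The idea is that each generator $A_n$ of $\Sigma'$, being an element of $\Sigma_E$, must be compatible with currying: for every jointly measurable $g\colon Y \times \RR \to \RR$, the transpose $\hat g\colon Y \to E$ satisfies $\hat g^{\textup{-1}}(A_n) \in \Sigma_Y$. Using this compatibility against carefully chosen parametrised families, one concludes that $A_n$ depends on $f$ only through its values on some countable coordinate set $S_n \subseteq \RR$. Pooling gives a countable $S = \bigcup_n S_n$ that determines every $\Sigma'$-atom. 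Picking $r^* \notin S$ and setting $f_1 \equiv 0$ and $f_2 = \chi_{\{r^*\}}$ (both Borel measurable, and agreeing on $S$) produces two distinct functions in the same $\Sigma'$-atom with opposite signs at $r^*$, contradicting the conclusion of the previous paragraph.

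The main obstacle is justifying the ``countable-coordinate'' property of the generators $A_n$, which is the technical heart of Aumann's original argument. An alternative route bypasses it by a direct cardinality comparison between $\Meas(\RR, E)$ and $\Meas(\RR \times \RR, \RR)$: the exponential adjunction forces these to be in bijection, but combining the countable-generation analysis with explicit parametrised families of Borel indicator functions yields strictly more transposes than can be accommodated by any single $\Sigma_E$. Either route relies essentially on the uncountability of $\RR$ together with the countable-generation constraint on the product $\sigma$-algebra, which is exactly the mismatch that quasi-Borel spaces are designed to circumvent.
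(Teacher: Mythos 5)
Your reduction to the non-measurability of the evaluation map, the passage to a countably generated $\Sigma'\subseteq\Sigma_E$ with $B\in\Sigma'\otimes\sigalg\RR$, and the observation that functions in the same $\Sigma'$-atom have identical positivity sets are all sound, and they match the opening of Aumann's argument (the paper itself offers no proof: it cites Aumann and records exactly the sharper statement that $\varepsilon$ is non-measurable for \emph{every} $\sigma$-algebra on $\Meas(\RR,\RR)$). The gap is the step you yourself flag as the technical heart, and it cannot be repaired the way you propose. Compatibility with currying does \emph{not} force a set $A\in\Sigma_E$ to depend on countably many coordinates: take $A=\{f : \lambda(\{r\in[0,1] : f(r)>0\})>\tfrac12\}$ with $\lambda$ Lebesgue measure. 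For every jointly measurable $g\colon Y\times\RR\to\RR$, Tonelli makes $y\mapsto\lambda(\{r\in[0,1]:g(y,r)>0\})$ measurable, so $\hat g^{-1}(A)\in\Sigma_Y$; yet for any countable $S\subseteq\RR$ the functions $f_1\equiv 0$ and $f_2=\chi_{[0,1]\setminus S}$ agree on $S$ while $f_1\notin A$, $f_2\in A$. So nothing prevents such sets from occurring among your generators $A_n$, and your construction of two functions in one atom with disagreeing signs collapses. The fallback cardinality route also cannot work: the measurable functions $\RR\to\RR$, the Borel subsets of $\RR$, the codes $\{0,1\}^\NN$, and $\Meas(\RR\times\RR,\RR)$ all have cardinality $2^{\aleph_0}$, so no counting argument distinguishes them.

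What actually closes the argument is a diagonalization, using only the joint measurability of $\varepsilon$. With $\Sigma'=\sigma(A_1,A_2,\dots)$ and the code map $\varphi\colon F\to\{0,1\}^\NN$, $\varphi(f)=([f\in A_n])_n$, one has $\Sigma'\otimes\sigalg\RR=(\varphi\times\mathrm{id})^{-1}\big(\sigalg{\{0,1\}^\NN}\otimes\sigalg\RR\big)$, so $B=(\varphi\times\mathrm{id})^{-1}(\tilde D)$ for a single Borel set $\tilde D\subseteq\{0,1\}^\NN\times\RR$, i.e.\ $f(r)>0$ iff $(\varphi(f),r)\in\tilde D$. Applying this to the indicators $\chi_C$ of Borel sets $C\subseteq\RR$ shows that \emph{every} Borel subset of $\RR$ is a section of $\tilde D$. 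This is impossible: there is no Borel set universal for the Borel subsets of $\RR$ (the usual self-application diagonal, or equivalently the unboundedness of Borel rank versus the fixed rank of $\tilde D$). So the contradiction is descriptive-set-theoretic, not a countable-coordinate or cardinality phenomenon; if you replace your third and fourth paragraphs with this universal-set argument, the proof goes through and in fact establishes the paper's stronger formulation, since it never uses the currying half of the universal property.
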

Specifically, the evaluation function 
\[\varepsilon\colon \Meas(\RR,\RR)\times \RR\to \RR
\qquad\text{with}\qquad
\varepsilon(f,r)=f(r)\]
is never measurable 
$(\Meas(\RR,\RR)\times \RR,\Sigma\otimes \sigalg \RR)\to (\RR,\sigalg \RR)$
regardless of the choice of $\sigma$-algebra $\Sigma$ 
on $\Meas(\RR,\RR)$.
Here, $\Sigma\otimes \sigalg \RR$ is the product $\sigma$-algebra,
generated by rectangles ${(U\times V)}$ for $U\in\Sigma$ and $V\in\sigalg \RR$. 



\section{Quasi-Borel spaces}\label{sec:quasiborel}
The typical situation in probability theory is that there is a fixed measurable space $(\Omega,\sigalg\Omega)$, called the \emph{sample space}, from which all randomness originates, and that observations are made in terms of random variables, which are pairs $(X,f)$ of a measurable space of observations $(X,\sigalg X)$ and a measurable function $f\colon \Omega\to X$.
From this perspective, the notion of measurable function is more important
than the notion of measurable space. In some ways, the
$\sigma$-algebra $\sigalg X$ is only used as an intermediary to restrain the class of measurable functions $\Omega\to X$.

We now use this idea as a basis for our new notion of space.
In doing so, we assume that our sample space $\Omega$ is the real numbers,
which makes probabilities behave well.
\begin{definition}\label{def:qbs}
A \emph{quasi-Borel space}
is a set $X$ together with a set $\qb X \subseteq [\RR\to X]$ satisfying:
\begin{itemize}
\item $\alpha \circ f \in \qb X$ if $\alpha \in \qb X$ and $f\colon \RR\to\RR$ is measurable;
\item $\alpha \in \qb X$ if $\alpha\colon \RR\to X$ is constant;
\item if ${\RR=\biguplus_{i\in \NN}S_i}$, with each set $S_i$ Borel,
  and ${\alpha_1,\alpha_2,\ldots\in \qb X}$, then
  ${\beta}$ is in ${\qb X}$, where ${\beta(r) = \alpha_i(r)}$ for $r\in S_i$.
\end{itemize}
\end{definition}
The name `quasi-Borel space' is motivated firstly by analogy to quasi-topological spaces (see \S\ref{sec:related}),
and secondly in recognition of the intimate connection to the standard Borel space $\RR$ (see also Prop.~\ref{prop:adjunction}(2)).

\begin{example}\normalfont
  For every measurable space $(X,\sigalg X)$, let $\sigtoqb {\sigalg X}$ be the set of measurable functions $\RR\to X$.
  Thus $\sigtoqb {\sigalg X}$ is the set of $X$-valued random variables.
  In particular: $\RR$ itself can be considered as a quasi-Borel space, with $\qb \RR$ the set of measurable functions $\RR\to\RR$;  the two-element discrete space $2$ can be considered as a quasi-Borel space, with $\qb 2$ the set of measurable functions $\RR\to 2$, which are exactly the characteristic functions of the Borel sets~(Def.~\ref{def:borel}).
\end{example}

Before we continue, we remark that the notion of quasi-Borel space
is invariant under replacing $\RR$ with a different uncountable standard Borel space.
\begin{proposition}
  For any measurable space $(\Omega,\sigalg\Omega)$, any measurable isomorphism $\iota \colon \RR \to \Omega$,
  any set $X$, and any set $N$ of functions $\Omega\to X$, the pair
  $(X,\{\alpha \circ \iota~|~\alpha\in N\})$
  is a quasi-Borel space if and only if:
  \begin{itemize}
   \item $\alpha \circ f \in N$ if $\alpha \in N$ and $f\colon \Omega\to\Omega$ is measurable;
   \item $\alpha \in N$ if $\alpha\colon \Omega\to X$ is constant;
\item if $\Omega=\biguplus_{i\in \NN}S_i$, with each set $S_i \in \sigalg \Omega$,
  and $\alpha_1,\alpha_2,\ldots\in N$, then
  $\beta$ is in $N$, where $\beta(r)=\alpha_i(r)$ if $r\in S_i$.
  \end{itemize}
\end{proposition}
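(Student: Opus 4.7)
The plan is to exploit the measurable isomorphism $\iota\colon \RR \to \Omega$ (with measurable inverse $\inv{\iota}$) as a translator between the two axiom systems. Since $\iota$ is a bijection, the map $\alpha \mapsto \alpha \circ \iota$ is a bijection between $N$ and $\{\alpha \circ \iota \mid \alpha \in N\}$, and I would show that each of the three axioms transfers across this bijection.

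The translation rests on three routine observations, each an ``iff'' coming from the fact that $\iota$ and $\inv{\iota}$ are both measurable. First, a map $f\colon \Omega \to \Omega$ is measurable if and only if $\inv{\iota} \circ f \circ \iota\colon \RR \to \RR$ is measurable. Second, $\alpha\colon \Omega \to X$ is constant if and only if $\alpha \circ \iota\colon \RR \to X$ is constant. Third, a countable family $\{S_i\}_{i\in\NN}$ is a measurable partition of $\Omega$ if and only if $\{\inv{\iota}(S_i)\}_{i\in\NN}$ is a Borel partition of $\RR$.

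For the forward direction, assume $(X, \{\alpha \circ \iota \mid \alpha \in N\})$ is a quasi-Borel space. Given $\alpha \in N$ and measurable $f\colon \Omega \to \Omega$, apply the first quasi-Borel axiom to $\alpha \circ \iota$ and $\inv{\iota}\circ f \circ \iota$ to conclude that $(\alpha\circ f)\circ \iota$ lies in the set; then cancel $\iota$ on the right, using that $\iota$ is surjective, to conclude $\alpha \circ f \in N$. For the second axiom, note that if $\alpha\colon \Omega\to X$ is constant then so is $\alpha\circ \iota$, hence $\alpha\circ \iota = \gamma\circ \iota$ for some $\gamma\in N$, and by cancellation $\gamma = \alpha$. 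For the third axiom, start with a measurable partition $\Omega = \biguplus S_i$ and $\alpha_i \in N$; translate to the Borel partition $\RR = \biguplus \inv{\iota}(S_i)$ and the family $\alpha_i \circ \iota$, apply axiom three in $(X, \{\alpha \circ \iota \mid \alpha\in N\})$, and recognize the resulting function as $\beta \circ \iota$ where $\beta$ is the claimed piecewise function; cancel $\iota$ again.

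The reverse direction is entirely symmetric, conjugating by $\inv{\iota}$ instead: a measurable $f\colon \RR\to\RR$ becomes $\iota \circ f \circ \inv{\iota}\colon \Omega\to \Omega$, a Borel partition $\{T_i\}$ of $\RR$ becomes a measurable partition $\{\iota(T_i)\}$ of $\Omega$, and constants again transfer trivially. I do not expect any genuine obstacle: the proposition essentially says that the definition of quasi-Borel space is natural with respect to measurable isomorphisms of the sample space, and every step is a routine conjugation. The only point requiring any care is to invoke bijectivity of $\iota$ at the right moment in order to pass from an identity of the form $\gamma \circ \iota = \delta \circ \iota$ back to $\gamma = \delta$ as functions on $\Omega$.
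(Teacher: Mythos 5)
Your argument is correct, and it is exactly the routine conjugation-by-$\iota$ argument the paper has in mind (the paper states this proposition without proof, treating it as immediate from the fact that $\iota$ and $\inv{\iota}$ are both measurable and that $\alpha\mapsto\alpha\circ\iota$ is a bijection onto the induced structure). Your attention to cancelling $\iota$ via surjectivity is precisely the only point of care, so nothing is missing.
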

By Prop.~\ref{prop:char-sbs}, the measurable spaces isomorphic to $\RR$ are the uncountable standard Borel spaces.
Note that the choice of isomorphism~$\iota$ is not important: it does not appear in the three conditions.

Probability theory typically considers a basic probability measure on the sample
space $\Omega$. Each random variable, that is each measurable function
$\Omega\to X$, then induces a probability measure on $X$ by pushing forward the basic measure. Quasi-Borel spaces take this idea as an axiomatic notion of probability measure.
\begin{definition}\label{def:probabilitymeasure}
  A \emph{probability measure} on a quasi-Borel space $(X,\qb X)$ is a pair $(\alpha,\mu)$ of $\alpha \in \qb X$ and a probability measure $\mu$ on $\RR$ (as in Def.~\ref{def:borelprob}).
\end{definition}

\subsection{Morphisms and integration}

\begin{definition}\label{def:morphism}
  A \emph{morphism} of quasi-Borel spaces $(X,\qb X)\to (Y,\qb Y)$ is a function $f \colon X \to Y$ such that $f \circ \alpha \in \qb Y$ if $\alpha\in \qb X$. Write $\QBS\big((X,\qb X),(Y,\qb Y)\big)$ for the set of morphisms from $(X,\qb X)$ to $(Y,\qb Y)$.
\end{definition}

In particular, elements of $M_X$ are precisely morphisms $(\RR,\qb\RR)\to (X,\qb X)$, so $\qb X = \QBS\big((\RR,\qb \RR),(X,\qb X)\big)$.

Morphisms compose as functions, and identity functions are morphisms,
so quasi-Borel spaces form a category $\QBS$.

\begin{example}\normalfont
  There are two canonical ways to equip a set $X$ with a quasi-Borel
  space structure. The first structure $\qb X^R$ consists of all
  functions $\RR \to X$. The second structure $\qb X^L$ consists of
  all functions $\beta \colon \RR \to X$ for which there exist: a countable
  subset $I \subseteq \NN$; a measurable $f \colon \RR \to \RR$; a
  partition $\RR = \biguplus_{i \in I}S_i$ with every $S_i$
  measurable; and a sequence $(x_i)_{i \in I}$ in $X$, such that
  $\beta(r) = x_i$ whenever $f(r) \in
  S_i$. These are the right and left adjoints, respectively, to the
  forgetful functor from $\QBS$ to $\Set$.
\end{example}

Def.~\ref{def:morphism} is independent of $\RR$: the sample space may be any uncountable standard Borel space.

\begin{proposition}
  Consider a measurable space $(\Omega,\sigalg\Omega)$
  with a measurable isomorphism $\iota \colon \RR \to \Omega$.
  For $i \in {1,2}$, let $X_i$ be a set and $N_i$ a set of functions $\Omega\to X_i$ such that
  \[
    M_i = \big(X_i,\{\alpha \circ \iota~|~\alpha\in N_i\} \big)
  \]
  are quasi-Borel spaces. A function $g \colon X_1 \to X_2$ is a morphism $(X_1,M_1) \to (X_2,M_2)$ if and only if $g \circ \alpha \in N_2$ for $\alpha \in N_1$.
\end{proposition}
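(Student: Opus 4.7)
The plan is simply to unwind Def.~\ref{def:morphism} using the bijection $\iota$. Since $\iota$ is in particular a set-theoretic bijection, every function $\beta\colon\RR\to X_i$ factors uniquely as $\alpha\circ\iota$ for a function $\alpha\colon\Omega\to X_i$ (namely $\alpha=\beta\circ\inv\iota$). Therefore the set $M_i=\{\alpha\circ\iota\mid\alpha\in N_i\}$ is in bijection with $N_i$, witnessed by pre-composition with $\iota$ (respectively $\inv\iota$). The proposition should then follow by transporting the morphism condition across this bijection on both sides.

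For the forward direction, I would assume $g$ is a morphism $(X_1,M_1)\to(X_2,M_2)$ and fix an arbitrary $\alpha\in N_1$. Then $\alpha\circ\iota\in M_1$ by construction, so $g\circ\alpha\circ\iota\in M_2$, meaning there exists $\alpha'\in N_2$ with $g\circ\alpha\circ\iota=\alpha'\circ\iota$. Postcomposing with $\inv\iota$ (which exists because $\iota$ is a bijection) yields $g\circ\alpha=\alpha'\in N_2$, as required.

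For the reverse direction, I would assume $g\circ\alpha\in N_2$ whenever $\alpha\in N_1$, and pick an arbitrary $\beta\in M_1$. By definition of $M_1$, write $\beta=\alpha\circ\iota$ for some $\alpha\in N_1$. Then $g\circ\beta=(g\circ\alpha)\circ\iota$, and since $g\circ\alpha\in N_2$ by hypothesis, this exhibits $g\circ\beta$ as an element of $M_2$. Hence $g$ is a morphism.

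There is no real obstacle here: the whole content is that the map $\alpha\mapsto\alpha\circ\iota$ sets up a bijection $N_i\cong M_i$, so the two formulations of the morphism condition become literally the same after this relabelling. The one thing worth being explicit about is that $\inv\iota$ exists and may be used to recover $\alpha$ from $\beta$, which is why no measurability of $\inv\iota$ needs to be invoked in the argument (measurability of $\iota$ and $\inv\iota$ entered only in the earlier proposition ensuring $M_i$ is actually a quasi-Borel space).
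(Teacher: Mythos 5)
Your proof is correct: transporting the morphism condition along the bijection $\alpha \mapsto \alpha\circ\iota$ between $N_i$ and the structure set of $M_i$ (using only that $\iota$ is a bijection, so that $g\circ\alpha\circ\iota = \alpha'\circ\iota$ forces $g\circ\alpha = \alpha'$) is exactly the routine unwinding of Def.~\ref{def:morphism} the paper intends, which is why it states the proposition without an explicit proof. One cosmetic point: recovering $g\circ\alpha$ from $g\circ\alpha\circ\iota$ is \emph{pre}-composition with $\inv{\iota}$ (equivalently, cancelling $\iota$ on the right by its surjectivity), not postcomposition.
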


Morphisms between quasi-Borel spaces are analogous to measurable functions
between measurable spaces. The crucial properties of measurable functions
are that they work well with (probability) measures: we can push-forward these measures, and integrate over them.
Morphisms of quasi-Borel spaces also support these constructions.
\begin{itemize}
  \item\emph{Pushing forward:} if $f\colon X\to Y$ is a morphism and $(\alpha,\mu)$ is a probability measure on $X$ then $f\circ \alpha$ is by definition in $\qb Y$ and so $(f\circ\alpha,\mu)$ is a probability measure on $Y$.
  \item\emph{Integrating:} If $f\colon X\to \RR$ is a morphism of quasi-Borel spaces and $(\alpha,\mu)$ is a probability measure on $X$, the integral of $f$ with respect to $(\alpha,\mu)$ is
  \begin{equation}\label{eqn:def-qbs-integration}
    \int f\, \dd(\alpha,\mu)\defeq \int_\RR (f\circ \alpha)\,\dd\mu\text.
  \end{equation}
  So integration formally reduces to integration on $\RR$.
\end{itemize}

\subsection{Relationship to measurable spaces}

If we regard a subset $S\subseteq X$ as its characteristic function $\chi_S\colon X\to 2$, then we can regard a $\sigma$-algebra on a set $X$ as a set of characteristic functions $F_X\subseteq [X\to 2]$ satisfying certain conditions. Thus a measurable space (Def.~\ref{def:meas-space}) could equivalently be described
as a pair $(X,F_X)$ of a set $X$ and a collection $F_X\subseteq [X\to 2]$ of characteristic functions.
Moreover, from this perspective, a measurable function $f\colon (X,F_X)\to (Y,F_Y)$ is
simply a function $f\colon X\to Y$ such that $\chi\circ f \in F_X$ if $\chi\in F_Y$.
Thus quasi-Borel spaces shift the emphasis from characteristic functions $X\to 2$ to random variables $\RR\to X$.

\subsubsection{Quasi-Borel spaces as structured measurable spaces}
A subset $S\subseteq X$ is in the $\sigma$-algebra $\sigalg X$
of a measurable space $(X,\sigalg X)$ if and only if its
characteristic function $X\to 2$ is measurable.
With this in mind, we \emph{define} a measurable subset of a
quasi-Borel space $(X,\qb X)$ to be a subset $S\subseteq X$
such that the characteristic function $X\to 2$ is a morphism of quasi-Borel spaces.
\begin{proposition}
  The collection of all measurable subsets of a quasi-Borel space $(X,\qb X)$ is characterized as
  \begin{equation}\label{eqn:qb-sigma}
    \qbtosig {\qb X}\defeq \{U~|~\forall \alpha\in\qb X.\,\inv\alpha(U)\in\sigalg \RR\}
  \end{equation}
  and forms a $\sigma$-algebra.
\end{proposition}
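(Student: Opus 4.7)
The plan is to split the proposition into two claims: first that the definition of ``measurable subset'' coincides with the set displayed in \eqref{eqn:qb-sigma}, and second that the latter forms a $\sigma$-algebra. Both steps reduce to transporting properties of $\sigalg\RR$ through the ``random variable'' maps $\alpha \in \qb X$.

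For the characterization, I would unfold the definitions. By Def.~\ref{def:morphism}, the characteristic function $\chi_S \colon X \to 2$ is a morphism $(X,\qb X) \to (2,\qb 2)$ exactly when $\chi_S \circ \alpha \in \qb 2$ for every $\alpha \in \qb X$. As already observed in the running example, $\qb 2$ consists precisely of the characteristic functions of Borel subsets of $\RR$. But $\chi_S \circ \alpha$ is the characteristic function of $\inv\alpha(S) \subseteq \RR$, so membership in $\qb 2$ is equivalent to $\inv\alpha(S) \in \sigalg\RR$. Quantifying over $\alpha$ yields precisely the set on the right-hand side of \eqref{eqn:qb-sigma}, so the two definitions agree.

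For the $\sigma$-algebra property, I would verify the three closure conditions of Def.~\ref{def:meas-space} by pulling them back along an arbitrary $\alpha \in \qb X$ and using that $\sigalg\RR$ is already a $\sigma$-algebra. Non-emptiness: $\inv\alpha(X) = \RR \in \sigalg\RR$, so $X \in \qbtosig{\qb X}$. Complements: for $U \in \qbtosig{\qb X}$ one has $\inv\alpha(X \setminus U) = \RR \setminus \inv\alpha(U) \in \sigalg\RR$. Countable unions: for $U_1,U_2,\ldots \in \qbtosig{\qb X}$ one has $\inv\alpha(\bigcup_i U_i) = \bigcup_i \inv\alpha(U_i) \in \sigalg\RR$. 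Each step uses only that preimages commute with the set-theoretic boolean operations.

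Honestly, I do not anticipate a serious obstacle here: the proof is entirely mechanical once one recognizes that $\qb 2$ is the set of Borel indicators, which is the only nontrivial identification and is recorded explicitly in the example just before Def.~\ref{def:probabilitymeasure}. The only place to be slightly careful is checking that the argument is symmetric in $\alpha$, i.e.\ that the universally quantified condition in \eqref{eqn:qb-sigma} really is the same as the morphism condition for $\chi_S$, rather than some weaker or stronger variant; but the equivalence is immediate from the definition of morphism.
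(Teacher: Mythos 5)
Your proof is correct, and it is essentially the intended argument: the paper states this proposition without a written proof, and your unfolding — identifying $\qb 2$ with the Borel indicator functions so that the morphism condition on $\chi_S$ becomes exactly the condition $\forall\alpha\in\qb X.\,\inv\alpha(S)\in\sigalg\RR$, then transporting the $\sigma$-algebra axioms along preimages — is precisely the mechanical verification the authors leave to the reader. No gaps.
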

Thus we can understand a quasi-Borel space as a measurable space $(X,\sigalg X)$ equipped
with a class of measurable functions $\qb X\subseteq [\RR\to X]$ determining the $\sigma$-algebra by $\sigalg X=\qbtosig{\qb X}$
as in~\eqref{eqn:qb-sigma}.

Moreover, every morphism $(X,\qb X)\to (Y,\qb Y)$ is also a measurable
function $(X,\qbtosig {\qb X})\to (Y,\qbtosig{\qb Y})$ (but the converse does not hold
in general).

A probability measure $(\alpha,\mu)$ on a quasi-Borel space $(X,\qb X)$ induces a
probability measure $\alpha_*\mu$ on the underlying measurable space.
Integration as in~\eqref{eqn:def-qbs-integration} matches the standard
definition for measurable spaces.

\subsubsection{An adjunction embedding standard Borel spaces}
Under some circumstances morphisms of quasi-Borel spaces coincide with measurable functions.
\begin{proposition}\label{prop:adjunction}
  Let $(Y,\sigalg Y)$ be a measurable space.
  \begin{enumerate}
  \item If $(X,\qb X)$ is a quasi-Borel space, a function $X\to Y$ is a measurable function
  $(X,\qbtosig{\qb X})\to (Y,\sigalg Y)$ if and only if it is a morphism $(X,\qb X)\to (Y,\sigtoqb {\sigalg Y})$.
  \item If $(X,\sigalg X)$ is a standard Borel space, a function $X\to Y$ is a morphism $(X,\sigtoqb{\sigalg X})\to (Y,\sigtoqb {\sigalg Y})$ if and only if it is a measurable function $(X,\sigalg X)\to (Y,\sigalg Y)$.
\end{enumerate}
\end{proposition}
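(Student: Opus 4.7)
The plan is to prove both parts by unfolding the definitions carefully; the only nontrivial ingredient is the retract characterization of standard Borel spaces from Prop.~\ref{prop:char-sbs}, which is needed for Part 2.

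For Part 1, I would just chase definitions. A function $f \colon X \to Y$ is a morphism $(X,\qb X) \to (Y,\sigtoqb{\sigalg Y})$ iff $f \circ \alpha \in \sigtoqb{\sigalg Y}$ for every $\alpha \in \qb X$, i.e.\ iff $f \circ \alpha \colon \RR \to Y$ is measurable for every $\alpha \in \qb X$. Meanwhile, $f$ is a measurable function $(X,\qbtosig{\qb X}) \to (Y,\sigalg Y)$ iff $f^{-1}(U) \in \qbtosig{\qb X}$ for every $U \in \sigalg Y$, which by~\eqref{eqn:qb-sigma} means $\alpha^{-1}(f^{-1}(U)) = (f \circ \alpha)^{-1}(U) \in \sigalg \RR$ for every $\alpha \in \qb X$ and every $U \in \sigalg Y$. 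But this last condition is exactly measurability of $f \circ \alpha \colon \RR \to Y$ for every $\alpha \in \qb X$, so the two characterizations coincide.

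For Part 2, the easy direction is ($\Leftarrow$): if $f \colon (X,\sigalg X) \to (Y,\sigalg Y)$ is measurable and $\alpha \in \sigtoqb{\sigalg X}$, then $f \circ \alpha \colon \RR \to Y$ is a composition of measurable functions, hence in $\sigtoqb{\sigalg Y}$. For ($\Rightarrow$), suppose $f$ is a morphism $(X,\sigtoqb{\sigalg X}) \to (Y,\sigtoqb{\sigalg Y})$. By Prop.~\ref{prop:char-sbs}, the standard Borel space $(X,\sigalg X)$ is a retract of $(\RR,\sigalg \RR)$: there exist measurable $g \colon X \to \RR$ and $h \colon \RR \to X$ with $h \circ g = \id X$. (For the countable discrete case, $h$ can be obtained by embedding $X$ as a measurable subset of $\RR$ and defining $h$ off this subset arbitrarily; it is still measurable since the discrete $\sigma$-algebra contains every subset.) Now $h$ is measurable, so $h \in \sigtoqb{\sigalg X}$, and by hypothesis $f \circ h \in \sigtoqb{\sigalg Y}$, i.e.\ $f \circ h \colon \RR \to Y$ is measurable. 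Then
\[
f = f \circ h \circ g = (f \circ h) \circ g
\]
is a composition of measurable functions, hence measurable $(X,\sigalg X) \to (Y,\sigalg Y)$.

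The main ``obstacle'' is really just recognizing that the retract property from Prop.~\ref{prop:char-sbs} is what closes the gap between the two notions; absent standard Borelness, one only has the containment noted after~\eqref{eqn:qb-sigma} (every morphism is measurable, but not conversely). Everything else is pure unfolding of Definitions~\ref{def:qbs}, \ref{def:morphism} and equation~\eqref{eqn:qb-sigma}.
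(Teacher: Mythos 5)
Your proof is correct and follows the intended route: Part 1 is the straightforward unfolding of Definitions~\ref{def:qbs} and~\ref{def:morphism} via~\eqref{eqn:qb-sigma}, and Part 2 turns on exactly the right tool, the retract characterization of standard Borel spaces in Prop.~\ref{prop:char-sbs}, giving $f=(f\circ h)\circ g$ measurable. One small caution: the parenthetical about the countable discrete case is unnecessary (Prop.~\ref{prop:char-sbs} already supplies the measurable retraction) and its justification is backwards---$h$ must be defined \emph{measurably} (e.g.\ constantly) off the embedded copy of $X$, since having the full discrete $\sigma$-algebra on the codomain makes measurability of $h$ harder to satisfy, not easier.
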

\newcommand{\meastoqbs}{R}
\newcommand{\qbstomeas}{L}
Proposition~\ref{prop:adjunction}(1) means there is an adjunction
\[\xymatrix{\Meas\ar@/_/[rr]_\meastoqbs&&\ar@/_/[ll]_\qbstomeas^\bot \QBS}\]
where
$\qbstomeas(X,\qb X)= (X,\qbtosig{\qb X})$
and
$\meastoqbs(X,\sigalg X)= (X,\sigtoqb{\sigalg X})$.
Proposition~\ref{prop:adjunction}(2) means that the functor $\meastoqbs$ is full and faithful when restricted to standard Borel spaces. Equivalently,
$L(R(X,\sigalg X))=(X,\sigalg X)$, that is $\sigalg X = \qbtosig {\sigtoqb {\sigalg X}}$
for standard Borel spaces $(X,\sigalg X)$.


\section{Products, coproducts and function spaces}\label{sec:structure}
Quasi-Borel spaces support products, coproducts, and function spaces.
These basic constructions form the basis for interpreting simple type theory
in quasi-Borel spaces. 
\begin{proposition}[Products]\label{prop:qbsproduct}
  If $(X_i,\qb {X_i})_{i\in I}$ is a family of quasi-Borel spaces indexed by a set $I$, then $(\prod_iX_i,\qb {\Pi_i X_i})$ is a quasi-Borel space, where $\prod_iX_i$ is the set product, and 
  \[
    \qb {\Pi_iX_i}\defeq \textstyle{\Big\{f\colon \RR\to \prod_iX_i~|~\forall i.\, (\pi_i\circ f)\in\qb{X_i}\Big\}\text.}
  \]
  The projections $\prod_i X_i\to X_i$ are morphisms, and provide the structure of a categorical product in $\QBS$.
\end{proposition}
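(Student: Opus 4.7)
The plan is to verify the three closure conditions of Definition~\ref{def:qbs} for the declared set $\qb {\Pi_i X_i}$, and then check that the projections together with the obvious pairing give the categorical product.

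First I would verify each axiom pointwise through the projections $\pi_i$. For closure under precomposition with measurable $f\colon\RR\to\RR$: given $\alpha\in\qb{\Pi_iX_i}$, one has $\pi_i\circ(\alpha\circ f) = (\pi_i\circ\alpha)\circ f$, which lies in $\qb{X_i}$ by the corresponding axiom for $X_i$, so $\alpha\circ f\in\qb{\Pi_iX_i}$. Constant functions $\alpha\colon\RR\to\prod_iX_i$ yield constant $\pi_i\circ\alpha$, which is in $\qb{X_i}$ by the constant-closure axiom. For the piecewise-gluing axiom, given $\RR=\biguplus_{j\in\NN}S_j$ with each $S_j$ Borel and a sequence $\alpha_1,\alpha_2,\ldots\in\qb{\Pi_iX_i}$ with $\beta(r)=\alpha_j(r)$ for $r\in S_j$, I observe that $\pi_i\circ\beta$ is precisely the piecewise gluing of the $\pi_i\circ\alpha_j\in\qb{X_i}$ along the same partition, hence lies in $\qb{X_i}$.

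Next, each projection $\pi_i\colon\prod_iX_i\to X_i$ is a morphism in $\QBS$ by the very definition of $\qb{\Pi_iX_i}$: for $\alpha\in\qb{\Pi_iX_i}$, the composite $\pi_i\circ\alpha$ belongs to $\qb{X_i}$. For the universal property, given a family of morphisms $f_i\colon (Y,\qb Y)\to (X_i,\qb{X_i})$, I would define the pairing $\langle f_i\rangle_i\colon Y\to\prod_iX_i$ by $y\mapsto (f_i(y))_{i\in I}$, which is the unique set-theoretic mediating map. To see it is a morphism, take $\alpha\in\qb Y$: then $\pi_i\circ\langle f_j\rangle_j\circ\alpha = f_i\circ\alpha\in\qb{X_i}$ for every $i$, so $\langle f_j\rangle_j\circ\alpha\in\qb{\Pi_iX_i}$ as required. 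Uniqueness of the mediating morphism follows from uniqueness of the set-theoretic pairing.

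I do not anticipate any serious obstacle: each clause reduces to the corresponding clause for the factors via the projections, which is the standard mechanism for constructing limits of structured spaces. The only thing to keep in mind is that the index set $I$ may be arbitrary (not countable), but this is harmless because the gluing axiom quantifies over sequences in the common structure $\qb{\Pi_iX_i}$, not over the index set $I$; the pointwise verification through each $\pi_i$ handles arbitrary $I$ uniformly.
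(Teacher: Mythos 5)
Your proof is correct and is exactly the routine componentwise argument the paper has in mind (it gives no proof notes for this proposition, reserving them for the coproduct and function-space cases where the third axiom is subtle). All three axioms and the universal property reduce through the projections just as you describe, and your remark that arbitrary index sets cause no trouble is on point.
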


\begin{proposition}[Coproducts]\label{prop:qbscoproduct}
  If $(X_i,\qb{X_i})_{i\in I}$ is a family of quasi-Borel spaces indexed by a countable set $I$, then $(\coprod_iX_i,\qb{\amalg_i X_i})$ is a quasi-Borel space, where $\coprod_iX_i$ is the disjoint union of sets, 
  \begin{align*}
    \qb{\amalg_i X_i}\defeq \{\lambda r.\,(f(r),\alpha_{f(r)}(r))
    \mid\; &f\colon \RR\to I\ \mbox{is measurable},\\
    & (\alpha_{i}\in \qb {X_{i}})_{i\in \mathsf{image}(f)}\}\text,
  \end{align*}
  and $I$ carries the discrete $\sigma$-algebra.
  This space has the universal property of a coproduct in the category $\QBS$.
\end{proposition}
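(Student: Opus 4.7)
The plan is to verify the three quasi-Borel axioms for $\qb{\amalg_i X_i}$, exhibit the injections $\iota_j\colon X_j\to \coprod_i X_i$ (defined by $\iota_j(x)=(j,x)$) as morphisms, and then check the universal property. Closure of $\qb{\amalg_i X_i}$ under measurable precomposition is direct: if $\beta(r)=(f(r),\alpha_{f(r)}(r))$ and $h\colon\RR\to\RR$ is measurable, then $\beta\circ h$ has the same form with $f\circ h$ in place of $f$ and $(\alpha_i\circ h)$ in place of $(\alpha_i)$, each of which lies in $\qb{X_i}$ by closure of $\qb{X_i}$ under measurable precomposition. A constant function with value $(j,x)$ is witnessed by taking $f\equiv j$ and $\alpha_j$ constant at $x$.

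The subtle axiom is closure under countable decomposition. Given a Borel partition $\RR=\biguplus_{k\in\NN}S_k$ and $\beta_k(r)=(f_k(r),\alpha^k_{f_k(r)}(r))$ in $\qb{\amalg_i X_i}$, I set $f(r)=f_k(r)$ for $r\in S_k$; measurability of $f$ into $I$ (with its discrete $\sigma$-algebra) holds because $f^{-1}(\{j\})=\bigcup_k(f_k^{-1}(\{j\})\cap S_k)$ is a countable union of Borel sets. For each $i\in\mathsf{image}(f)$, I then need $\alpha_i\in \qb{X_i}$ satisfying $\alpha_i(r)=\alpha^k_i(r)$ whenever $r\in S_k\cap f_k^{-1}(\{i\})$; I construct such an $\alpha_i$ by invoking the countable decomposition axiom of $\qb{X_i}$ itself, applied to the Borel partition of $\RR$ formed by the sets $T^k_i=S_k\cap f_k^{-1}(\{i\})$ (on which I place $\alpha^k_i\in\qb{X_i}$, noting $i\in\mathsf{image}(f_k)$ there) together with the complementary Borel set, which I fill by any constant. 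With $f$ and $(\alpha_i)_{i\in\mathsf{image}(f)}$ in hand, the function $r\mapsto(f(r),\alpha_{f(r)}(r))$ lies in $\qb{\amalg_i X_i}$ by definition, and it agrees with the patched $\beta$ on each $S_k$. This component-by-component patching — one application of the inner axiom per index $i$, before the final function is recognized as an element of $\qb{\amalg_i X_i}$ — is the main technical step.

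For the universal property, each $\iota_j$ is a morphism because $\iota_j\circ\alpha = \lambda r.(j,\alpha(r))$ fits the defining form with $f\equiv j$ and $\alpha_j=\alpha$. Given a cocone of morphisms $g_i\colon (X_i,\qb{X_i})\to (Y,\qb Y)$, the only function $g\colon\coprod_iX_i\to Y$ satisfying $g\circ\iota_j=g_j$ for every $j$ is $g(j,x)=g_j(x)$. To verify that $g$ is a morphism, take $\beta\in\qb{\amalg_i X_i}$ of the form above: on the Borel set $f^{-1}(\{i\})$ the composite $g\circ\beta$ equals $g_i\circ\alpha_i$, which lies in $\qb Y$ since $g_i$ is a morphism and $\alpha_i\in\qb{X_i}$. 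Because $I$ is countable, the partition $\{f^{-1}(\{i\})\}_{i\in I}$ of $\RR$ is itself countable, so (after reindexing by $\NN$ and filling empty pieces with constants) the countable decomposition axiom for $\qb Y$ yields $g\circ\beta\in\qb Y$. Uniqueness of $g$ is automatic from the set-theoretic definition of $\coprod_i X_i$.
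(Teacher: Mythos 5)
Your proof is correct and follows the same route the paper indicates: its proof notes single out exactly your key step, namely invoking the countable-decomposition axiom of $\qb Y$ over the Borel partition $\{f^{-1}(\{i\})\}_{i\in I}$ to show the copairing $[g_i]_{i\in I}$ is a morphism. Your verification of the three axioms for $\qb{\amalg_i X_i}$ (including the nested, component-by-component use of the gluing axiom in each $\qb{X_i}$) fills in the routine details the paper leaves implicit, and the minor edge cases you gloss (empty pieces, nonemptiness needed for constants) are harmless.
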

\begin{proof}[Proof notes]
  The third condition of quasi-Borel spaces is needed here. 
  It is a crucial step in showing that for an $I$-indexed family of morphisms $(f_i\colon X_i\to Z)_{i\in I}$, the copairing $[f_i]_{i\in I}\colon\coprod_{i\in I}X_i\to Z$ is again a morphism.
\end{proof}

\begin{proposition}[Function spaces]\label{prop:functionspace}
  If $(X,\qb X)$ and $(Y,\qb Y)$ are quasi-Borel spaces, so is $(Y^X,\qb {Y^X})$,
  where ${Y^X\defeq \QBS(X,Y)}$ is the set of morphisms $X\to Y$, and 
  \[
    \qb{Y^X}\defeq \{\alpha\colon\RR\to Y^X \mid
	\mathsf{uncurry}(\alpha)\in\QBS(\RR\times X,Y)\}\text.
  \] 
  The evaluation function $Y^X\times X\to Y$ is a morphism and has the universal property of the function space.
  Thus $\QBS$ is a cartesian closed category.
\end{proposition}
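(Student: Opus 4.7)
The plan is to check the three quasi-Borel axioms for $\qb{Y^X}$, then verify that evaluation is a morphism, and finally establish the universal property by currying. The guiding observation is that membership in $\qb{Y^X}$ is defined through a condition on $\mathsf{uncurry}(\alpha)$, so every verification reduces to showing that certain functions $\RR \times X \to Y$ are morphisms. By Prop.~\ref{prop:qbsproduct}, such a function $h$ is a morphism iff for every measurable $f \colon \RR \to \RR$ and every $\delta \in \qb X$, the composite $r \mapsto h(f(r),\delta(r))$ lies in $\qb Y$.

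I would first dispose of the two easy axioms. For closure under measurable precomposition, if $\alpha \in \qb{Y^X}$ and $f \colon \RR \to \RR$ is measurable, then $\mathsf{uncurry}(\alpha \circ f) = \mathsf{uncurry}(\alpha) \circ (f \times \id{X})$, a composite of morphisms. For constants, $\alpha(r)=g$ gives $\mathsf{uncurry}(\alpha) = g \circ \pi_2$, which is a composite of morphisms. The main obstacle is the third axiom: given a Borel partition $\RR = \biguplus_i S_i$ and $\alpha_i \in \qb{Y^X}$, define $\beta(r) = \alpha_i(r)$ for $r \in S_i$; I must show $\mathsf{uncurry}(\beta)$ is a morphism. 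Testing against $(f,\delta)$, the composite $r \mapsto \mathsf{uncurry}(\beta)(f(r),\delta(r))$ agrees with $\mathsf{uncurry}(\alpha_i)(f(r),\delta(r))$ on the pullback partition $\RR = \biguplus_i f^{-1}(S_i)$, whose pieces are again Borel because $f$ is measurable. Each piece yields an element of $\qb Y$ since $\mathsf{uncurry}(\alpha_i)$ is a morphism, and the third axiom of $\qb Y$ glues these pieces together into the required element of $\qb Y$.

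With $\qb{Y^X}$ in hand, evaluation $\varepsilon \colon Y^X \times X \to Y$ is a morphism because for any $(\alpha,\delta) \in \qb{Y^X \times X}$ the composite $r \mapsto \alpha(r)(\delta(r))$ equals $\mathsf{uncurry}(\alpha) \circ \langle \id{\RR},\delta\rangle$, a composite of morphisms. For the universal property, given a morphism $f \colon Z \times X \to Y$, the transpose $\mathsf{curry}(f) \colon Z \to Y^X$ is the candidate. I would first verify $\mathsf{curry}(f)(z) \in Y^X$ by pairing any $\beta \in \qb X$ with the constant random variable at $z$ (which lies in $\qb Z$ by axiom~2) and using that $f$ is a morphism. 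Then $\mathsf{curry}(f)$ is itself a morphism because, for $\gamma \in \qb Z$, one has $\mathsf{uncurry}(\mathsf{curry}(f) \circ \gamma) = f \circ (\gamma \times \id{X})$, a composite of morphisms. Uniqueness and the triangle identity $\varepsilon \circ (\mathsf{curry}(f) \times \id{X}) = f$ then follow from the underlying set-level currying bijection.
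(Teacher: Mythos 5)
Your proof is correct, and all the individual verifications (the two easy axioms, the direct gluing argument for the third axiom, evaluation, and the currying bijection) go through as you describe. The only point where you diverge from the paper is the hard step: the paper's proof notes isolate the third axiom for $\qb{Y^X}$ as the one difficult part and route its verification through the coproduct construction of Prop.~\ref{prop:qbscoproduct}, whereas you verify it directly by testing $\mathsf{uncurry}(\beta)$ against an element $\langle f,\delta\rangle$ of $\qb{\RR\times X}$, pulling the Borel partition back along the measurable reparametrization $f$, and gluing the resulting pieces using the third axiom of $\qb Y$. These are really the same gluing principle --- the copairing argument behind Prop.~\ref{prop:qbscoproduct} is exactly this kind of case-split along a countable Borel partition --- so your version simply inlines what the paper delegates to the coproduct proposition. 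What your route buys is self-containedness (no dependence on coproducts, so the cartesian closed structure is established from the axioms alone); what the paper's route buys is brevity and reuse of an already-established universal property. One small presentational point: when you invoke Prop.~\ref{prop:qbsproduct} to say that a function $h\colon\RR\times X\to Y$ is a morphism iff $r\mapsto h(f(r),\delta(r))\in\qb Y$ for all measurable $f$ and all $\delta\in\qb X$, you are implicitly also using that $\qb\RR$ consists exactly of the measurable endofunctions of $\RR$; it is worth saying so, but it is not a gap.
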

\begin{proof}[Proof notes]
  The only difficult part is showing that $(Y^X,\qb{Y^X})$ satisfies the third condition of quasi-Borel spaces. 
  Prop.~\ref{prop:qbscoproduct} is useful here.
\end{proof}
\subsection{Relationship with standard Borel spaces}
Recall that standard Borel spaces can be thought of as a full subcategory 
of the quasi-Borel spaces, that is,
the functor $\meastoqbs\colon \Meas\to \QBS$ is full and faithful (Prop.~\ref{prop:adjunction}(2))
when restricted to the standard Borel spaces.
This full subcategory has the same countable products, coproducts and function spaces 
(whenever they exist). 
We may thus regard quasi-Borel spaces as a conservative extension of standard Borel spaces that supports simple type theory.

\begin{proposition} 
  The functor $\meastoqbs(X,\sigalg X)=(X,\sigtoqb {\sigalg X})$:
  \begin{enumerate} 
  \item preserves products of standard Borel spaces:
  $\meastoqbs(\prod_iX_i)=\prod_i\meastoqbs(X_i)$, where $(X_i,\sigalg{X_i})_{i\in I}$ is a countable family of standard Borel spaces;
  \item preserves spaces of functions between standard Borel spaces whenever they exist:
  if $(Y,\sigalg Y)$ is countable and discrete, 
  and $(X,\sigalg X)$ is standard Borel, then $\meastoqbs(X^Y)=\meastoqbs(X)^{\meastoqbs(Y)}$;
  \item preserves countable coproducts of standard Borel spaces:
  $\meastoqbs(\coprod_iX_i)=\coprod_i\meastoqbs(X_i)$, where $(X_i,\sigalg{X_i})_{i\in I}$ is a countable family of standard Borel spaces. 
\end{enumerate}
\end{proposition}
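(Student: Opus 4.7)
\smallskip
The plan is to prove each part by unfolding the definitions and invoking Prop.~\ref{prop:adjunction}(2), which identifies morphisms $(\RR,\qb\RR)\to R(X,\Sigma_X)$ in $\QBS$ with measurable functions $\RR\to X$ whenever $(X,\Sigma_X)$ is standard Borel. Each construction considered here (countable product, countable coproduct, function space out of a countable discrete exponent) preserves standard Borelness, so this translation between $\qb{}$-membership and measurability is available at every step.

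For part~(1), a function $\alpha\colon\RR\to\prod_iX_i$ lies in $\qb{R(\prod_iX_i)}$ iff it is measurable, iff (by the universal property of the product $\sigma$-algebra) each $\pi_i\circ\alpha$ is measurable, iff each $\pi_i\circ\alpha\in\qb{R(X_i)}$, which matches the formula in Prop.~\ref{prop:qbsproduct}. For part~(3), a measurable $\alpha\colon\RR\to\coprod_iX_i$ is equivalent data to an index function $f\colon\RR\to I$ defined by $f(r)=i$ iff $\alpha(r)\in X_i$ (automatically measurable because $I$ is countable discrete and each $f^{-1}(\{i\})=\alpha^{-1}(X_i)$ is Borel), together with, for each $i$, a measurable extension $\alpha_i\colon\RR\to X_i$ of the restriction $\alpha|_{f^{-1}(\{i\})}$ by an arbitrary constant on the complement; this bijectively matches the description $\lambda r.(f(r),\alpha_{f(r)}(r))$ in Prop.~\ref{prop:qbscoproduct}.

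For part~(2), when $Y$ is countable discrete the function space $X^Y$ in $\Meas$ is the product $\prod_YX$, so by~(1) $R(X^Y)=\prod_YR(X)$ in $\QBS$, and it remains to show $\prod_YR(X)=R(X)^{R(Y)}$. By Prop.~\ref{prop:functionspace}, $\alpha\in\qb{R(X)^{R(Y)}}$ iff $\uncurry(\alpha)\colon\RR\times R(Y)\to R(X)$ is a morphism. Using~(3) applied to a family of singletons, $R(Y)\cong\coprod_{y\in Y}1$ in $\QBS$; since $\QBS$ is cartesian closed, $\RR\times(-)$ preserves coproducts, hence $\RR\times R(Y)\cong\coprod_{y\in Y}\RR$. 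A morphism from this coproduct to $R(X)$ is by Prop.~\ref{prop:qbscoproduct} a $Y$-indexed family of morphisms $\RR\to R(X)$, namely the slices $r\mapsto\alpha(r)(y)$, which by Prop.~\ref{prop:adjunction}(2) amounts to a family of measurable slices — exactly the condition for $\alpha\in\qb{\Pi_YR(X)}$.

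The main obstacle is this last step: the three parts must be proved in the order $(1)$, $(3)$, $(2)$ so that the coproduct decomposition of $R(Y)$ is available to part~(2), and one must invoke the cartesian closure of $\QBS$ (in the form: the left adjoint $\RR\times(-)$ preserves colimits) to commute a product past a coproduct. Once these two ingredients are assembled, all remaining verifications reduce to routine definition-chasing.
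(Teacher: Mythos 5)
Your proof is correct. The paper states this proposition without giving a proof, so there is no official argument to compare against; your route --- (1) by the universal property of the product $\sigma$-algebra, (3) by decomposing a measurable map into an index function with Borel fibres plus measurable extensions of its restrictions, and then (2) by writing $X^Y\cong\prod_Y X$ in $\Meas$, using (1), and identifying $\qb{R(X)^{R(Y)}}$ via $R(Y)\cong\coprod_{y\in Y}1$ and the fact that $\RR\times(-)$ preserves coproducts in a cartesian closed category --- is the natural definition-chasing argument and all steps check out. Two small glosses worth making explicit: in part (2) you should also note that the underlying \emph{sets} agree, i.e.\ $\QBS(R(Y),R(X))=\Meas(Y,X)=X^Y$, which is exactly Prop.~\ref{prop:adjunction}(2) (you invoke it in your preamble but not at this point); and parts (1) and (3) actually hold for arbitrary measurable spaces --- standard Borelness is only needed where Prop.~\ref{prop:adjunction}(2) is used, namely in part (2).
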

Consequently, a standard programming language semantics in standard Borel spaces 
can be conservatively embedded in quasi-Borel spaces, allowing higher-order functions while preserving all the type theoretic structure. 

We note, however, that in light of Prop.~\ref{prop:not-ccc},
the quasi-Borel space $\RR^\RR$ does not come from a standard Borel space. 
Moreover, the left adjoint $L\colon \QBS\to \Meas$ does not preserve products
in general.
For quasi-Borel spaces $(X,\qb X)$ and $(Y,\qb Y)$, we always have 
$\qbtosig{\qb X}\otimes \qbtosig{\qb Y}\subseteq \qbtosig {\qb {X\times Y}}$, 
but not always $\supseteq$. Indeed, $\qbtosig{\qb {\RR^\RR}}\otimes \sigalg \RR
\neq \qbtosig {\qb{(\RR^\RR\times \RR)}}$, by Prop.~\ref{prop:not-ccc}.

\section{A monad of probability measures}\label{sec:giry}
In this section we will show that the probability measures on a quasi-Borel space form a quasi-Borel space again. This gives a commutative monad that generalizes the Giry monad for measurable spaces~\cite{giry:monad}.
\subsection{Monads}
\newcommand{\CCat}{\mathcal C}
We use the Kleisli triple formulation of monads (see e.g.~\cite{moggi-monads}). Recall that a monad on a category $\CCat$ comprises 
\begin{itemize}
\item for any object $X$, an object $T(X)$;
\item for any object $X$, a morphism $\munit\colon X\to T(X)$;
\item for any objects $X,Y$, a function
 \[\bindname\colon \CCat(X,T(Y))\to \CCat(T(X),T(Y))\text.\]
  We write $(t\bindsymbol f)$ for $\bindname(f)(t)$. 
\end{itemize}
This is subject to the conditions $( t\bindsymbol \munit)=t$, $( {\munit(x)}\bindsymbol f)=f(x)$, and 
$\bind t {(\lambda x.\,(\bind {f(x)} g))}=
\bind{(\bind t f)}g$. 

The intuition is that $T(X)$ is an object of computations returning~$X$, 
that $\munit$ is the computation that returns immediately, and that 
$t\bindsymbol f$ sequences computations,
first running computation $t$ and then calling $f$ with the result.

When $\CCat$ is cartesian closed, 
a monad is \emph{strong} if $\bindname$ internalizes to an operation
$\bindname\colon (T(Y))^X\to (T(Y))^{T(X)}$,
and then the conditions are understood as expressions in a cartesian closed category.

\subsection{Kernels and the Giry monad}
We recall the notion of probability kernel, which is a measurable family of probability measures. 
\begin{definition}
  Let $(X,\sigalg X)$ and $(Y,\sigalg Y)$ be measurable spaces.
  A \emph{probability kernel} from $X$ to $Y$ is a function $k:X\times \sigalg Y\to [0,1]$ such that $k(x,-)$ is a probability measure for all $x\in X$ (Def.~\ref{def:meas-space}), and $k(-,U)$ is a measurable function for all $U\in\sigalg Y$ (Def.~\ref{def:measurablefunction}). 
\end{definition}
We can classify probability kernels as follows.
Let $\Giry(X)$ be the set of probability measures on $(X,\sigalg X)$. 
We can equip this set with the $\sigma$-algebra generated by 
${\{\mu\in\Giry(X)~|~\mu(U)<r\}}$, for $U\in\sigalg X$ and $r\in[0,1]$,
to form a measurable space $(\Giry(X),\sigalg{\Giry(X)})$.
A measurable function $X\to \Giry(Y)$ amounts to a probability kernel from $X$ to $Y$. 

The construction $\Giry$ has the structure of a monad, as first discussed by
Giry~\cite{giry:monad}.
A computational intuition is that $\Giry(X)$ is a space of probabilistic computations
over $X$, and this provides a semantic foundation for a first-order probabilistic programming language (see e.g.~\cite{statonyangheunenkammarwood:higherorder}). 
The unit $\eta\colon X\to \Giry(X)$ lets $\eta(x)$ be the Dirac measure on $x$, 
with $\eta(x)(U)= 1$ if $x\in U$, and $\eta(x)(U)= 0$ if $x\not\in U$.
If $\mu\in \Giry(X)$ and $k$ is a measurable function $X\to \Giry(Y)$, then $(\mu\bindGsymbol k)$ is the measure in $\Giry(Y)$ with 
$(\mu\bindGsymbol k)(U)= \int_{x \in X} k(x)(U) \,\dd\mu$. 

\subsection{Equivalent measures on quasi-Borel spaces}
Recall (Def.~\ref{def:probabilitymeasure}) that a probability measure $(\alpha,\mu)$ on a quasi-Borel space $(X,\qb X)$ is a pair $(\alpha,\mu)$ of a function 
$\alpha\in\qb X$ and a probability measure $\mu$ on $\RR$. 
Random variables are often equated when they describe the same distribution. 
Every probability measure $(\alpha,\mu)$ determines 
a push-forward measure $\alpha_* \mu$ on the corresponding 
measurable space $(X,\qbtosig{\qb X})$, that assigns to $U \subseteq X$ the real number $\mu(\alpha^{-1}(U))$.
We will identify two probability measures when they define the same push-forward measure, and write $\sim$ for this equivalence relation. 

This is a reasonable notion of equality even if we put aside the notion of 
measurable space, because two probability measures have the same push-forward measure precisely when they have the same integration operator: $(\alpha,\mu) \sim (\alpha',\mu')$ if and only if $\int f\, \dd(\alpha,\mu) = \int f\,\dd(\alpha',\mu')$ for all morphisms ${f \colon (X,\qb X) \to \RR}$.
Nevertheless, other notions of equivalence could be used.

\subsection{A probability monad}
We now explain how to build a monad of probability measures on the 
category of quasi-Borel spaces, modulo this notion of 
equivalence. 
This monad $\Pmonad$ will inherit properties 
from the Giry monad.
Technically, the functor
$L\colon \QBS\to \Meas$ (Prop.~\ref{prop:adjunction}) is a `monad opfunctor' 
taking $\Pmonad$ to the 
Giry monad $\Giry$, which means that it extends to a 
functor from the Kleisli category of~$\Pmonad$ to the Kleisli category of~$\Giry$~\cite{street-monads}.
\paragraph{On objects}
For a quasi-Borel space $(X,\qb X)$, let 
\begin{align*}
    P(X) &= \{ (\alpha,\mu) \text{ probability measure on } (X,\qb X)\}\slash\sim \text{,}\\
    \qb{P(X)} & = \{ \beta \colon \RR \to P(X) \mid \exists \alpha \in \qb X.\,\exists g \in \Meas(\RR, G(\RR)).\,
    \\
    & \phantom{= \{ \beta \colon \RR \to P(X) \mid\;\;} \forall r \in \RR.\, 
    \beta(r) = [\alpha,g(r)] \}  \text{,}
\end{align*}
where $[\alpha,\mu]$ denotes the equivalence class.
Note that \begin{equation}P(X)\cong \{\alpha_*\mu\in \Giry(X,\qbtosig{\qb X})~|~\alpha\in\qb X,\,\mu\in\Giry(\RR)\}\label{eqn:opfunctor-map}\end{equation}
as sets, and $l_X([\alpha,\mu])= \alpha_*\mu$ defines a measurable injection 
$l_X\colon L(P(X))\rightarrowtail \Giry (X,\qbtosig{\qb X})$.

\paragraph{Monad unit (return)}
Recall that the constant functions $(\lambda r.x)$ are all in $\qb X$.
For any probability measure $\mu$ on $\RR$, the push-forward measure
$(\lambda r.x)_*\,\mu$ on $(X,\qbtosig{\qb X})$ 
is the Dirac measure on $x$, with $((\lambda r.x)_*\,\mu)(U)=1$ if $x\in U$
and $0$ otherwise. 
Thus $(\lambda r.x,\mu) \sim (\lambda r.x,\mu')$ for all measures  $\mu,\mu'$ on $\RR$.
The unit of $\Pmonad$ at $(X,\qb X)$ is the morphism 
$\munit \colon X\to \Pmonad(X)$ given by
\begin{equation}\label{eq:giry:unit}
  \munit_{(X,\qb X)}(x) = [ \lambda r.x,\, \mu ]
\end{equation}
for an arbitrary probability measure $\mu$ on $\RR$. 

\paragraph{Bind}
To define ${\bindname\colon \Pmonad(Y)^X\to (\Pmonad(Y))^{\Pmonad(X)}}$,
suppose $f\colon X\to \Pmonad(Y)$ 
is a morphism and $[\alpha,\mu]$ in $\Pmonad(X)$.
Since $f$ is a morphism, there is a measurable $g\colon \RR\to\Giry(\RR)$ 
and a function $\beta\in \qb Y$ such that 
$(f\circ \alpha)(r) = [\beta,g(r)]$. 
Set $(\bind{[\alpha,\mu]} f) = [\beta,\bindG \mu g]$,
where $\bindG\mu g$ is the bind of the Giry monad.
This matches the bind of the Giry monad, since 
$(\bindG{(\alpha_*\mu)}{(l_Y \circ f)})=\beta_*(\bindG \mu g)$. 
\begin{theorem}
\label{thm:Pmonad}
The data $(P,\eta,\bindname)$ above defines a strong monad on the 
category $\QBS$ of quasi-Borel spaces.
\end{theorem}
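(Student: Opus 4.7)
The plan is to verify in turn: (i) the set $\qb{P(X)}$ makes $P(X)$ a quasi-Borel space; (ii) $\eta$ is a morphism $X \to P(X)$; (iii) bind is well-defined and produces a morphism $P(X) \to P(Y)$ for every morphism $f \colon X \to P(Y)$; (iv) the three monad laws hold; (v) the monad is strong. The running device is the injection $l_X \colon L(P(X)) \rightarrowtail G(L(X))$ defined by $l_X[\alpha,\mu] = \alpha_*\mu$, which transports identities from the Giry monad back to $P$.

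For (i), the first two axioms of Def.~\ref{def:qbs} are immediate: if $\beta(r)=[\alpha,g(r)]$ with $g$ measurable, then $\beta\circ h(r) = [\alpha,g(h(r))]$ for measurable $h$, and constants arise by taking $g$ constant. For the gluing axiom, given $\RR=\biguplus_{i\in\NN}S_i$ and $\beta_i(r) = [\alpha_i,g_i(r)]$, we fix a measurable isomorphism $\RR \cong \biguplus_{i\in\NN}\RR$, glue the $\alpha_i$ along it using the third axiom on $\qb X$ to obtain a single $\alpha \in \qb X$, and push each $g_i(r)$ forward through the inclusion into the $i$-th copy to obtain a single measurable $g \colon \RR \to G(\RR)$ with $\beta(r) = [\alpha,g(r)]$ on each $S_i$. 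For (ii), the key observation is that $\eta(\alpha(r)) = [\lambda r'.\alpha(r),\mu] = [\alpha,\delta_r]$ since both push forward to the Dirac measure at $\alpha(r)$; because $r\mapsto\delta_r$ is the measurable unit of the Giry monad, $\eta\circ\alpha\in\qb{P(X)}$.

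For (iii), the well-definedness of $(\bind{[\alpha,\mu]}{f}) = [\beta,\bindG\mu g]$ is the main technical obstacle. One must show it depends only on the equivalence class $[\alpha,\mu]$ and on $f$, not on the chosen witnesses $\beta,g$. The cleanest route is to verify the identity
\[
\beta_*(\bindG\mu g) \;=\; \bindG{(\alpha_*\mu)}{(l_Y\circ f)}
\]
by a Fubini calculation: for $U \in \qbtosig{\qb Y}$, both sides equal $\int g(r)(\beta^{-1}U)\,\dd\mu$, the left by definition of push-forward and the right by unfolding the Giry bind and using $(l_Y\circ f\circ\alpha)(r) = \beta_*g(r)$. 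Since the right-hand side depends only on $\alpha_*\mu$ and $f$, the left equivalence class is canonical. That $\bind{(-)}{f}$ is a morphism then follows because, for any $\gamma\in\qb{P(X)}$ presented as $\gamma(r) = [\alpha,h(r)]$, one gets $(\bind\gamma f)(r) = [\beta,\bindG{h(r)}{g}]$, and measurability of $r\mapsto\bindG{h(r)}{g}$ is the usual measurability of Giry bind in its first argument.

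For (iv), the monad laws reduce to those of the Giry monad: $l_X$ is injective and intertwines $\eta$ with Giry's Dirac unit and $\bindname$ with $\bindGname$, so the identities $( \munit(x) \bindsymbol f)=f(x)$, $( t \bindsymbol \munit)=t$, and associativity all follow from the Giry analogues. For (v), since $\QBS$ is cartesian closed (Prop.~\ref{prop:functionspace}), strength is equivalent to $\bindname$ being internalisable, i.e., to the uncurried map $P(X)\times P(Y)^X \to P(Y)$ being a morphism. Unfolding Def.~\ref{prop:functionspace} and the constructions above, this reduces to the measurability facts already noted: the construction of $(\beta,g)$ from a witness for $f\circ\alpha$ is uniform in $f$, and Giry bind itself is a morphism of measurable spaces in both arguments. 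The obstacle throughout is really the bookkeeping of representatives in the equivalence classes $[\alpha,\mu]$, and the Fubini identity above is what makes that bookkeeping consistent.
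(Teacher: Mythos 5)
Your proposal is correct and takes essentially the same route as the paper: its proof notes likewise reduce the monad laws to the Giry monad via the injection $l_X$ (the identity $\beta_*(\bindG{\mu}{g})=\bindG{(\alpha_*\mu)}{(l_Y\circ f)}$ that you verify is exactly the one the paper records when defining bind), and obtain strength by showing that $\bindname$ internalizes to a morphism by expanding definitions. Your write-up merely supplies more of the routine detail (the quasi-Borel structure on $P(X)$, morphismhood of $\eta$, well-definedness of bind) than the paper's brief proof notes.
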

\begin{proof}[Proof notes]
  The monad laws can be reduced to the laws for the monad $\Giry$ on $\Meas$~\cite{giry:monad}. The monad on $\QBS$ is strong because
  $\bindname\colon \Pmonad(Y)^X\to (\Pmonad(Y))^{\Pmonad(X)}$ is a 
  morphism, which is shown by expanding the definitions. 
\end{proof}

\begin{proposition}\label{prop:giryproperties}
  The monad~$\Pmonad$ satisfies these properties:
\begin{enumerate}
\item For $f\colon (X,\qb X)\to (Y,\qb Y)$, the functorial action 
$\Pmonad(f)\colon P(X)\to P(Y)$ is $[\alpha,\mu] \mapsto [f\circ \alpha,\mu]$.
\item It is a commutative monad, i.e.\ the order of 
sequencing doesn't matter:
if $p\in \Pmonad(X)$, $q\in\Pmonad(Y)$, and $f$ is a morphism $X\times Y\to \Pmonad(Z)$, then $\bind p {\lambda x.\,\bind q{\lambda y.\,f(x,y)}}$ equals $\bind q {\lambda y.\,\bind p{\lambda x.\,f(x,y)}}$.
\item The faithful functor ${L\colon \QBS\to \Meas}$ with ${L(X,\qb X)=(X,\qbtosig{\qb X})}$ extends to a faithful functor ${\mathsf{Kleisli}(\Pmonad)\to \mathsf{Kleisli}(\Giry)}$, i.e.\ $(L,l)$ is a monad opfunctor~\cite{street-monads}.
\item When $(X,\sigalg X)$ is a standard Borel space, the map
  $l_X$ of Eq.~\eqref{eqn:opfunctor-map} is a measurable isomorphism.
\label{prop:monad-morphism}
\end{enumerate}
\end{proposition}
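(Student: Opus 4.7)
The plan is to verify each claim by direct unfolding of the definition of $P$ and its bind, using the opfunctor structure from part~(3) to transport the commutativity computation in part~(2) to the familiar Giry monad on $\Meas$. I prove the parts in the order (1), (3), (2), (4).

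For part~(1), since $P$ is a Kleisli-triple monad, $P(f) = \lambda p.\,\bind{p}{(\eta\circ f)}$. To compute $\bind{[\alpha,\mu]}{(\eta\circ f)}$ using the bind formula, I represent $(\eta\circ f)(\alpha(r))$ as $[\beta,g(r)]$ with $\beta = f\circ\alpha\in\qb Y$ fixed and $g(r) = \delta_r$ the Giry unit. The required $\sim$-equivalence follows from $(f\circ\alpha)_*\delta_r = \delta_{f(\alpha(r))}$, which is the push-forward representative of $\eta^P(f(\alpha(r)))$. The right-unit law $\bindG{\mu}{\delta} = \mu$ for $G$ then yields $P(f)[\alpha,\mu] = [f\circ\alpha,\mu]$.

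For part~(3), take $l_X[\alpha,\mu] = \alpha_*\mu$; well-definedness and injectivity hold by the definition of $\sim$. Naturality of $l$ reduces to part~(1); the opfunctor unit law becomes $(\lambda r.x)_*\mu_0 = \delta_x$; and for bind compatibility, writing $(f\circ\alpha)(r) = [\beta,g(r)]$, both sides equal $\beta_*(\bindG{\mu}{g})$: the left-hand side by the definition of $\bindname$ for $P$, the right-hand side as $\int_\RR \beta_*g(r)\,\dd\mu(r)$ since push-forward commutes with integration of a measurable family of measures. Faithfulness on Kleisli categories then follows from faithfulness of $L\colon\QBS\to\Meas$ together with injectivity of each $l_Y$. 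Part~(2) follows by applying $l_Z$ to both sides of the commutativity identity: using bind compatibility from~(3) twice, each image becomes a nested Giry bind of $l_X(p)$, $l_Y(q)$, and $l_Z\circ f$; commutativity of $G$ (classical, by Fubini) gives equality in $G(Z)$, and injectivity of $l_Z$ transfers this back to $P(Z)$.

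For part~(4), use Prop.~\ref{prop:char-sbs} to pick a retraction $X\xrightarrow{f}\RR\xrightarrow{g} X$ with $g\circ f = \id X$; both $f$ and $g$ are measurable, and by Prop.~\ref{prop:adjunction}(2) we have $g\in\qb X$. Define $s\colon G(X)\to L(P(X))$ by $s(\nu) = [g, f_*\nu]$. Then $l_X\circ s = \id$ (since $(g\circ f)_*\nu = \nu$), so $l_X$ is surjective and, combined with injectivity, a bijection with inverse $s$. To see $s$ is measurable, factor it as $G(X)\xrightarrow{f_*} G(\RR) \xrightarrow{\phi} L(P(X))$ with $\phi(\mu) = [g,\mu]$: the first factor is measurable by functoriality of $G$, and $\phi$ is the $L$-image of a $\QBS$-morphism $\meastoqbs(G(\RR)) \to P(X)$, because for any $h\in\Meas(\RR,G(\RR))$ the composite $\lambda r.\,[g,h(r)]$ lies in $\qb{P(X)}$ by the defining description of $\qb{P(X)}$. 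The main obstacle is precisely this measurability of $l_X^{-1}$: bijectivity of a measurable map between general measurable spaces does not imply measurability of the inverse, so the retraction specific to standard Borel spaces is essential, as is the careful matching of $\qb{P(X)}$ with the $\QBS$ structure on $\meastoqbs(G(\RR))$ to see that $\phi$ lifts to a morphism.
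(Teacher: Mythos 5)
Your overall strategy --- transporting everything along the opfunctor $(L,l)$ and reducing to the Giry monad --- is exactly the reduction the paper announces (cf.\ the proof notes for Theorem~\ref{thm:Pmonad}), and your parts (1) and (3) are correct as written. The genuine gap is in part (2). To invoke commutativity of $\Giry$ for the nested binds of $l_X(p)$, $l_Y(q)$ and $l_Z\circ Lf$, you need the kernel $l_Z\circ Lf$ to be measurable for the \emph{product} $\sigma$-algebra $\qbtosig{\qb X}\otimes\qbtosig{\qb Y}$; but $f$ being a $\QBS$-morphism only makes $Lf$ measurable for $\qbtosig{\qb {X\times Y}}$, and the paper emphasizes that this containment can be strict --- $L$ does not preserve products (end of \S\ref{sec:structure}, via Prop.~\ref{prop:not-ccc}). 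Separate measurability in each variable, which you do have, is not enough: iterated integrals of a bounded, separately measurable function need not coincide, so ``commutativity of $G$, classical, by Fubini'' is not available at the level of $LX\times LY$ as stated. The repair is to retreat to the random sources, which is precisely where the quasi-Borel structure helps: writing $p=[\alpha,\mu]$ and $q=[\beta,\nu]$, the map $(r,s)\mapsto(\alpha(r),\beta(s))$ is a morphism $\RR\times\RR\to X\times Y$, so composing with $f$ and with a measurable isomorphism $\RR\cong\RR^2$ yields an element of $\qb{\Pmonad(Z)}$, i.e.\ a map of the form $t\mapsto[\gamma,g(t)]$ with $\gamma\in\qb Z$ and $g$ measurable. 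Hence $(r,s)\mapsto l_Z(f(\alpha(r),\beta(s)))(U)$ is jointly Borel on $\RR^2$ for each $U\in\qbtosig{\qb Z}$, Fubini--Tonelli applies to $\mu\otimes\nu$ there, both sides of the commutativity equation have the same image under $l_Z$, and injectivity of $l_Z$ finishes as you intended.

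In part (4), the section $s(\nu)=[g,f_*\nu]$ and the factorization $s=\phi\circ f_*$ are the right idea, but the final measurability step is under-justified: the fact that $\phi$ underlies a morphism $(\Giry(\RR),\sigtoqb{\sigalg{\Giry(\RR)}})\to\Pmonad(X)$ gives measurability of $\phi$ only from the possibly larger $\sigma$-algebra $\qbtosig{\sigtoqb{\sigalg{\Giry(\RR)}}}\supseteq\sigalg{\Giry(\RR)}$, whereas $f_*=\Giry(f)$ is measurable into $\sigalg{\Giry(\RR)}$, so the composite is not yet licensed. You need the classical fact that $\Giry(\RR)$ is itself a standard Borel space, so that $\qbtosig{\sigtoqb{\sigalg{\Giry(\RR)}}}=\sigalg{\Giry(\RR)}$ by the identity $LR=\mathrm{id}$ on standard Borel spaces noted after Prop.~\ref{prop:adjunction}. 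With that one ingredient added, your argument for the measurability of $\inv{l_X}$ goes through.
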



\section{Example: Bayesian regression}
\label{sec:example}
We are now in a position to explain the semantics of the Anglican program in Figure~\ref{fig:linearregression}.
The program can be split into three parts: a prior, a likelihood, and a posterior. 
Recall that Bayes' law says that the posterior is proportionate to the 
product of the prior and the likelihood.

\paragraph{Prior}
Lines 2--4 define a prior measure on $\RR^\RR$:
\newcommand*{\mlstinline}[1]{\text{\lstinline|#1|}}
\begin{align*}
\mathit{prior}\defeq 
\mlstinline{(let [}&\mlstinline{s (sample (normal 0.0 3.0)) }\\[-4pt]&\mlstinline{b (sample (normal 0.0 3.0))]}\\[-4pt]&\hspace{-0.5cm}\mlstinline{(fn [x] (+ (* s x) b)))}
\end{align*}

To describe this semantically, observe the following. 
\begin{proposition}
        \label{prop:change-randomsource}
Let $(\Omega,\sigalg \Omega)$ be a standard Borel space,
and $(X,\qb X)$ a quasi-Borel space. 
Let $\alpha\colon R(\Omega,\sigalg \Omega)\to X$ be a morphism and $\mu$ a probability measure on $(\Omega,\sigalg \Omega)$. 
Any section-retraction pair $(\Omega\xrightarrow\varsigma \RR \xrightarrow \rho\Omega)=\id\Omega$ has a probability measure $[\alpha\circ \rho,\varsigma_* \mu]\in \Pmonad(X)$, that is independent of the choice of $\varsigma$ and $\rho$. 
\label{prop:different-prob-space}
\end{proposition}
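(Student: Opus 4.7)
The plan is to verify two things: (i) the pair $(\alpha\circ\rho,\varsigma_*\mu)$ really is a probability measure on $(X,\qb X)$ in the sense of Def.~\ref{def:probabilitymeasure}, so its equivalence class $[\alpha\circ\rho,\varsigma_*\mu]$ in $P(X)$ is well-defined; and (ii) this equivalence class does not depend on the choice of section-retraction pair.

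For (i), I first need $\alpha\circ\rho\in\qb X$. The retraction $\rho\colon\RR\to\Omega$ is measurable, so by Prop.~\ref{prop:adjunction}(2), applied to the standard Borel space $(\Omega,\sigalg\Omega)$, $\rho$ is a morphism $\RR\to R(\Omega,\sigalg\Omega)$ in $\QBS$. Composing with the morphism $\alpha\colon R(\Omega,\sigalg\Omega)\to X$ in $\QBS$ gives a morphism $\RR\to X$, i.e.\ an element of $\qb X$. Since $\varsigma\colon\Omega\to\RR$ is measurable, $\varsigma_*\mu$ is a probability measure on $\RR$, so $(\alpha\circ\rho,\varsigma_*\mu)$ satisfies Def.~\ref{def:probabilitymeasure}.

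For (ii), the key observation is that the equivalence relation $\sim$ on probability measures is defined by equality of push-forwards to $(X,\qbtosig{\qb X})$. Computing the push-forward of $(\alpha\circ\rho,\varsigma_*\mu)$ along $\alpha\circ\rho$ gives, by functoriality of push-forward,
\[
(\alpha\circ\rho)_*(\varsigma_*\mu)\;=\;(\alpha\circ\rho\circ\varsigma)_*\mu\;=\;\alpha_*\mu,
\]
where the last equality uses the defining property $\rho\circ\varsigma=\id_\Omega$ of the section-retraction pair. The right-hand side $\alpha_*\mu$ is independent of the choice of $(\varsigma,\rho)$: it depends only on $\alpha$ and $\mu$. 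Hence any two section-retraction pairs yield probability measures with the same push-forward, and therefore the same equivalence class in $P(X)$.

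There is no serious obstacle here: the only subtlety is ensuring that $\rho$ genuinely lifts to a $\QBS$-morphism, which is precisely what Prop.~\ref{prop:adjunction}(2) provides for a standard Borel $\Omega$; without this, $\alpha\circ\rho$ would not obviously lie in $\qb X$. The existence of at least one section-retraction pair, incidentally, follows from Prop.~\ref{prop:char-sbs} (either $\Omega$ is a retract of $\RR$ directly, or $\Omega$ is countable and discrete, in which case an injection $\Omega\hookrightarrow\RR$ together with any measurable left inverse works).
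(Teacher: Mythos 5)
Your proof is correct and is exactly the argument the paper intends (the paper omits it): well-definedness of $(\alpha\circ\rho,\varsigma_*\mu)$ plus the observation that its push-forward is $(\alpha\circ\rho\circ\varsigma)_*\mu=\alpha_*\mu$, which determines the $\sim$-class independently of $(\varsigma,\rho)$. One tiny remark: you do not even need Prop.~\ref{prop:adjunction}(2) for $\alpha\circ\rho\in\qb X$, since a measurable $\rho\colon\RR\to\Omega$ is by definition an element of $\sigtoqb{\sigalg\Omega}$, i.e.\ a morphism $\RR\to R(\Omega,\sigalg\Omega)$ (and the standard-Borel hypothesis on $\Omega$ is what guarantees a section-retraction pair exists at all, as you note).
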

Write $[\alpha,\mu]$ for the probability measure in this case. 

Now, the program fragment $\mathit{prior}$ describes the distribution $[\alpha,\nu\otimes \nu]$ 
in $\Pmonad(\RR^\RR)$ 
where $\nu$ is the normal distribution on $\RR$ with mean $0$ and standard deviation $3$,
and where $\alpha\colon \RR\times \RR\to \RR^\RR$ is given by 
$\alpha(s,b)\defeq \lambda r.\,s \cdot r+b$. 
Informally, 
\begin{equation}
\label{eqn:linear-reg-prior}
\denot{\mathit{prior}}=[\alpha,\nu\otimes\nu]\in \Pmonad(\RR^\RR)\text.
\end{equation}
Figure~\ref{fig:prior} illustrates this measure $[\alpha, \nu \otimes \nu]$.
This denotational semantics can be made compositional, by using 
the commutative monad structure of $\Pmonad$ and the cartesian closed structure of the category $\QBS$ (following e.g.~\cite{moggi-monads,statonyangheunenkammarwood:higherorder}),
but in this paper we focus on this example rather than spelling out the general case once again.

\begin{figure}
\includegraphics[width=\linewidth]{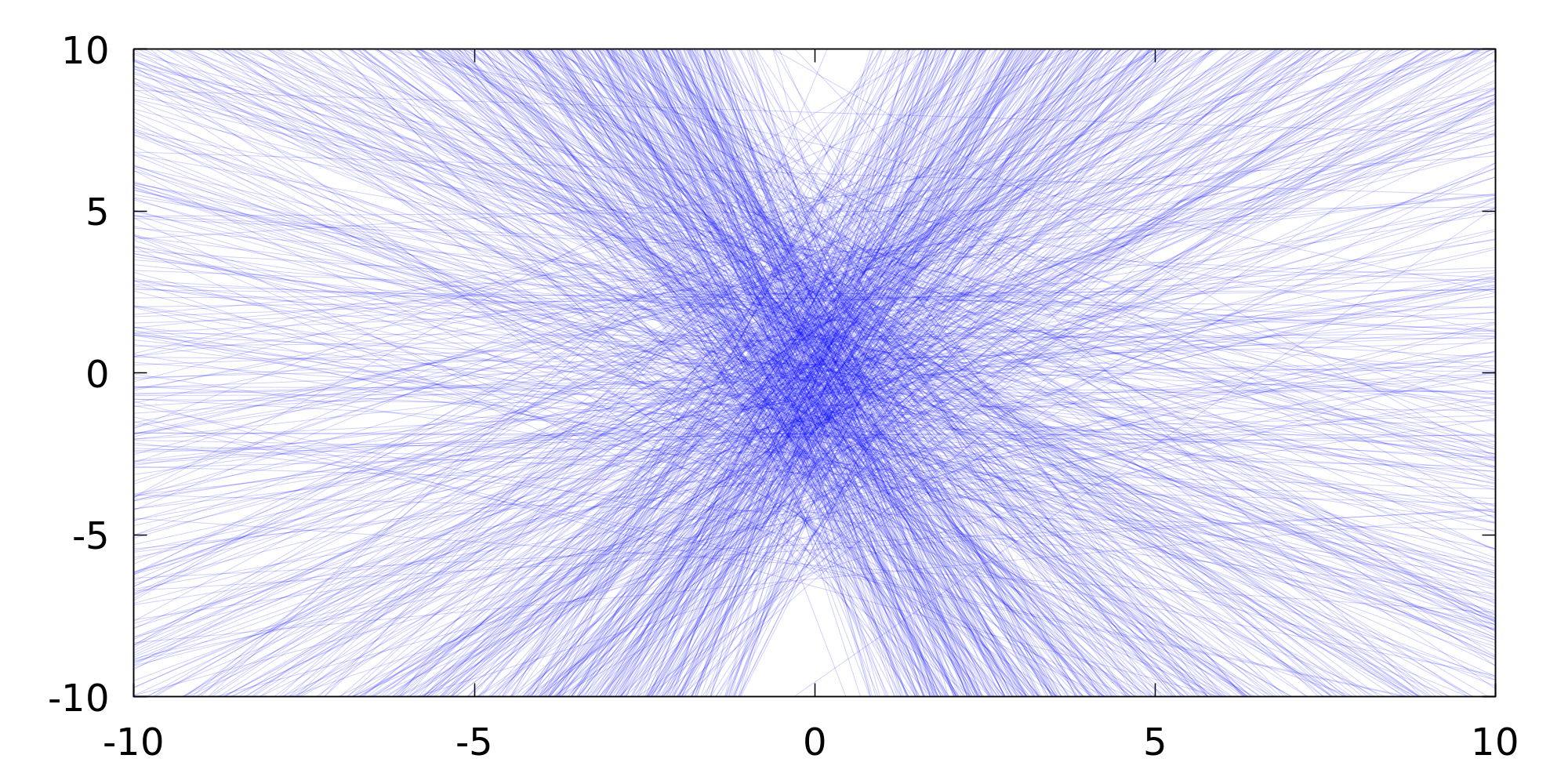}
\caption{Illustration of 1000 sampled functions from the prior on $\RR^\RR$ for Bayesian linear regression (\ref{eqn:linear-reg-prior}).}
\label{fig:prior}
\end{figure}

\newcommand{\nordens}{d}

\paragraph{Likelihood}Lines 5--9 define the likelihood of the observations:
\begin{align*}
\mathit{obs}\ \ \defeq &\phantom{\mbox{} \dots \mbox{} }    \mlstinline{(observe (normal (f 1.0) 0.5) 2.5)}\\&\dots
    \mlstinline{
    (observe (normal (f 5.0) 0.5) 8.0)}
\end{align*}
This program fragment has a free variable~$f$ of type~$\RR^\RR$. 
Let us focus on line~5 for a moment:
\[\mathit{obs}_1\defeq \mlstinline{(observe (normal (f 1.0) 0.5) 2.5)}\]
Given a function $f\colon \RR\to \RR$,
the likelihood of drawing $2.5$ from a normal distribution 
with mean $f(1.0)$ and standard deviation $0.5$ is 
\[
\denot{f\colon\RR^\RR\vdash \mathit{obs}_1}=\nordens(f(1.0),2.5)\text,
\]
where $\nordens\colon \RR^2\to[0,\infty)$ is the density of the normal distribution function with standard deviation $0.5$:
\[
\nordens(\mu,x)
\ =\ 
\sqrt{\tfrac 2\pi}\,e^{-2(x-\mu)^2}\text.
\]
Notice that we use a normal distribution 
to allow for some noise in the measurement. Informally, we are not recording an observation that $f(1.0)$ is \emph{exactly} $2.5$,
since this would make regression impossible; rather, $f(1.0)$ is roughly $2.5$.

Overall, lines 5--9 describe a likelihood weight which is 
the product of the likelihoods of the five data points, given 
$f\colon \RR^\RR$. 
\begin{align*}
\denot{f\colon \RR^\RR\vdash\mathit{obs}}
= \nordens(f(1),2.5)
&
{}
\cdot \nordens(f(2),3.8)
\cdot \nordens(f(3),4.5)
\\
&
{}
\cdot \nordens(f(4),6.2)
\cdot \nordens(f(5),8.0)\text.
\end{align*}

\paragraph{Posterior}
We follow the recipe for a semantic posterior given in~\cite{statonyangheunenkammarwood:higherorder}.
Putting the prior and likelihood together gives a probability measure 
in~$\Pmonad(\RR^\RR\times [0,\infty))$
which is found by pushing forward the measure 
$\denot{\mathit{prior}}\in\Pmonad(\RR)$ along the function 
\shortversion{$(\id\,,\,\denot{\mathit{obs}})\colon \RR^\RR\to \RR^\RR\times [0,\infty)$.}
\longversion{\[(\id\,,\,\denot{\mathit{obs}})\colon \RR^\RR\to \RR^\RR\times [0,\infty)\text.\]}
This push-forward measure
\[\Pmonad(\id\,,\,\denot{\mathit{obs}})\,(\denot{\mathit{prior}})\in \Pmonad(\RR^\RR\times [0,\infty))\]
is a measure over pairs $(f,w)$ of functions together with their likelihood weight.
We now find the posterior by multiplying the prior and the likelihood, and dividing by a normalizing constant.
To do this we define a morphism
\newcommand{\normalize}{\mathit{norm}}%
\begin{align*}
&
\normalize\colon \Pmonad(X\times [0,\infty))\to \Pmonad(X)\uplus\{\mathsf{error}\}
\\
&\normalize([(\alpha,\beta),\nu])\defeq 
\begin{cases}
[\alpha,\nu_\beta/(\nu_\beta(\RR))]&\text{if }{0\neq \nu_\beta(\RR)\neq \infty}\\
\mathsf{error}&\text{otherwise}\end{cases}
\end{align*}
where $\nu_\beta\colon \sigalg\RR\to [0,\infty] 
\defeq \lambda U.\,\int_{r \in U}(\beta(r))\, \dd \nu$.
The idea is that if $\beta\colon\RR\to [0,\infty)$ and $\nu$ is a probability measure on $\RR$ then 
$\nu_\beta$ is always a posterior measure on $\RR$, but it is typically not normalized, 
i.e.~$\nu_\beta(\RR)\neq 1$. We normalize it by dividing by the normalizing constant, 
as long as this division is well-defined.

Now, the semantics of the entire program in Figure~\ref{fig:linearregression}
is $\normalize(\Pmonad(\id\,,\,\denot{\mathit{obs}})\,(\denot{\mathit{prior}}))$, which is a measure in $\Pmonad(\RR^\RR)$. Calculating this posterior using Anglican's
inference algorithm \texttt{lmh} gives the plot in the lower half of 
Figure~\ref{fig:linearregression}.

\paragraph{Defunctionalized regression and non-linear regression}
Of course, one can do regression without explicitly considering distributions over the space of all measurable functions, 
by instead directly calculating posterior distributions for the slope~$s$ and the intercept~$b$.
For example, one could defunctionalize the program in Fig.~\ref{fig:linearregression} in the style of Reynolds~\cite{defunctionalization}.
But defunctionalization is a whole-program transformation. 
By structuring the semantics using quasi-Borel spaces, we are able to work compositionally, without mentioning~$s$ and~$b$ explicitly on lines 5--10. 
The internal posterior calculations actually happen at the level of standard Borel spaces, and so a defunctionalized version would be in some sense equivalent,
but from the programming perspective it helps to abstract away from this.
The regression program in Fig.~\ref{fig:linearregression} is quickly adapted to fit other kinds of functions, e.g.\ polynomials, or even programs from a small domain-specific language, simply by changing the prior in Lines~2--4. 


\section{Random functions}\label{sec:functions}
We discuss random variables and random functions, starting from the traditional setting.
Let $(\Omega,\sigalg\Omega)$ be a measurable space with a probability measure.
A \emph{random variable} is a measurable function
$(\Omega,\sigalg\Omega)\to (X,\sigalg X)$.
A \emph{random function} between measurable spaces
$(X,\sigalg X)$ and $(Y,\sigalg Y)$ is a measurable function
$(\Omega \times X,\sigalg \Omega \otimes \sigalg X)\to (Y,\sigalg Y)$.

We can push forward a probability measure on $\Omega$ along a random variable
$(\Omega,\sigalg\Omega)\to (X,\sigalg X)$ to get a probability measure on
$X$, but in the traditional setting we cannot push forward a measure along a
random function. Measurable spaces are not cartesian closed (Prop.~\ref{prop:not-ccc}),
and so we cannot form a measurable space $Y^X$ and we cannot curry a random function in general.

Now, if we revisit these definitions in the setting of quasi-Borel spaces,
we \emph{do} have function spaces, and so we can push forward along random functions.
In fact, this is somewhat tautologous because a probability measure (Def.~\ref{def:probabilitymeasure}) on a function space
is essentially the same thing as a random function: a probability measure on a function
space $(Y,\sigalg Y)^{(X,\sigalg X)}$ is defined to be a pair $(f,\mu)$ of a
probability measure $\mu$ on $\RR$, our sample space, and a morphism
$f \colon \RR\to Y^X$; but to give a morphism $\RR\to Y^X$ is to give a
morphism $\RR \times X \to Y$ (Prop.~\ref{prop:functionspace})
as in the traditional definition of random function.

We have already encountered an example of a random function in Section~\ref{sec:example}: the prior for linear regression is a random function from $\RR$ to $\RR$
over the measurable space $(\RR\times \RR, \sigalg \RR \otimes \sigalg \RR)$ with the measure $\nu\otimes \nu$.
Random functions abound throughout probability theory and stochastic processes.
The following section explores their use in the so-called randomization lemma, which is
used throughout probability theory. By moving to quasi-Borel spaces,
we can state this lemma succinctly (Theorem~\ref{theorem:random-quotient}).

\subsection{Randomization}
An elementary but useful trick in probability theory is
that every probability distribution on $\RR$
arises as a push-forward of the uniform distribution on $[0,1]$.
Even more useful is that this can be done in a parameterized way.
\begin{proposition}[\cite{kallenberg}, Lem.~3.22]\label{prop:randomization}
Let $(X,\sigalg X)$ be a measurable space.
For any kernel ${k\colon X\times \sigalg \RR\to [0,1]}$
there is a measurable function ${f\colon\RR\times X\to \RR}$ such that
${k(x,U)=\upsilon\{r~|~f(r,x)\in U\}}$, where
$\upsilon$ is the uniform distribution on $[0,1]$.
\end{proposition}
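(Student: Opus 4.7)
The plan is to adapt the classical inverse-CDF (quantile function) construction to the parameterized setting. For each fixed $x$, $k(x,\cdot)$ is a probability measure on $\RR$, and the one-dimensional fact is that pushing forward the uniform measure on $[0,1]$ along the quantile function of $k(x,\cdot)$ recovers $k(x,\cdot)$. The content of the lemma is that this construction can be performed simultaneously in $x$, producing a jointly measurable $f$.

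First, I would define the joint cumulative distribution function
\[
  F\colon X\times \RR \to [0,1],\qquad F(x,y) \defeq k\big(x,\,(-\infty,y]\big).
\]
For each fixed $y$, measurability of $F(\cdot,y)$ is immediate from the definition of a kernel; for each fixed $x$, $F(x,\cdot)$ is non-decreasing, right-continuous, and has limits $0$ and $1$ at $\mp\infty$, being the distribution function of the probability measure $k(x,\cdot)$. Then define the candidate
\[
  f(r,x) \defeq \inf\{y\in\RR \mid F(x,y)\geq r\}
\]
for $r\in(0,1)$, and set $f(r,x)\defeq 0$ for $r\notin(0,1)$ (these values of $r$ carry no $\upsilon$-mass, so are irrelevant for the statement).

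Two things remain to be checked. For joint measurability, I would establish the key equivalence
\[
  f(r,x)\leq y \iff r\leq F(x,y), \qquad r\in(0,1),
\]
from which $\{(r,x)\in(0,1)\times X \mid f(r,x)\leq y\} = \{(r,x)\mid r\leq F(x,y)\}$; the right-hand side is measurable since both coordinate projection and $F(\cdot,y)$ are, and since the rays $(-\infty,y]$ generate $\sigalg\RR$ this gives measurability of $f$. For the push-forward identity, the same equivalence yields, for each $x$,
\[
  \upsilon\{r\mid f(r,x)\leq y\} = \upsilon\{r\in[0,1] \mid r\leq F(x,y)\} = F(x,y) = k\big(x,(-\infty,y]\big),
\]
so the push-forward measure agrees with $k(x,\cdot)$ on the $\pi$-system $\{(-\infty,y]\mid y\in\RR\}$, and hence on all of $\sigalg\RR$ by Dynkin's $\pi$--$\lambda$ theorem.

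The only delicate step is proving the equivalence $f(r,x)\leq y \iff r\leq F(x,y)$. The ``$\Leftarrow$'' direction is immediate from the definition of infimum. The ``$\Rightarrow$'' direction is where right-continuity of $F(x,\cdot)$ is essential: if $f(r,x)\leq y$, then taking $y_n\searrow f(r,x)$ with $F(x,y_n)\geq r$ and applying right-continuity gives $F(x,f(r,x))\geq r$, after which monotonicity yields $F(x,y)\geq r$. Once this equivalence is in hand, both joint measurability and correctness of the push-forward drop out immediately.
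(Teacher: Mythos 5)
Your proof is correct. The paper does not prove this statement at all—it imports it verbatim from Kallenberg (Lem.~2.22), and the argument you give is precisely the standard quantile-transform proof of that cited lemma, specialized to the case where the target space is already $\RR$ so that no Borel-isomorphism step is needed: the equivalence $f(r,x)\leq y \iff r\leq F(x,y)$ via right-continuity, joint measurability from the sections $F(\cdot,y)$, and agreement on the $\pi$-system of rays are exactly the points that need checking, and you check them correctly.
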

For quasi-Borel spaces we can phrase this more succinctly: it is a result
about a quotient of the space of random functions.
We first define quotient spaces.
\begin{proposition}
Let $(X,\qb X)$ be a quasi-Borel space, let $Y$ be a set,
and let $q\colon X\to Y$ be a surjection.
Then $(Y,\qb Y)$ is a quasi-Borel space with $\qb Y=\{q\circ \alpha~|~\alpha\in \qb X\}$.
\end{proposition}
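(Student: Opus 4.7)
The plan is to verify the three defining conditions of a quasi-Borel space (Def.~\ref{def:qbs}) for the proposed structure $M_Y = \{q \circ \alpha \mid \alpha \in M_X\}$, lifting each property pointwise from $M_X$ through the surjection $q$. In each case the given $\beta \in M_Y$ is already of the form $q \circ \alpha$, so the task reduces to exhibiting a suitable witness in $M_X$ and applying $q$.

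First, for closure under precomposition with a measurable map $f \colon \RR \to \RR$, I would observe that if $\beta = q \circ \alpha$ with $\alpha \in M_X$, then $\beta \circ f = q \circ (\alpha \circ f)$, and $\alpha \circ f \in M_X$ by the corresponding axiom for $(X, M_X)$. Next, for a constant function $\beta \colon \RR \to Y$ with value $y$, surjectivity of $q$ provides some $x \in X$ with $q(x) = y$; the constant function $\alpha(r) = x$ lies in $M_X$, and $\beta = q \circ \alpha$ belongs to $M_Y$. This is the only place the surjectivity hypothesis is used, and it is essential, since otherwise not every constant function $\RR \to Y$ need factor through $q$.

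The main step, and the one analogous to the piecewise gluing subtlety flagged in the proof notes for Prop.~\ref{prop:qbscoproduct}, is the third axiom. Suppose $\RR = \biguplus_{i \in \NN} S_i$ is a Borel partition and $\beta_1, \beta_2, \ldots \in M_Y$, and let $\beta$ be their piecewise gluing. For each $i$, pick $\alpha_i \in M_X$ with $\beta_i = q \circ \alpha_i$. Applying the gluing axiom for $(X, M_X)$ to the family $(\alpha_i)$ along the same partition $(S_i)$ produces a function $\alpha \in M_X$ with $\alpha(r) = \alpha_i(r)$ whenever $r \in S_i$. Then for $r \in S_i$ we have $(q \circ \alpha)(r) = q(\alpha_i(r)) = \beta_i(r) = \beta(r)$, so $\beta = q \circ \alpha$ lies in $M_Y$, as required.

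The only potential obstacle is the axiom-of-choice-style selection of a witness $\alpha_i$ for each $\beta_i$ in the gluing step; this is unproblematic since the condition $\beta_i \in M_Y$ is precisely the existence of such an $\alpha_i$. No further universal property needs to be checked, since the statement only asserts that the proposed structure is a quasi-Borel space; the fact that $q$ is then automatically a morphism, and has the expected coequalizer-style universal property, follows immediately from the definition of $M_Y$ but is not required here.
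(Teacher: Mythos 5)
Your proof is correct: the paper states this proposition without proof, treating it as a routine verification, and your argument is exactly the intended one — checking the three axioms of Def.~\ref{def:qbs} directly, with surjectivity used only to lift constant functions and countable choice of witnesses $\alpha_i$ handling the gluing axiom. Nothing is missing.
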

We call such a space a \emph{quotient} space.
\begin{theorem}\label{theorem:random-quotient}
Let $(X,\qb X)$ be a quasi-Borel space.
The space $(\Pmonad(\RR))^X$ of kernels is a quotient of the
space $\Pmonad(\RR^X)$ of random functions.
\end{theorem}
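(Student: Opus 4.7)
The plan is to exhibit an explicit surjection $q\colon \Pmonad(\RR^X)\to \Pmonad(\RR)^X$, verify that it is a morphism of quasi-Borel spaces, and then show that the quasi-Borel structure on $\Pmonad(\RR)^X$ coincides with the quotient structure induced by $q$.

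For the definition, I would send a random function $[\gamma,\mu]\in\Pmonad(\RR^X)$ to the kernel whose value at $x\in X$ is $[\mathrm{ev}_x\circ\gamma,\mu]$, the pushforward along evaluation at $x$. Well-definedness on equivalence classes follows from the functoriality of pushforward: since $\mathrm{ev}_x\colon \RR^X\to\RR$ is a morphism (Prop.~\ref{prop:functionspace}), $\qbstomeas(\mathrm{ev}_x)$ is measurable, and $(\mathrm{ev}_x\circ\gamma)_*\mu=(\mathrm{ev}_x)_*(\gamma_*\mu)$. That $q$ is a morphism follows by unfolding: for $\alpha\in \qb{\Pmonad(\RR^X)}$ represented as $\alpha(r)=[\beta,g(r)]$ with $\beta\in\qb{\RR^X}$ and $g\colon\RR\to \Giry(\RR)$ measurable, one checks by a Fubini computation that $r\mapsto (\text{pushforward of }g(r)\text{ along }s\mapsto\beta(s)(x))$ is a measurable family in $(r,x)$, giving $q\circ\alpha\in\qb{\Pmonad(\RR)^X}$.

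Surjectivity is where Prop.~\ref{prop:randomization} enters. Given a kernel $k\colon X\to\Pmonad(\RR)$, apply $\qbstomeas$ to get a measurable kernel $\qbstomeas(X)\to \Giry(\RR)$ (using Prop.~\ref{prop:giryproperties}(4) for the codomain). Randomization yields a measurable $f\colon \RR\times \qbstomeas(X)\to\RR$ representing this kernel against the uniform measure $\upsilon$. Since $f$ is measurable, its currying $\gamma\colon\RR\to\RR^X$ lives in $\qb{\RR^X}$ (its uncurry is a morphism $\RR\times X\to\RR$ because any morphism $\RR\to\RR\times X$ yields a measurable composite with $f$). Then $q([\gamma,\upsilon])=k$ by construction.

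The main obstacle is the quotient condition: every $\delta\in \qb{\Pmonad(\RR)^X}$ must factor as $q\circ\alpha$ for some $\alpha\in\qb{\Pmonad(\RR^X)}$. Treating the uncurry of $\delta$ as a measurable kernel $\RR\times\qbstomeas(X)\to\Giry(\RR)$ and applying the randomization lemma parametrically yields a measurable $F\colon\RR\times(\RR\times\qbstomeas(X))\to\RR$ with $\delta(r,x)(U)=\upsilon\{s:F(s,r,x)\in U\}$. The trick is to absorb the parameter $r$ into the measure rather than into the random function: fix a measurable isomorphism $\phi\colon\RR\times\RR\cong\RR$, define
\[
\beta(t)(x) \defeq F\bigl(\pi_1(\phi^{-1}(t)),\pi_2(\phi^{-1}(t)),x\bigr), \qquad g(r)\defeq \phi_*(\upsilon\otimes\delta_r),
\]
and verify that $\beta\in\qb{\RR^X}$, that $g\colon\RR\to\Giry(\RR)$ is measurable (using Fubini), and that the pushforward of $g(r)$ along $t\mapsto\beta(t)(x)$ coincides with $\delta(r,x)$. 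Then $r\mapsto[\beta,g(r)]$ is the desired element of $\qb{\Pmonad(\RR^X)}$ mapped by $q$ to $\delta$. Packaging these four steps finishes the proof; the only genuinely non-routine ingredient is the randomization trick plus the encoding of $r$ via $\phi$ in Step four.
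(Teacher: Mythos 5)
Your proof is correct and follows essentially the same route as the paper: the same strength-induced map $q([\gamma,\mu])(x)=[\lambda r.\,\gamma(r)(x),\mu]$, the randomization lemma (Prop.~\ref{prop:randomization}) applied once for surjectivity and once, parametrically, to the uncurried $\delta$ for the quotient condition. The only packaging difference is the final membership check: where the paper keeps the parameter $r$ inside the random function and invokes Lemma~\ref{lemma:alt-qb-pmonad} (every $\lambda r.\,[\gamma(r),\upsilon]$ with $\gamma\in\qb{X^\RR}$ lies in $\qb{\Pmonad(X)}$), you absorb $r$ into the measure via a Borel isomorphism $\RR\times\RR\cong\RR$ so as to hit the definition of $\qb{\Pmonad(\RR^X)}$ directly---which is precisely the trick the paper uses to prove that lemma, so you are re-proving it inline.
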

Before proving this theorem, we use Prop.~\ref{prop:randomization}
to give an alternative characterization of our probability monad.
\begin{lemma}\label{lemma:alt-qb-pmonad}
Let $(X,\qb X)$ be a quasi-Borel space.
The function $q\colon X^\RR\to \Pmonad(X)$ given by $q(\alpha)\defeq [\alpha,\upsilon]$
is a surjection, with corresponding quotient space
$(\Pmonad(X),\qb{\Pmonad(X)})$:
\begin{align}
\label{eqn:altMPmonad}
M_{\Pmonad(X)}&=\{\lambda r\in\RR.\,[\gamma(r),\upsilon]~|~\gamma\in \qb{X^\RR}\}\text,
\end{align}
where $\upsilon$ is the uniform distribution on $[0,1]$.
\end{lemma}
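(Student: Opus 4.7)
The plan is to verify two claims: that $q$ is surjective, and that the quasi-Borel structure on $\Pmonad(X)$ from Section~\ref{sec:giry} agrees with the quotient structure induced by $q$. Surjectivity is a direct application of Prop.~\ref{prop:randomization}: given $[\alpha,\mu]\in\Pmonad(X)$, treat $\mu$ as a constant kernel to produce a measurable $h\colon\RR\to\RR$ with $h_*\upsilon=\mu$; then $\alpha\circ h\in\qb X=X^\RR$ and $q(\alpha\circ h)=[\alpha\circ h,\upsilon]=[\alpha,\mu]$.

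For the equality of structures, write $M^q$ for the right-hand side of~\eqref{eqn:altMPmonad}. For $M^q\subseteq\qb{\Pmonad(X)}$, given $\gamma\in\qb{X^\RR}$, I would fix a measurable isomorphism $\phi\colon\RR\to\RR\times\RR$ and set $\alpha\defeq\mathsf{uncurry}(\gamma)\circ\phi\in\qb X$. Each section $\iota_r\defeq\phi^{-1}(r,-)\colon\RR\to\RR$ is measurable and satisfies $\alpha\circ\iota_r=\gamma(r)$, so defining $g(r)\defeq(\iota_r)_*\upsilon$ yields $\alpha_*g(r)=\gamma(r)_*\upsilon$, i.e.~$[\alpha,g(r)]=[\gamma(r),\upsilon]$. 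Conversely, for $\beta(r)=[\alpha,g(r)]\in\qb{\Pmonad(X)}$, I would apply Prop.~\ref{prop:randomization} to the kernel $g$ itself to obtain measurable $f\colon\RR\times\RR\to\RR$ with $g(r)=f(-,r)_*\upsilon$, and set $\gamma(r)\defeq\alpha\circ f(-,r)$; its uncurried form $(r,s)\mapsto\alpha(f(s,r))$ is a composition of a morphism of standard Borel spaces (living in $\QBS$ by Prop.~\ref{prop:adjunction}) with $\alpha$, hence a morphism, so $\gamma\in\qb{X^\RR}$ and $[\gamma(r),\upsilon]=[\alpha,g(r)]=\beta(r)$.

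The main obstacle is verifying, in the $M^q\subseteq\qb{\Pmonad(X)}$ direction, that the family $r\mapsto(\iota_r)_*\upsilon$ really is measurable as a map $\RR\to G(\RR)$; this follows from joint measurability of $\phi^{-1}$ via a standard Fubini/monotone-class argument applied to the generators $\{\mu\mid\mu(U)<t\}$ of $\sigalg{G(\RR)}$, but it is the only step beyond pure manipulation of pushforward measures. All equivalence-class identifications in both directions reduce to the identity $(f\circ g)_*\mu=f_*(g_*\mu)$ together with the definition of $\sim$.
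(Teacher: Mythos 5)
Your proposal is correct and takes essentially the same route as the paper: one inclusion (and surjectivity) via the randomization lemma (Prop.~\ref{prop:randomization}) applied to the kernel $g$, and the other by transporting $\upsilon\otimes\delta_r$ along a measurable isomorphism $\RR\cong\RR\times\RR$, which is exactly the paper's use of the kernel $k(r)=\upsilon\otimes\delta_r$ together with Prop.~\ref{prop:change-randomsource}. The only difference is presentational: where the paper cites Prop.~\ref{prop:change-randomsource}, you inline its content by checking measurability of $r\mapsto(\iota_r)_*\upsilon$ directly with the Fubini/generator argument, which is fine.
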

\begin{proof}[Proof notes]
The direction $(\subseteq)$ follows immediately from Prop.~\ref{prop:randomization}.
For the direction $(\supseteq)$ 
we must consider $\gamma\in \qb{X^\RR}$ and show that
$(\lambda r\in \RR.\,[\gamma(r),\upsilon])$ is in $\qb{\Pmonad(X)}$.
This follows by considering the kernel $k\colon \RR\to \Giry(\RR\times \RR)$ with
$k(r)=\upsilon\otimes\delta_r$,
so that $[\gamma(r),\upsilon]=[\uncurry(\gamma),k(r)]$. Here we are using Prop.~\ref{prop:change-randomsource}.
\end{proof}
\shortversion{
\begin{proof}[Proof of Theorem~\ref{theorem:random-quotient}]
Consider the  evident morphism
$q\colon \Pmonad(\RR^X)\to (\Pmonad(\RR))^X$
that comes from the monadic strength.
That is, $(q([\alpha,\mu]))(x)=[\lambda r.\,\alpha(r)(x),\mu]$.
We show that $q$ is a quotient morphism.

We first show that $q$ is surjective.
To give a morphism $k\colon (X,\qb X)\to \Pmonad(\RR)$
is to give a measurable function $(X,\qbtosig{\qb X})\to\Giry(\RR)$,
since
$(\Pmonad(\RR),\qb{\Pmonad(\RR)})\cong (\Giry(\RR),\sigtoqb{\sigalg{\Giry(\RR)}})$ (Prop.~\ref{prop:giryproperties}(4))
 and by using the adjunction between measurable spaces
and quasi-Borel spaces (Prop.~\ref{prop:adjunction}(1)).
Directly, we understand a morphism $k\colon  (X,\qb X)\to \Pmonad(\RR)$
as the kernel $k^\sharp\colon X\times \sigalg \RR\to [0,1]$ with
$k^\sharp(x,U)\defeq \mu_x(\inv{\alpha}_x(U))$ whenever
$k(x)=[\alpha_x,\mu_x]$. The definition of $k^\sharp$ does not depend on the choice
of $\alpha_x,\mu_x$.

Now we can
use the randomization lemma (Prop.~\ref{prop:randomization})
to find a measurable function $f_{k^\sharp}\colon \RR\times X\to \RR$ such that
$k^\sharp(x,U)=\upsilon\{r \mid f_{k^\sharp}(r,x)\in U\}$.
In general, if a function ${Y\times X\to Z}$ is jointly measurable
then it is also a morphism from the product quasi-Borel space.
So  $f_{k^\sharp}$ is a morphism,
and we can form $(\curry{f_{k^\sharp}}) \colon \RR\to \RR^X$.
So,
\begin{align*}
        q([\curry{f_{k^\sharp}},\,\upsilon])(x)
        & = [\lambda r.\curry{f_{k^\sharp}}(r)(x),\,\upsilon]
        \\
        & = [\lambda r. f_{k^\sharp}(r,x),\,\upsilon]
          = k(x)\text{,}
\end{align*}
and $q$ is surjective, as required.

Finally we show that $\qb{(\Pmonad(\RR))^X}=\{q\circ \alpha~|~
\alpha\in\qb {\Pmonad(\RR^X)}\}$.
We have $(\supseteq)$ since $q$ is a morphism, so it remains to show $(\subseteq)$.
Consider $\beta\in\qb{(\Pmonad(\RR))^X}$.
We must show that $\beta=q\circ\alpha$ for some $\alpha\in \qb{\Pmonad(\RR^X)}$.
By Prop.~\ref{prop:functionspace}, $\beta\in\qb{(\Pmonad(\RR))^X}$
means the uncurried function
$(\uncurry\,\beta)\colon\RR\times X\to \Pmonad (\RR)$ is a morphism.
As above, this morphism corresponds to a kernel
$(\uncurry\,\beta)^\sharp\colon (\RR\times X)\times \sigalg \RR\to [0,1]$.
The randomization lemma (Prop.~\ref{prop:randomization}) gives a measurable function $f_{\beta}\colon \RR\times (\RR\times X)\to \RR$ such that
$(\uncurry\,\beta)^\sharp((r,x),U)=\upsilon\{s \mid f_{\beta}(s,(r,x))\in U\}$.
By Prop.~\ref{prop:adjunction}(1) and the fact that
the $\sigma$-algebra of a product quasi-Borel space
$\RR \times (\RR \times X)$ includes the product $\sigma$-algebras
$\sigalg \RR \otimes \sigalg {\qb {\RR \times X}}$,
this function $f_\beta$ is also a morphism.
Define $\gamma\colon \RR\to (\RR^X)^\RR$ by
${\gamma = \lambda r.\,\lambda s.\,\lambda x.\,f_\beta(s,(r,x))}$.
This is a morphism since we can interpret $\lambda$-calculus in a cartesian closed
category.
Define $\alpha\colon \RR\to \Pmonad(\RR^X)$ by
$\alpha(r)= [\gamma(r),\upsilon]$;
this function is in $\qb{\Pmonad(\RR^X)}$ by Lemma~\ref{lemma:alt-qb-pmonad}.
A direct calculation now gives $\beta=q\circ\alpha$, as required.
\end{proof}
}
\longversion{
\begin{proof}[Proof of Theorem~\ref{theorem:random-quotient}]
We consider the  evident morphism
$q\colon \Pmonad(\RR^X)\to (\Pmonad(\RR))^X$
that comes from the monadic strength.
That is, $(q([\alpha,\mu]))(x)\defeq[\lambda r.\,\alpha(r)(x),\mu]$.
We show that $q$ is a quotient morphism.

We first show that $q$ is a surjection.
Note that to give a morphism $k\colon (X,\qb X)\to \Pmonad(\RR)$
is to give a measurable function $(X,\qbtosig{\qb X})\to\Giry(\RR)$,
since
$(\Pmonad(\RR),\qb{\Pmonad(\RR)})\cong (\Giry(\RR),\sigtoqb{\sigalg{\Giry(\RR)}})$ (Prop.~\ref{prop:monad-morphism}(4))
 and by using the adjunction between measurable spaces
and quasi-Borel spaces (Prop.~\ref{prop:adjunction}(1)).
Directly, we understand a morphism $k\colon  (X,\qb X)\to \Pmonad(\RR)$
as the kernel $k^\sharp\colon X\times \sigalg \RR\to [0,1]$ with
$k^\sharp(x,U)\defeq \mu_x(\inv{\alpha_x}(U))$ whenever
$k(x)=[\alpha_x,\mu_x]$. The definition of $k^\sharp$ does not depend on the choice
of $\alpha_x,\mu_x$.

Now we can
use the randomization lemma (Prop.~\ref{prop:randomization})
to find a measurable function $f_{k^\sharp}\colon \RR\times X\to \RR$ such that
$k^\sharp(x,U)=\upsilon\{r~|~f_{k^\sharp}(r,x)\in U\}$.
In general, if a function ${Y\times X\to Z}$ is jointly measurable
then it is also a morphism from the product quasi-Borel space
(but the converse is unknown).
So  $f_{k^\sharp}$ is a morphism,
and we can form $(\curry{f_{k^\sharp}}) \colon \RR\to \RR^X$.
So,
\begin{align*}
        q([\curry{f_{k^\sharp}},\,\upsilon])(x)
        & = [\lambda r.\curry{f_{k^\sharp}}(r)(x),\,\upsilon]
        \\
        & = [\lambda r. f_{k^\sharp}(r,x),\,\upsilon]
          = k(x)\text{,}
\end{align*}
and $q$ is surjective, as required.

Finally we must show that \[\qb{(\Pmonad(\RR))^X}=\{q\circ \alpha~|~
\alpha\in\qb {\Pmonad(\RR^X)}\}\text.\]
We have $(\supseteq)$ since $q$ is a morphism, so it remains to show $(\subseteq)$.
Consider $\beta\in\qb{(\Pmonad(\RR))^X}$.
We must show that $\beta=q\circ\alpha$ for some $\alpha\in \qb{\Pmonad(\RR^X)}$.
By Prop.~\ref{prop:functionspace}, $\beta\in\qb{(\Pmonad(\RR))^X}$
means that the uncurried function
$(\uncurry\,\beta)\colon\RR\times X\to \Pmonad (\RR)$ is a morphism.
As above, this morphism corresponds to a kernel
$(\uncurry\,\beta)^\sharp\colon (\RR\times X)\times \sigalg \RR\to [0,1]$.
We use the randomization lemma (Prop.~\ref{prop:randomization})
to find a measurable function $f_{\beta}\colon \RR\times (\RR\times X)\to \RR$ such that
$(\uncurry\,\beta)^\sharp((r,x),U)=\upsilon\{s~|~f_{\beta}(s,(r,x))\in U\}$.
By Prop.~\ref{prop:adjunction}(1) and the fact that
the $\sigma$-algebra of a product quasi-Borel space
$\RR \times (\RR \times X)$ includes the product $\sigma$-algebras
$\sigalg \RR \otimes \sigalg {\qb {\RR \times X}}$,
this function $f_\beta$ is also a morphism.
Let $\gamma\colon \RR\to (\RR^X)^\RR$ be given by
$\gamma\defeq \lambda r.\,\lambda s.\,\lambda x.\,f_\beta(s,(r,x))$.
This is a morphism since we can interpret $\lambda$-calculus in a cartesian closed
category.
Let $\alpha\colon \RR\to \Pmonad(\RR^X)$ be given by
$\alpha(r)\defeq [\gamma(r),\upsilon]$;
this function is in $\qb{\Pmonad(\RR^X)}$ by Lemma~\ref{lemma:alt-qb-pmonad}.
Moreover a direct calculation gives $\beta=q\circ\alpha$, as required.
\end{proof}
}



\section{De Finetti's theorem}\label{sec:definetti}
De Finetti's theorem \cite{deFinetti37} is one of the foundational results in Bayesian statistics.
It says that every exchangeable sequence of random observations on $\RR$ 
or another well-behaved measurable space can be modeled accurately by the following 
two-step process: first choose a probability 
measure on $\RR$ randomly (according to some distribution on probability measures) 
and then generate a sequence with independent samples from this measure. Limiting observations to values in a well-behaved space like $\RR$ in the theorem is important: Dubins and Freedman 
proved that the theorem fails for a general measurable space~\cite{Dubins1979}.

In this section, we show that a version of de Finetti's theorem holds for all quasi-Borel spaces, not just $\RR$. Our result does not 
contradict Dubins and Freedman's obstruction; probability measures on quasi-Borel spaces may only use $\RR$ as their 
source of randomness, whereas those on measurable spaces are allowed to use
any measurable space for the same purpose. As we will show shortly, this careful choice
of random source lets us generalize key arguments in a proof
of de Finetti's theorem~\cite{Austin-IISC13} to quasi-Borel spaces.

Let $(X,\qb X)$ be a quasi-Borel space and $(\Xn, \qb \Xn)$ 
the product quasi-Borel space $\prod_{i = 1}^n X$ for each positive integer $n$.
Recall that $\Pmonad(X)$ consists of equivalence classes $[\beta,\nu]$ of probability measures $(\beta,\nu)$ on $X$.
For $n \geq 1$, define a morphism $\iidn \colon \Pmonad(X) \to \Pmonad(\Xn)$ by
\[
        \textstyle{\iidn([\beta,\nu])}
        = \textstyle{\left[\left(\prod_{i = 1}^n \beta \circ \iotan\right),\,
        \left(\left(\inviotan\right)_* \bigotimes_{i = 1}^n \nu\right)\right]}
\] 
where $\iotan$ is a measurable isomorphism $\RR \to \prod_{i =1}^n \RR$,
and $\bigotimes_{i = 1}^n \nu$ is the product measure formed by $n$ copies of $\nu$. 
The name $\iidn$ represents `independent and identically distributed'.
Indeed, $\iidn$ transforms a probability measure $(\beta,\nu)$ on $X$ to the measure of the random sequence in $\Xn$ that independently samples from $(\beta,\nu)$.
The function $\iidn$ is a morphism $\Pmonad(X) \to \Pmonad(\Xn)$ because 
it can also be written in terms of the strength of the monad $\Pmonad$.

Write $(\XI,\qb \XI)$ for the countable product $\prod_{i = 1}^\infty X$.
\begin{definition} 
        A probability measure $(\alpha,\mu)$ on $\XI$ is \emph{exchangeable} 
        if for all permutations $\pi$ on positive integers, 
        \shortversion{
        $[\alpha,\mu] = [\alpha_\pi, \mu]$, 
        where $\alpha_\pi(r)_i \defeq \alpha(r)_{\pi(i)}$ for all $r$ and $i$.
        }
        \longversion{
        \[
                [\alpha,\mu] = [\alpha_\pi, \mu]\text,
        \]
        where $\alpha_\pi(r)_i \defeq \alpha(r)_{\pi(i)}$ for all $r \in \RR$ and $i \geq 1$.
        }
\end{definition}

\newcommand{\projn}[1]{(#1)_{1\dots n}}
\newcommand{\projnname}{\projn{-}}
\begin{theorem}[Weak de Finetti for quasi-Borel spaces]
        \label{thm:deFinetti-qbs}
        If $(\alpha,\mu)$ is an exchangeable probability measure on $\XI$,
        then there exists a probability measure $(\beta,\nu)$ in $\Pmonad(\Pmonad(X))$ such that
        for all $n \geq 1$, the measure $(\bind{[\beta,\nu]} \iidn)$ on $\Pmonad(\Xn)$ equals $\Pmonad(\projnname)(\alpha,\mu)$
        when considered as a measure on the 
        product measurable space $(\Xn,\bigotimes_{i = 1}^n\qbtosig {\qb X})$. 
        (Here $\projnname\colon \XI\to X^n$ is $\projn x\defeq (x_1,\dots,x_n)$.)
\end{theorem}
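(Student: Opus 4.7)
The plan is to run the classical de Finetti argument on the standard Borel sample space $(\RR,\mu)$, not on the quasi-Borel space $X$ where such an argument fails, and then transport the resulting mixing distribution into $\Pmonad(\Pmonad(X))$ via the randomization lemma. The Dubins--Freedman obstruction disappears because in $\QBS$ all randomness flows through the well-behaved space $\RR$: the target $X$ need never be standard Borel for our disintegration to exist.

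Write $\alpha_i=\pi_i\circ\alpha\colon\RR\to X$; each is a $\QBS$-morphism and hence measurable into $(X,\qbtosig{\qb X})$. The pushforward $\alpha_*\mu$ on $(\XI,\bigotimes_{i\ge 1}\qbtosig{\qb X})$ is classically exchangeable. Let $\mathcal S$ denote the exchangeable (permutation-invariant) sub-$\sigma$-algebra of $\bigotimes_{i\ge 1}\qbtosig{\qb X}$, and let $\mathcal E=\inv{\alpha}(\mathcal S)\subseteq\sigalg\RR$. Because $(\RR,\sigalg\RR,\mu)$ is standard Borel, $\mu$ admits a regular conditional disintegration $\{\mu_r\}_{r\in\RR}$ over $\mathcal E$; by Prop.~\ref{prop:randomization} this can be realized as a measurable function $g\colon\RR\to\Giry(\RR)$ with $g(r)=\mu_r$.

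The key step is to adapt Austin's martingale/ergodic argument~\cite{Austin-IISC13} on $(\RR,\mathcal E,\mu)$ (rather than on $(\XI,\mathcal S,\alpha_*\mu)$) to conclude that, for $\mu$-almost every $r$, the random variables $\alpha_1,\alpha_2,\dots$ on $(\RR,\mu_r)$ are independent and identically distributed with common marginal $(\alpha_1)_*\mu_r$ on $(X,\qbtosig{\qb X})$; concretely, for every $n\ge 1$ and every $A_1,\dots,A_n\in\qbtosig{\qb X}$,
\[
\mu_r\!\Big(\textstyle\bigcap_{i=1}^n\inv{\alpha_i}(A_i)\Big)=\prod_{i=1}^n\mu_r\!\big(\inv{\alpha_1}(A_i)\big)\quad\text{for $\mu$-a.e.\ $r$.}
\]
All the measure-theoretic machinery (tail $\sigma$-algebra, Hewitt--Savage 0--1 law, convergence of empirical averages against $\qbtosig{\qb X}$-test functions) is carried out on the standard Borel space $(\RR,\mu)$, so the nature of $X$ is irrelevant. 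Now set $\nu=\mu$ and $\beta(r)=[\alpha_1,\mu_r]$; this $\beta$ is a morphism $\RR\to\Pmonad(X)$ because it has exactly the form $r\mapsto[\alpha_1,g(r)]$ required by $\qb{\Pmonad(X)}$, with $\alpha_1\in\qb X$ and $g$ as above, hence $(\beta,\nu)\in\Pmonad(\Pmonad(X))$. Unfolding $\bind{[\beta,\nu]}{\iidn}$ by the monad definitions and evaluating on a rectangle $A_1\times\cdots\times A_n$ gives $\int_\RR\prod_i\mu_r(\inv{\alpha_1}(A_i))\,\dd\mu(r)$, which by the displayed identity and the law of total probability equals $\mu(\bigcap_i\inv{\alpha_i}(A_i))=\Pmonad(\projnname)(\alpha,\mu)(A_1\times\cdots\times A_n)$; rectangles generate the product $\sigma$-algebra, so the two measures agree there.

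The main obstacle is the conditional-i.i.d.\ claim. Dubins and Freedman show it fails if one tries to disintegrate $\alpha_*\mu$ intrinsically on $\XI$, because the space of probability measures on a pathological $(X,\qbtosig{\qb X})$ need not be rich enough to support a mixing distribution as a proper probability measure. The remedy peculiar to quasi-Borel spaces is to perform the ergodic decomposition on $(\RR,\mu)$ instead, where standard Borel tools apply unreservedly; the push-through by $\alpha$ and the randomization lemma then package the answer as a bona fide element of $\Pmonad(\Pmonad(X))$, thereby exchanging the classical dependence on a well-behaved target for the quasi-Borel dependence on a well-behaved source.
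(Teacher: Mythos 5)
Your proposal is correct in outline and rests on the same quasi-Borel-specific insight as the paper: because every measure in $\Pmonad(\XI)$ draws its randomness from $\RR$, the disintegration can be carried out on the standard Borel space $(\RR,\mu)$, and the resulting measurable kernel $g\colon\RR\to\Giry(\RR)$ is then packaged as $\beta(r)=[\alpha_1,g(r)]$, which is exactly the shape required by $\qb{\Pmonad(X)}$; this matches the paper's choice $\gamma=\lambda r.\,\alpha(r)_1$, $\xi=\mu$, and a conditional probability kernel $k$, after the statement has been reduced to a purely measure-theoretic identity (Lemma~\ref{lemma:deFinetti-qbs:paraphrase}). Where you genuinely diverge is in how conditional i.i.d.-ness is obtained: you condition on the pullback $\inv{\alpha}(\mathcal S)$ of the exchangeable $\sigma$-algebra and appeal to the classical reverse-martingale/Hewitt--Savage machinery, whereas the paper conditions on the $\sigma$-algebra generated by the even-indexed coordinates $\alpha_e$ and proves directly, from exchangeability, that the odd coordinates are conditionally i.i.d.\ given the evens (Lemmas~\ref{lemma:deFinetti-qbs:odd}, \ref{lemma:deFinetti-qbs:marginal} and~\ref{lemma:deFinetti-qbs:independence}, the independence step being an induction combined with an $L^2$-projection argument); your route is shorter by citation, the paper's is self-contained and uses only finitely many conditional-expectation identities with no limit theorems. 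Three caveats. First, the fact you need for $g$ is not Prop.~\ref{prop:randomization} (randomization of a kernel via the uniform distribution) but the existence of regular conditional probability kernels for the standard Borel space $\RR$ given a sub-$\sigma$-algebra, which is what the paper invokes for Eq.~\eqref{eqn:deFinetti-qbs:0}. Second, the phrase ``for $\mu$-a.e.\ $r$ the $\alpha_i$ are i.i.d.\ under $\mu_r$'' must be read, as in your ``concretely'' clause, with the null set depending on $n$ and $A_1,\dots,A_n$: a uniform null set would amount to a regular conditional distribution on $(X,\qbtosig{\qb X})$, which is exactly what Dubins--Freedman rule out in general, and the rectangle-by-rectangle computation at the end does not need it. Third, the step ``adapt Austin's martingale/ergodic argument'' is the mathematical heart of the theorem---it is precisely where the paper spends Lemmas~\ref{lemma:deFinetti-qbs:marginal} and~\ref{lemma:deFinetti-qbs:independence}---so a complete write-up would have to either prove it or cite a precise classical statement of fixed-sets conditional i.i.d.-ness that is valid for arbitrary (non-standard-Borel) state spaces; such statements do exist, so this is a gap of exposition rather than a step that fails.
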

In the theorem, $(\beta,\nu)$ represents a random variable that has a probability measure on $X$
as its value. The theorem says that (every finite prefix of) a sample sequence from $(\alpha,\mu)$ can be generated
by first sampling a probability measure on~$X$ according to $(\beta,\nu)$, then generating
independent $X$-valued samples from the measure, and finally forming a sequence with these samples.

We call the theorem \emph{weak} for two reasons. First, the $\sigma$-algebra 
$\qbtosig {\qb \Xn}$ includes the product $\sigma$-algebra $\bigotimes_{i = 1}^n \qbtosig {\qb X}$,
but we do not know that they are equal; two different probability 
measures in $\Pmonad(\Xn)$ may induce the same measure on $(\Xn,\bigotimes_{i = 1}^n \qbtosig {\qb X})$,
although they always induce different measures on $(\Xn,\qbtosig {\qb \Xn})$. In the theorem,
we equate such measures, which lets us use a standard
technique for proving the equality of measures on product $\sigma$-algebras. Second, we are unable to
construct a version of $\iidn$ for infinite sequences, i.e.\ a morphism $\Pmonad(X) \to \Pmonad(\XI)$
implementing the independent identically-distributed random sequence. The theorem is stated only for finite prefixes. 

The rest of this section provides an overview of our proof of Theorem~\ref{thm:deFinetti-qbs}.
The starting point is to unpack definitions in the theorem, especially those related to quasi-Borel spaces, and to rewrite the statement of the theorem purely in terms of standard measure-theoretic notions.
\begin{lemma}
        \label{lemma:deFinetti-qbs:paraphrase}
        Let $(\alpha,\mu)$ be an exchangeable probability measure on $\XI$. Then,
        the conclusion of Theorem~\ref{thm:deFinetti-qbs} holds if and only if
        there exist a probability 
        measure $\xi \in G(\RR)$, a measurable function
        $k \colon \RR \to G(\RR)$, and $\gamma \in \qb X$ 
        such that for all $n \geq 1$ and all $U_1,\ldots,U_n \in \qbtosig {\qb X}$,
        \begin{multline*}
                \int_{r \in \RR} \left(\prod_{i=1}^n [\alpha(r)_i \in U_i]\right)\, \dd\mu 
                \\
                {}
                = 
                \int_{r \in \RR} \prod_{i = 1}^n \left(\int_{s \in \RR} \left[\gamma(s) \in U_i\right] \dd (k(r))\right)\dd\xi\text.  
        \end{multline*}
        Here 
        we express the domain of integration and the integrated variable explicitly to avoid confusion.
\end{lemma}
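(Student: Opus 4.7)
The plan is to reduce the lemma to an integral identity on measurable rectangles via a $\pi$--$\lambda$ argument, after unfolding the monadic bind $\bind{[\beta,\nu]}{\iidn}$ into an explicit integral.

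First, I would exploit the structural description of $\qb{\Pmonad(X)}$ from \S\ref{sec:giry}: every $\beta \in \qb{\Pmonad(X)}$ has the form $\beta(r)=[\gamma,k(r)]$ for some $\gamma\in\qb X$ and measurable $k\colon\RR\to\Giry(\RR)$, and conversely any such triple determines an element of $\qb{\Pmonad(X)}$. Thus specifying $(\beta,\nu)\in\Pmonad(\Pmonad(X))$ as in Theorem~\ref{thm:deFinetti-qbs} is equivalent to specifying $(\gamma,k,\xi)$ as in the lemma, with $\xi=\nu$.

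Next, I would unfold $\bind{[\beta,\nu]}{\iidn}$. Applying the bind of $\Pmonad$ together with the definition of $\iidn$ gives $\bind{[\beta,\nu]}{\iidn} = [(\prod_{i=1}^n\gamma)\circ\iotan,\ \bindG{\nu}{g}]$, where $g(r)=(\inviotan)_*\bigotimes_{i=1}^n k(r)$. Pushing forward along the opfunctor map $l_{\Xn}$ from Eq.~\eqref{eqn:opfunctor-map} to $(\Xn,\bigotimes_{i=1}^n\qbtosig{\qb X})$ and evaluating on a rectangle $U_1\times\cdots\times U_n$, the isomorphism $\iotan$ and its inverse pushforward cancel; Fubini then reduces the value to $\int_{r\in\RR}\prod_{i=1}^n k(r)(\inv\gamma(U_i))\,\dd\nu$, which is precisely the right-hand side of the lemma's integral identity. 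On the other side, $\Pmonad(\projnname)(\alpha,\mu)$ is represented by $(\projnname\circ\alpha,\mu)$ by Prop.~\ref{prop:giryproperties}(1), and its value on the same rectangle equals the left-hand side.

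Finally, because measurable rectangles form a $\pi$-system generating the product $\sigma$-algebra $\bigotimes_{i=1}^n\qbtosig{\qb X}$ and both measures are finite, Dynkin's $\pi$--$\lambda$ theorem equates agreement on all rectangles with equality of measures on the whole product $\sigma$-algebra; both directions of the lemma follow simultaneously. The main obstacle I expect is monadic bookkeeping: carefully matching the $\Pmonad$-bind to the $\Giry$-bind, and checking that the auxiliary isomorphism $\iotan$ appearing in $\iidn$ truly drops out of the final expression, which is guaranteed by Prop.~\ref{prop:change-randomsource} since the representative $[\gamma,k(r)]$ of $\beta(r)$ does not depend on the choice of $\iotan$. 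Once that is in place, the remainder is an exercise in Fubini and the definition of product measure.
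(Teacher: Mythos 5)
Your proposal is correct and follows essentially the same route as the paper's own proof: unpack $(\beta,\nu)\in \Pmonad(\Pmonad(X))$ into a triple $(\gamma,k,\nu)$ via the structure of $\qb{\Pmonad(X)}$, unfold $\bind{[\beta,\nu]}{\iidn}$ so that $\iotan$ cancels, and reduce equality of the induced measures on $(\Xn,\bigotimes_{i=1}^n\qbtosig{\qb X})$ to agreement on measurable rectangles, which is exactly the stated integral identity. The only cosmetic difference is that you spell out the rectangle-generation step via Dynkin's $\pi$--$\lambda$ theorem, which the paper simply invokes as a standard fact.
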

\shortversion{
\begin{proof}
        Let $(\alpha,\mu)$ be an exchangeable probability measure on~$\XI$. We 
        unpack definitions in the conclusion of Theorem~\ref{thm:deFinetti-qbs}. 
        The first definition to unpack is the notion of probability measure in $\Pmonad(\Pmonad(X))$. 
        Here are the crucial facts that enable this unpacking. First, for every probability measure 
        $(\beta,\nu)$ on $\Pmonad(X)$, there exist a function $\gamma \colon \RR \to X$ in $\qb X$ 
        and a measurable $k \colon \RR \to G(\RR)$ such that $\beta(r) = [\gamma,k(r)]$
        for all $r \in \RR$. Second, conversely, for a function $\gamma \in \qb X$, 
        a measurable $k \colon \RR \to G(\RR)$, and a probability measure $\nu \in G(\RR)$,
        the function $(\lambda r.\,[\gamma,k(r)],\nu)$ 
        is a probability measure in $\Pmonad(\Pmonad(X))$. Thus, we can look for $(\gamma,k,\nu)$ in 
        the conclusion of the theorem instead of $(\beta,\nu)$.  
        
        The second is the definition of $\bind{[\beta,\nu]}{\iidn}$. Using 
        $(\gamma,k,\nu)$ instead of $(\beta,\nu)$, we find that
        $\bind{[\beta,\nu]}\iidn$ is the measure $[(\prod_{i=1}^n\gamma)\circ \iotan,\;(\inviotan)_*\,(\bind\nu{\lambda r.\,\bigotimes_{i=1}^n k(r)})]$. 
        
        Recall that two measures $p$ and $q$ on the product space 
        $(\Xn,\bigotimes_{i = 1}^n X)$ are 
        equivalent when
        $p(U_1\times\cdots\times U_{n})$ equals $q(U_1\times\cdots\times U_{n})$
        for all $U_1,\ldots, U_n\in \qbtosig{\qb X}$. Thus we must show that
        $(\projnname \circ \alpha)_*\mu)(U_1\times\ldots\times U_{n})$ is equal to $\big((\prod_{i = 1}^n\gamma)_*\,(\bind\nu{\lambda r.\,\bigotimes_{i = 1}^nk(r)})\big)(U_1\times\ldots\times U_{n})$. 
        This equation is equivalent to the one in the statement of the lemma with $\xi= \nu$. 
\end{proof}
}
\longversion{
\begin{proof}
        Let $(\alpha,\mu)$ be an exchangeable probability measure on $\XI$. We 
        unpack various definitions in the conclusion of Theorem~\ref{thm:deFinetti-qbs}. 
        This semantics-preserving rewriting and some post-processing will give us 
        the equivalence claimed in the lemma.
        
        The first definition to unpack is the notion of probability measure in $\Pmonad(\Pmonad(X))$. 
        Here are the crucial facts that enable this unpacking. First, for every probability measure 
        $(\beta,\nu)$ on $\Pmonad(X)$, there exist a function $\gamma \colon \RR \to X$ in $\qb X$ 
        and a measurable $k \colon \RR \to G(\RR)$ such that for all $r \in \RR$, 
        \[ 
                \beta(r) = [\gamma,k(r)]\text.  
        \] 
        Second, conversely, for a function $\gamma \colon \RR \to X$ in $\qb X$, 
        a measurable $k \colon \RR \to G(\RR)$, and a probability measure $\nu$ on~$\RR$, 
        \[ 
                (\lambda r.\,[\gamma,k(r)],\nu) 
        \] 
        is a probability measure in $\Pmonad(\Pmonad(X))$. Thus, we can look for $(\gamma,k,\nu)$ in 
        the conclusion of the theorem instead of $(\beta,\nu)$.  
        
        The second is the definition of $\bind{[\beta,\nu]}{\iidn}$. If we do this and use 
        $(\gamma,k,\nu)$ instead of $(\beta,\nu)$, we have
        \begin{multline*} 
                \textstyle{(\bind{[\beta,\nu]}\iidn)}\\
                {} = \textstyle{[(\prod_{i=1}^n\gamma)\circ \iotan,\;(\inviotan)_*\,(\bind\nu{\lambda r.\,\bigotimes_{i=1}^n k(r)})]} 
        \end{multline*}
        
        Now, recall that two measures $p$ and $q$ on the product space 
        $(\Xn,\bigotimes_{i = 1}^n X)$ are 
        equivalent if and only if for all $U_1,\ldots, U_n\in \qbtosig{\qb X}$, 
        \[ 
                \textstyle {p(U_1\times\ldots\times U_{n})} = \textstyle{q(U_1\times\ldots\times U_{n})\text.} 
        \]
        Thus we must show that
        \begin{multline*} 
                \textstyle{(\projnname \circ \alpha)_*\mu)(U_1\times\ldots\times U_{n})} 
                \\ 
                {} = \textstyle{\big((\prod_{i = 1}^n\gamma)_*\,(\bind\nu{\lambda r.\,\bigotimes_{i = 1}^nk(r)})\big)(U_1\times\ldots\times U_{n})} 
        \end{multline*}
        This equation is equivalent to the one in the statement of the lemma,
        where $\xi\defeq \nu$. 
\end{proof}
}

Thus we just need to show how to construct $\xi$, $k$ and $\gamma$ in Lemma~\ref{lemma:deFinetti-qbs:paraphrase}
from a given exchangeable probability measure $(\alpha,\mu)$ on $\XI$.
Constructing $\xi$ and $\gamma$ is easy:
\[
        \xi \defeq \mu,\qquad
        \gamma \defeq \lambda r.\,\alpha(r)_1\text.
\]
Note that these definitions type-check: $\xi = \mu \in G(\RR)$,
and $\gamma \in \qb X$ because $\alpha \in \qb \XI$ and the first projection
$(-)_1$ is a morphism $\XI \to X$. 

Constructing $k$ is not that easy. We need to use the fact that
$\mu$ is defined over $\RR$, a standard Borel space. This fact itself
holds because all probability measures on quasi-Borel spaces use $\RR$
as their source of randomness. Define measurable functions
$\alpha_e,\alpha_o \colon (\RR,\sigalg \RR) \to (\XI,\qbtosig {\qb \XI})$ by
\[
\alpha_e(r)_i \defeq \alpha(r)_{2i}\quad\text{(even),}\qquad
\alpha_o(r)_i \defeq \alpha(r)_{2i-1}\quad\text{(odd)}\text.
\]
Since $\mu$ is a probability measure on $\RR$,  there exists a measurable function
$k' \colon (\XI,\qbtosig {\qb \XI}) \to (G(\RR),\sigalg {G(\RR)})$,
called a conditional probability kernel, such that for all measurable $f \colon \RR \to \RR$
and $U \in \inv{(\alpha_e)}(\qbtosig {\qb \XI})$,
\begin{equation}
        \label{eqn:deFinetti-qbs:0}
        \int_{r \in U} f(r) \,\dd\mu = \int_{r \in U} \left(\int_\RR f \,\dd((k' \circ \alpha_e)(r))\right) \dd\mu\text.
\end{equation}
Define $k \defeq k' \circ \alpha_e$.

Our $\xi$, $k$ and $\gamma$ satisfy the requirement in Lemma~\ref{lemma:deFinetti-qbs:paraphrase}
because of the following three properties, which follow from exchangeability of
$(\alpha,\mu)$.
\begin{lemma} 
        \label{lemma:deFinetti-qbs:odd}
        For all $n \geq 1$ and all $U_1,\ldots,U_n \in \qbtosig {\qb X}$,
        \[
                \int_{r \in \RR} \left(\prod_{i=1}^n [\alpha(r)_i \in U_i]\right)\, \dd\mu 
                =
                \int_{r \in \RR} \left(\prod_{i=1}^n [\alpha_o(r)_i \in U_i]\right)\, \dd\mu \text.
        \]
\end{lemma}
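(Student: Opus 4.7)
The plan is to reduce the claim directly to the exchangeability of $(\alpha,\mu)$ by choosing a well-chosen permutation. First I would construct a permutation $\pi$ of the positive integers such that $\pi(i) = 2i - 1$ for $i = 1, \dots, n$; such a $\pi$ exists because the map $i \mapsto 2i-1$ on $\{1,\dots,n\}$ is injective, and any bijection between the cofinite complements on the source and target side extends it to a permutation of $\NN$. Note that with this choice, $\alpha_\pi(r)_i = \alpha(r)_{\pi(i)} = \alpha(r)_{2i-1} = \alpha_o(r)_i$ for all $i \leq n$ and all $r$.

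Next, I would rewrite both sides as push-forward measures of a cylinder set. Define
\[
C \defeq \{x \in \XI \mid x_i \in U_i \text{ for all } i = 1,\dots,n\}.
\]
Since each projection $\pi_i \colon \XI \to X$ is a morphism of quasi-Borel spaces, it is also measurable with respect to the induced $\sigma$-algebras (by the first part of the discussion after Def.~\ref{def:morphism}), and therefore each $\inv{\pi_i}(U_i) \in \qbtosig{\qb \XI}$. Hence $C \in \qbtosig{\qb \XI}$. Now, by the definition of integration on a quasi-Borel space (Eq.~\eqref{eqn:def-qbs-integration}) and the characterization of $[\alpha,\mu]$ via its push-forward,
\[
\int_{r \in \RR}\!\!\prod_{i=1}^n [\alpha(r)_i \in U_i]\, \dd\mu = (\alpha_*\mu)(C),
\]
and analogously $\int_{r \in \RR}\!\prod_{i=1}^n [\alpha_o(r)_i \in U_i]\, \dd\mu = ((\alpha_\pi)_*\mu)(C)$, using that $\alpha_\pi$ and $\alpha_o$ agree on the first $n$ coordinates.

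Finally, I would invoke exchangeability: by assumption $[\alpha,\mu] = [\alpha_\pi,\mu]$ in $\Pmonad(\XI)$, which means exactly that the push-forward measures $\alpha_*\mu$ and $(\alpha_\pi)_*\mu$ on $(\XI,\qbtosig{\qb \XI})$ coincide. Evaluating both at the measurable set $C$ yields the desired equality.

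The main (and essentially only) subtlety I anticipate is the verification that $C$ lies in $\qbtosig{\qb \XI}$ and that the integrals correctly decompose as push-forwards of cylinders; this is where one must carefully appeal to the fact that morphisms between quasi-Borel spaces descend to measurable functions on the associated measurable spaces, so that the $\sigma$-algebra $\qbtosig{\qb \XI}$ contains the cylinder sets generated by coordinate-wise measurable subsets. Once this is in place, the permutation trick combined with exchangeability gives the result immediately, with no further analytic estimates needed.
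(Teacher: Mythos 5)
Your proposal is correct and follows essentially the same route as the paper: pick a permutation $\pi$ with $\pi(i)=2i-1$ for $1\le i\le n$, invoke exchangeability to get $[\alpha,\mu]=[\alpha_\pi,\mu]$, and conclude since $\alpha_\pi$ agrees with $\alpha_o$ on the first $n$ coordinates. The only difference is that you spell out the bookkeeping (the cylinder set $C\in\qbtosig{\qb\XI}$ and the identification of the integrals with push-forward measures of $C$) that the paper's proof leaves implicit in ``from which the statement follows,'' and these verifications are accurate.
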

\begin{proof}
        Consider $n \geq 1$ and $U_1,\ldots,U_n \in \qbtosig {\qb X}$. 
        Pick a permutation $\pi$ on positive integers such that 
        $\pi(i) = 2i-1$ for all integers $1 \leq i \leq n$.
        Then, $[\alpha,\mu] = [\alpha_\pi,\mu]$ 
        by the exchangeability of $(\alpha,\mu)$. Thus
        \begin{align*}
                \int_{r \in \RR} \left(\prod_{i=1}^n [\alpha(r)_i \in U_i]\right) \dd\mu 
                & =
                \int_{r \in \RR} \left(\prod_{i=1}^n [\alpha_\pi(r)_i \in U_i]\right) \dd\mu 
                \text,
        \end{align*}
        from which the statement follows.
\end{proof}
\begin{lemma} 
        \label{lemma:deFinetti-qbs:marginal}
        For all $U \in \qbtosig {\qb X}$ and all $i,j \geq 1$,
        \[
                \int_{s \in \RR} \left[\alpha_o(s)_i \in U\right] \dd(k(r)) 
                = 
                \int_{s \in \RR} \left[\alpha_o(s)_j \in U\right] \dd(k(r))
        \]
        holds for $\mu$-almost all $r \in \RR$.
\end{lemma}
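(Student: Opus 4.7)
The plan is to interpret each side of the claimed equation as a version of a conditional probability, and then appeal to exchangeability to identify the unconditional joint laws behind them. Concretely, set $A_i \defeq \{s \in \RR \mid \alpha_o(s)_i \in U\}$, which lies in $\sigalg \RR$ because $\alpha_o(-)_i \colon \RR \to X$ factors through the $(2i{-}1)$-th projection morphism $\XI \to X$. Then $\int_{s\in\RR} [\alpha_o(s)_i \in U]\,\dd(k(r)) = k(r)(A_i) = k'(\alpha_e(r))(A_i)$, which is $\sigma(\alpha_e)$-measurable because both $k'$ and set-evaluation on $G(\RR)$ are measurable. Applying the defining property~\eqref{eqn:deFinetti-qbs:0} of $k = k' \circ \alpha_e$ with $f = \mathbf{1}_{A_i}$ gives $\int_V k(r)(A_i)\,\dd\mu = \mu(V \cap A_i)$ for every $V \in \inv{\alpha_e}(\qbtosig{\qb{\XI}})$, and the analogous identity for $j$.

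Next, I would invoke exchangeability. Choose any permutation $\pi$ of the positive integers that transposes $2i{-}1$ and $2j{-}1$ and fixes every other index; in particular $\pi$ fixes all even indices, so $(\alpha_\pi)_e = \alpha_e$ while $(\alpha_\pi)_o(r)_i = \alpha_o(r)_j$. Exchangeability yields $\alpha_*\mu = (\alpha_\pi)_*\mu$, so pushing forward along the morphism $\XI \to \XI \times X$ that sends $x$ to its even subsequence paired with $x_{2i-1}$, and evaluating on a rectangle $W \times U$ with $W \in \qbtosig{\qb{\XI}}$, produces
\[
\mu(V \cap A_i) = \mu(V \cap A_j) \qquad \text{for every } V = \inv{\alpha_e}(W).
\]
Combining this with the previous paragraph, $r \mapsto k(r)(A_i)$ and $r \mapsto k(r)(A_j)$ are two $\sigma(\alpha_e)$-measurable functions whose $\mu$-integrals agree on every set in $\sigma(\alpha_e)$, so by uniqueness of conditional expectation they coincide $\mu$-almost everywhere, which is exactly the stated conclusion.

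The main obstacle, such as it is, lies in the bookkeeping rather than any deep fact: the permutation $\pi$ must fix every even index \emph{verbatim} (not merely permute even indices among themselves), so that the conditioning data $\alpha_e$ is literally preserved and both integrals genuinely become versions of the same conditional probability. The remaining checks---measurability of $r \mapsto k(r)(A_i)$, measurability of $A_i$ itself, and the rewriting as a product pushforward on $\XI \times X$---are routine given Prop.~\ref{prop:adjunction} and the properties of the Giry monad collected in Prop.~\ref{prop:giryproperties}.
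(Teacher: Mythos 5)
Your proposal is correct and follows essentially the same route as the paper's proof: both sides are recognized as $\sigma(\alpha_e)$-measurable versions of conditional probabilities via the defining property of $k=k'\circ\alpha_e$, exchangeability (via a permutation moving $2j{-}1$ to $2i{-}1$ while fixing the even indices) identifies the two joint laws $\mu(V\cap A_i)=\mu(V\cap A_j)$ on $\sigma(\alpha_e)$, and almost-sure uniqueness of conditional expectation concludes. Your write-up merely makes explicit the transposition and the measurability bookkeeping that the paper leaves implicit.
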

\shortversion{
\begin{proof}
        Consider a measurable set ${U \in \qbtosig {\qb X}}$ and ${i,j \geq 1}$.
        The function ${\lambda r.\;\int_{s \in \RR} \left[\alpha_o(s)_i \in U\right] \dd(k(r))}:%
        {\RR\to \RR}$ is a conditional expectation of the indicator function ${\lambda s.\,[\alpha_o(s)_i \in U]}$
        with respect to the probability measure $\mu$ and the $\sigma$-algebra 
        generated by the measurable function $\alpha_e \colon \RR \to (\XI,\qbtosig {\qb \XI})$.
        By the almost-sure uniqueness of conditional expectation, it suffices to show that
        $\lambda r.\, \int_{s \in \RR} \left[\alpha_o(s)_j \in U\right] \dd(k(r))$
        is also a conditional expectation of $\lambda s.\,[\alpha_o(s)_i \in U]$
        with respect to $\mu$ and $\alpha_e$. Pick a measurable subset $V \in \qbtosig {\qb \XI}$. Then:
        \begin{align*}
                & \int_{r \in \RR} \left[\alpha_e(r) \in V\right] \cdot 
                \left(\int_{s \in \RR} \left[\alpha_o(s)_j \in U\right] \dd(k(r))\right)
                \dd\mu
                \\
                & \qquad \qquad \qquad {} =
                \int_{r \in \RR} \left[\alpha_o(r)_j \in U \wedge \alpha_e(r) \in V\right] \dd\mu
                \\
                & \qquad \qquad \qquad {} =
                \int_{r \in \RR} \left[\alpha_o(r)_i \in U \wedge \alpha_e(r) \in V\right] \dd\mu\text.
        \end{align*}
        The first equation holds because the function 
        $\lambda r.\, \int_{s \in \RR} \left[\alpha_o(s)_j \in U\right] \dd(k(r))$ 
        is a conditional expectation of $\lambda s.\,[\alpha_o(s)_j \in U]$ with respect to $\mu$ and $\alpha_e$. 
        The second equation follows from the exchangeability of $(\alpha,\mu)$. We have just shown that 
        $\lambda r.\, \int_{s \in \RR} \left[\alpha_o(s)_j \in U\right] \dd(k(r))$ is a conditional 
        expectation of $\lambda s.\,[\alpha_o(s)_i \in U]$
        with respect to $\mu$ and $\alpha_e$. 
\end{proof}
}
\longversion{
\begin{proof}
        Consider a measurable subset $U \in \qbtosig {\qb X}$ and $i,j \geq 1$.
        We remind the reader that when viewed as a function on $r \in \RR$,
        \[
                \int_{s \in \RR} \left[\alpha_o(s)_i \in U\right] \dd(k(r)) 
        \]
        is a conditional expectation of the indicator function $\lambda s.\,[\alpha_o(s)_i \in U]$
        with respect to the measure $\mu$ and the $\sigma$-algebra 
        generated by the measurable function $\alpha_e \colon \RR \to (\XI,\qbtosig {\qb \XI})$.
        By the almost-sure uniqueness of conditional expectation, it suffices to show that
        \[ 
                \lambda r.\, \int_{s \in \RR} \left[\alpha_o(s)_j \in U\right] \dd(k(r))
        \]
        is also a conditional expectation of $\lambda s.\,[\alpha_o(s)_i \in U]$
        with respect to $\mu$ and $\alpha_e$. Pick a measurable subset $V \in \qbtosig {\qb \XI}$. Then,
        \begin{align*}
                & \int_{r \in \RR} \left[\alpha_e(r) \in V\right] \cdot 
                \left(\int_{s \in \RR} \left[\alpha_o(s)_j \in U\right] \dd(k(r))\right)
                \dd\mu
                \\
                & \qquad \qquad \qquad {} =
                \int_{r \in \RR} \left[\alpha_o(r)_j \in U \wedge \alpha_e(r) \in V\right] \dd\mu
                \\
                & \qquad \qquad \qquad {} =
                \int_{r \in \RR} \left[\alpha_o(r)_i \in U \wedge \alpha_e(r) \in V\right] \dd\mu\text.
        \end{align*}
        The first equation holds because the function
        \[
                \lambda r.\, \int_{s \in \RR} \left[\alpha_o(s)_j \in U\right] \dd(k(r)) 
        \]
        is a conditional expectation of $\lambda s.\,[\alpha_o(s)_j \in U]$ with respect to $\mu$ and $\alpha_e$. 
        The second equation follows from the exchangeability of $(\alpha,\mu)$. We have just shown that 
        $\lambda r.\, \int_{s \in \RR} \left[\alpha_o(s)_j \in U\right] \dd(k(r))$ is a conditional 
        expectation of $\lambda s.\,[\alpha_o(s)_i \in U]$
        with respect to $\mu$ and $\alpha_e$. 
\end{proof}
}
\begin{lemma} 
        \label{lemma:deFinetti-qbs:independence}
        For all $n \geq 1$ and all $U_1,\ldots,U_n \in \qbtosig {\qb X}$, 
        \begin{multline*}
                \int_{s \in \RR} \left(\prod_{i = 1}^n\left[\alpha_o(s)_i \in U_i\right]\right) \dd(k(r)) 
                \\
                {}
                = 
                \prod_{i = 1}^n \int_{s \in \RR} \left[\alpha_o(s)_i \in U_i\right] \dd(k(r))
        \end{multline*}
        holds for $\mu$-almost all $r \in \RR$. 
\end{lemma}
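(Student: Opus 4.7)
The plan is to prove the lemma by induction on $n$, following the template of Lemma~\ref{lemma:deFinetti-qbs:marginal}. The base case $n=1$ is immediate. For the inductive step, by the almost-sure uniqueness of conditional expectation and the defining property~\eqref{eqn:deFinetti-qbs:0} of $k$, it suffices to show that for every measurable $V \in \qbtosig{\qb \XI}$,
\begin{multline*}
\int_{r \in \RR} [\alpha_e(r) \in V] \cdot \prod_{i=1}^n [\alpha_o(r)_i \in U_i] \,\dd\mu
\\
=
\int_{r \in \RR} [\alpha_e(r) \in V] \cdot \prod_{i=1}^n \left(\int_{s \in \RR}[\alpha_o(s)_i \in U_i] \,\dd(k(r))\right) \dd\mu.
\end{multline*}

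The key idea is to use exchangeability to push one odd-index factor far away and then average. For any $N > n$, let $\pi_N$ be the permutation swapping odd positions $n$ and $N$ and fixing all other indices; exchangeability of $(\alpha,\mu)$ gives $[\alpha,\mu] = [\alpha_{\pi_N},\mu]$, so the integral $\int [\alpha_e(r)\in V] \prod_{i=1}^{n-1}[\alpha_o(r)_i \in U_i] \cdot [\alpha_o(r)_n \in U_n] \,\dd\mu$ is unchanged when $\alpha_o(r)_n$ is replaced by $\alpha_o(r)_N$. Averaging these identities over $N = n+1,\ldots,n+M$ and applying a conditional strong law of large numbers to the exchangeable sequence $([\alpha_o(\cdot)_N \in U_n])_N$ given $\alpha_e$, the empirical averages converge $\mu$-almost surely to the conditional expectation $\int [\alpha_o(s)_n \in U_n]\,\dd(k(r))$ already identified by Lemma~\ref{lemma:deFinetti-qbs:marginal}. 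Combining this with the inductive hypothesis applied to the first $n-1$ factors, and dominated convergence to interchange limit and integral, yields the required factorization.

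The main obstacle will be justifying the conditional law-of-large-numbers step. This requires a reverse-martingale (or ergodic) argument for the exchangeable odd-indexed subsequence conditional on $\alpha_e$, together with a justification for exchanging limits and integrals. Such tools are classical but must be invoked at the level of the underlying measurable space $(\RR,\sigalg\RR)$, which is possible here precisely because of the design choice that all randomness in quasi-Borel spaces originates in the standard Borel space $\RR$. A subtle point is that the test events $\{\alpha_e(r)\in V\}$ range over $V \in \qbtosig{\qb \XI}$, which in general strictly contains the product $\sigma$-algebra, so the reverse-martingale argument must be conducted using the full measurable structure associated with the quasi-Borel space $\XI$ rather than a na\"ive product $\sigma$-algebra.
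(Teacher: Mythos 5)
Your reduction of the lemma to the identity tested against all sets $V$, and your use of exchangeability to replace $\alpha_o(\cdot)_n$ by $\alpha_o(\cdot)_N$ and average, set up a genuinely different route from the paper's (which never takes limits: it compares conditional expectations with respect to $\sigma(\alpha_e)$ and with respect to the larger $\sigma$-algebra generated by an interleaving $\alpha'$, via an $L^2$-projection argument, and then finishes by a tower-property calculation). However, the step you call a ``conditional strong law of large numbers given $\alpha_e$'' is not an off-the-shelf classical tool; it is where essentially all of the difficulty of the lemma lives, and as stated it is assumed rather than proved. What the reverse-martingale theorem gives you for the exchangeable $0/1$ sequence $Y_N = [\alpha_o(\cdot)_N \in U_n]$ is a.s.\ and $L^1$ convergence of the empirical averages to $Z = E[\,Y_{n+1}\mid\mathcal{E}\,]$, where $\mathcal{E}$ is the exchangeable (equivalently tail) $\sigma$-algebra of the odd subsequence --- not to the $\sigma(\alpha_e)$-conditional expectation $\int_{s\in\RR}[\alpha_o(s)_n\in U_n]\,\dd(k(r))$. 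The cross-moment identity $E[ZW]=E[Y_{n+1}W]$ for bounded $\alpha_e$-measurable $W$ does follow easily from exchangeability, but it is not enough: in your integral the companion factor $\prod_{i<n}[\alpha_o(\cdot)_i\in U_i]$ is \emph{not} $\alpha_e$-measurable, so you cannot simply project $Z$ onto $\sigma(\alpha_e)$. You must additionally show that $Z$ is $\mu$-a.s.\ equal to a $\sigma(\alpha_e)$-measurable function, and that measurability claim is precisely where the conditional independence of the $n$-th odd coordinate from the remaining odd coordinates given the evens is hiding; it carries the same weight as the paper's key equation \eqref{eqn:deFinetti:3}.

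The gap is repairable within your strategy, but only by doing comparable work: for instance, show that the odd and even empirical averages have the same a.s.\ limit by a second-moment computation (exchangeability gives $E[Y_jY_{j'}]$ constant over $j\neq j'$, whence $E[(Z_{\mathrm{odd}}-Z_{\mathrm{even}})^2]=0$), which makes $Z$ evens-measurable up to null sets; after that identification, your use of the induction hypothesis, Lemma~\ref{lemma:deFinetti-qbs:marginal}, the defining property \eqref{eqn:deFinetti-qbs:0} of $k$, and dominated convergence does go through. Note that this repair is again an exchangeability-driven $L^2$ argument, so the two proofs end up sharing their technical core --- the paper just packages it without limits, via the auxiliary kernel $k'$ and a norm comparison of the two conditional expectations. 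Your closing caution about $\qbtosig{\qb\XI}$ strictly containing the product $\sigma$-algebra is correct but not an obstacle: the regular conditional kernels into $G(\RR)$ exist because $\mu$ lives on the standard Borel space $\RR$, whatever sub-$\sigma$-algebra one conditions on.
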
 
\shortversion{
\begin{proof}[Proof notes]
        Use induction on $n \geq 1$. There is nothing to
        prove for the base case $n = 1$. To handle the inductive case, assume that $n > 1$.
        Let $U_1,\ldots,U_n$ be subsets in $\qbtosig {\qb X}$.
        Define a function $\alpha' \colon \RR \to \XI$ as follows:
        \[
                \alpha'(r)_i = 
                \left\{\begin{array}{ll}
                        \alpha_o(r)_i & \mbox{if $1 \leq i \leq n-1$}
                        \\
                        \alpha_e(r)_{i-n+1} & \mbox{otherwise.}
                \end{array}\right.
        \]
        Then, $\alpha'$ is in $\qb \XI$, so that $\alpha'$ is a measurable
        function $(\RR,\sigalg \RR) \to (\XI,\sigalg {\qb \XI})$. Thus there exists a measurable 
        $k'_0 \colon (\XI,\qbtosig {\qb \XI}) \to (G(\RR),\sigalg {G(\RR)})$, the
        conditional probability kernel, such that
        for all measurable functions $f \colon \RR \to \RR$, 
        $\lambda r.\, \int_\RR f \,\dd((k'_0 \circ \alpha')(r))$
        is a conditional expectation of $f$ with respect to $\mu$
        and the $\sigma$-algebra generated by~$\alpha'$.  
        Define $k' \colon \RR \to G(\RR) = k'_0 \circ \alpha'$.
        Then $k'$ is measurable because so are $k'_0$ and $\alpha'$. 
        More importantly, for $\mu$-almost all $r \in \RR$, 
        \begin{equation}
                \int_{s \in \RR} \left[\alpha_o(s)_n \in U_n\right] \dd(k(r))
                =
                \int_{s \in \RR} \left[\alpha_o(s)_n \in U_n\right] \dd(k'(r))\text.
                \label{eqn:deFinetti:3}
        \end{equation}
        The proof of this equality appears in the full version of this paper.
        
        Recall that $k = k_0 \circ \alpha_e$ and $k' = k'_0 \circ \alpha'$ are 
        defined in terms of conditional expectation. Thus, they inherit all the properties
        of conditional expectation. In particular, 
        for $\mu$-almost all $r \in \RR$
        and all measurable $h \colon \RR \to \RR$,
        \begin{align}
                \begin{split}
                & \int_{s \in \RR} \prod_{i = 1}^{n} \left[\alpha_o(s)_i \in U_i\right] \dd(k(r))
                \\
                & \quad {}=
                \int_{s \in \RR} \left(\int_{t \in \RR} \prod_{i = 1}^{n} \left[\alpha_o(t)_i \in U_i\right] \dd(k'(s))\right) 
                \dd(k(r))\text,
                \end{split}
                \label{eqn:deFinetti:4}
                \\[1ex]
                \begin{split}
                & \int_{s \in \RR} \prod_{i = 1}^{n} \left[\alpha_o(s)_i \in U_i\right] \dd(k'(r))
                \\
                & \quad {}=
                \prod_{i = 1}^{n-1} \left[\alpha_o(r)_i \in U_i\right]
                \cdot
                \int_{s \in \RR} \left[\alpha_o(s)_n \in U_n\right] \dd(k'(r))\text,
                \end{split}
                \label{eqn:deFinetti:5}
                \\[1ex]
                \begin{split}
                & \int_{s \in \RR} \left(h(s)
                \cdot \int_{t \in \RR} \left[\alpha_o(t)_n \in U_n\right] \dd(k(s))\right)
                \dd(k(r))
                \\
                & \quad {}=
                \left(\int_{t \in \RR} \left[\alpha_o(t)_n \in U_n\right] \dd(k(r))\right)
                \cdot 
                \left(\int_{s \in \RR} h(s)\, \dd(k(r))\right)\text.
                \end{split}
                \label{eqn:deFinetti:6}
        \end{align}
        Using the assumption \eqref{eqn:deFinetti:3}
        and the properties \eqref{eqn:deFinetti:4}, \eqref{eqn:deFinetti:5} and \eqref{eqn:deFinetti:6},
        we complete the proof of the inductive case as follows: 
        for all subsets $V \in \inv{(\alpha_e)}(\qbtosig {\qb \XI})$,
        \begin{align*}
                &
                \int_{r \in V}
                \int_{s \in \RR} 
                \prod_{i = 1}^{n} \left[\alpha_o(s)_i \in U_i\right] \dd(k(r))\,\dd\mu
                \\
                &
                {} =
                \int_{r \in V}
                \int_{s \in \RR} 
                \int_{t \in \RR} \prod_{i = 1}^{n} \left[\alpha_o(t)_i \in U_i\right] 
                \dd(k'(s))\,\dd(k(r))\,\dd\mu
                \\
                &
                {} =
                \int_{r \in V}
                \int_{s \in \RR}
                \prod_{i = 1}^{n-1} \left[\alpha_o(s)_i \in U_i\right]
                \\*
                &
                \phantom{{} =
                \int_{r \in V}
                \int_{s \in \RR}}
                {} \cdot
                \int_{t\in\RR} \left[\alpha_o(t)_n \in U_n\right] \dd(k'(s))\,
                \dd(k(r))\,
                \dd\mu
                \\
                &
                {} =
                \int_{r \in V}
                \int_{s \in \RR}
                \prod_{i = 1}^{n-1} \left[\alpha_o(s)_i \in U_i\right]
                \\*
                &
                \phantom{
                {} =
                \int_{r \in V}
                \int_{s \in \RR}}
                {}\cdot
                \int_{t \in \RR} \left[\alpha_o(t)_n \in U_n\right] \dd(k(s))\,\dd(k(r))\,\dd\mu
                \\
                &
                {} =
                \int_{r \in V}
                \left(\int_{t \in \RR} \left[\alpha_o(t)_n \in U_n\right] \dd(k(r))\right)
                \\*
                &
                \phantom{
                {} =
                \int_{r \in V}}
                {}
                \cdot
                \left(\int_{s \in \RR} \prod_{i = 1}^{n-1} \left[\alpha_o(s)_i \in U_i\right] \dd(k(r))\right)\,
                \dd\mu
                \\
                &
                {} =
                \int_{r \in V}
                \prod_{i = 1}^n \int_{s \in \RR} \left[\alpha_o(s)_i \in U_i\right] 
                \dd(k(r))\,\dd\mu\text.
        \end{align*}
        The first and the second equalities hold because of \eqref{eqn:deFinetti:4} and \eqref{eqn:deFinetti:5}.
        The third equality uses \eqref{eqn:deFinetti:3}, and the fourth the equality 
        in \eqref{eqn:deFinetti:6}. The fifth follows from the induction hypothesis. Our derivation
        implies that both
                $\lambda r.\,\int_{s \in \RR} \prod_{i = 1}^{n} \left[\alpha_o(s)_i \in U_i\right] \dd(k(r))$
                and
                $\lambda r.\,\prod_{i = 1}^n \int_{s \in \RR} \left[\alpha_o(s)_i \in U_i\right] \dd(k(r))$
        are conditional expectations of the same function with respect to $\mu$ and the same $\sigma$-algebra.
        So, they are equal for $\mu$-almost all inputs~$r$.
\end{proof}
}
\longversion{
\begin{proof}
        We prove the lemma by induction on $n \geq 1$. There is nothing to
        prove for the base case $n = 1$. To handle the inductive case, assume that $n > 1$.
        Let $U_1,\ldots,U_n$ be subsets in $\qbtosig {\qb X}$.
        Define a function $\alpha' \colon \RR \to \XI$ as follows:
        \[
                \alpha'(r)_i = 
                \left\{\begin{array}{ll}
                        \alpha_o(r)_i & \mbox{if $1 \leq i \leq n-1$}
                        \\
                        \alpha_e(r)_{i-n+1} & \mbox{otherwise.}
                \end{array}\right.
        \]
        Then, $\alpha'$ is in $\qb \XI$, so that $\alpha'$ is a measurable
        function $(\RR,\sigalg \RR) \to (\XI,\sigalg {\qb \XI})$. Thus, there exists a measurable 
        $k'_0 \colon (\XI,\qbtosig {\qb \XI}) \to (G(\RR),\sigalg {G(\RR)})$, called
        conditional probability kernel, such that
        for all measurable $f \colon \RR \to \RR$,
        \[
                \lambda r.\, \int_\RR f \,\dd((k'_0 \circ \alpha')(r))
        \]
        is a conditional expectation of $f$ with respect to $\mu$
        and the $\sigma$-algebra generated by $\alpha'$.  Let
        \begin{align*}
                k' & \colon \RR \to G(\RR)
                \\
                k' & = k'_0 \circ \alpha'\text.
        \end{align*}
        The function $k'$ is measurable because so are $k'_0$ and $\alpha'$. 
        At the end of this proof, we will show that for $\mu$-almost all $r \in \RR$, 
        \begin{multline}
                \int_{s \in \RR} \left[\alpha_o(s)_n \in U_n\right] \dd(k(r))
                \\
                =
                \int_{s \in \RR} \left[\alpha_o(s)_n \in U_n\right] \dd(k'(r))\text.
                \label{eqn:deFinetti:3}
        \end{multline}
        For now, just assume that this equation holds and see how this assumption lets
        us complete the proof. 
        
        Recall that $k = k_0 \circ \alpha_e$ and $k' = k'_0 \circ \alpha'$ are 
        defined in terms of conditional expectation. Thus, they inherit all the properties
        of conditional expectation after minor adjustment. In particular, 
        since the $\sigma$-algebra generated by $\alpha'$ is larger than the one
        generated by $\alpha_e$ and it makes $\alpha_i$ measurable for all $1 \leq i \leq (n-1)$,
        we have the following equalities: for $\mu$-almost all $r \in \RR$
        and all measurable functions $h \colon \RR \to \RR$,
        \begin{multline}
                \int_{s \in \RR} \prod_{i = 1}^{n} \left[\alpha_o(s)_i \in U_i\right] \dd(k(r))
                \\
                =
                \int_{s \in \RR} \left(\int_{t \in \RR} \prod_{i = 1}^{n} \left[\alpha_o(t)_i \in U_i\right] \dd(k'(s))\right) 
                \dd(k(r))\text,
                \label{eqn:deFinetti:4}
        \end{multline}
        \begin{multline}
                \int_{s \in \RR} \prod_{i = 1}^{n} \left[\alpha_o(s)_i \in U_i\right] \dd(k'(r))
                \\
                =
                \prod_{i = 1}^{n-1} \left[\alpha_o(r)_i \in U_i\right]
                \cdot
                \int_{s \in \RR} \left[\alpha_o(s)_n \in U_n\right] \dd(k'(r))\text,
                \label{eqn:deFinetti:5}
        \end{multline}
        \begin{multline}
                \int_{s \in \RR} \left(h(s)
                \cdot \int_{t \in \RR} \left[\alpha_o(t)_n \in U_n\right] \dd(k(s))\right)
                \dd(k(r))
                \\
                =
                \left(\int_{t \in \RR} \left[\alpha_o(t)_n \in U_n\right] \dd(k(r))\right)
                \cdot 
                \left(\int_{s \in \RR} h(s)\, \dd(k(r))\right)\text.
                \label{eqn:deFinetti:6}
        \end{multline}
        Using the assumption \eqref{eqn:deFinetti:3}
        and the properties \eqref{eqn:deFinetti:4}, \eqref{eqn:deFinetti:5} and \eqref{eqn:deFinetti:6},
        we complete the proof of the inductive case as follows: 
        for all subsets $V \in \inv{(\alpha_e)}(\qbtosig {\qb \XI})$,
        \begin{align*}
                &
                \int_{r \in V}
                \int_{s \in \RR} 
                \prod_{i = 1}^{n} \left[\alpha_o(s)_i \in U_i\right] \dd(k(r))\,\dd\mu
                \\
                &
                {} =
                \int_{r \in V}
                \int_{s \in \RR} 
                \int_{t \in \RR} \prod_{i = 1}^{n} \left[\alpha_o(t)_i \in U_i\right] 
                \dd(k'(s))\,\dd(k(r))\,\dd\mu
                \\
                &
                {} =
                \int_{r \in V}
                \int_{s \in \RR}
                \prod_{i = 1}^{n-1} \left[\alpha_o(s)_i \in U_i\right]
                \\
                &
                \phantom{{} =
                \int_{r \in V}
                \int_{s \in \RR}}
                {} \cdot
                \int_{t\in\RR} \left[\alpha_o(t)_n \in U_n\right] \dd(k'(s))\,
                \dd(k(r))\,
                \dd\mu
                \\
                &
                {} =
                \int_{r \in V}
                \int_{s \in \RR}
                \prod_{i = 1}^{n-1} \left[\alpha_o(s)_i \in U_i\right]
                \\
                &
                \phantom{
                {} =
                \int_{r \in V}
                \int_{s \in \RR}}
                {}\cdot
                \int_{t \in \RR} \left[\alpha_o(t)_n \in U_n\right] \dd(k(s))\,\dd(k(r))\,\dd\mu
                \\
                &
                {} =
                \int_{r \in V}
                \left(\int_{t \in \RR} \left[\alpha_o(t)_n \in U_n\right] \dd(k(r))\right)
                \\
                &
                \phantom{
                {} =
                \int_{r \in V}}
                {}
                \cdot
                \left(\int_{s \in \RR} \prod_{i = 1}^{n-1} \left[\alpha_o(s)_i \in U_i\right] \dd(k(r))\right)\,
                \dd\mu
                \\
                &
                {} =
                \int_{r \in V}
                \int_{t \in \RR} \left[\alpha_o(t)_n \in U_n\right] \dd(k(r))
                \\
                &
                \phantom{{} = \int_{r \in V}}
                {}\cdot
                \prod_{i = 1}^{n-1} \int_{s \in \RR} \left[\alpha_o(s)_i \in U_i\right] \dd(k(r))\,
                \dd\mu
                \\
                &
                {} =
                \int_{r \in V}
                \prod_{i = 1}^n \int_{s \in \RR} \left[\alpha_o(s)_i \in U_i\right] 
                \dd(k(r))\,\dd\mu\text.
        \end{align*}
        The first and the second equalities hold because of \eqref{eqn:deFinetti:4} and \eqref{eqn:deFinetti:5}.
        The third equality uses our assumption in \eqref{eqn:deFinetti:3}, and the fourth the equality 
        in \eqref{eqn:deFinetti:6}. The fifth equality follows from the induction hypothesis. Our derivation
        implies that both
        \begin{align*}
                & \lambda r.\,\int_{s \in \RR} \prod_{i = 1}^{n} \left[\alpha_o(s)_i \in U_i\right] \dd(k(r))
                \\
                & \mbox{and}\quad
                \lambda r.\,\prod_{i = 1}^n \int_{s \in \RR} \left[\alpha_o(s)_i \in U_i\right] \dd(k(r))
        \end{align*}
        are conditional expectations of the same function with respect to $\mu$ and the same $\sigma$-algebra.
        Thus, they are equal for $\mu$-almost all inputs $r \in \RR$.

        It remains to show that the equality in \eqref{eqn:deFinetti:3} holds for $\mu$-almost all $r \in \RR$. 
        Define two functions $h, h' \colon \RR \to \RR$ as follows:
        \begin{align*}
                h(r) & = \int_{s \in \RR} \left[\alpha_o(s)_n \in U_n\right] \dd(k(r))\text,
                \\
                h'(r) & = \int_{s \in \RR} \left[\alpha_o(s)_n \in U_n\right] \dd(k'(r))\text.
        \end{align*}
        Let $\Sigma$ and $\Sigma'$ be the $\sigma$-algebras generated by $\alpha_e$ and $\alpha'$, respectively.
        Then, $\Sigma \subseteq \Sigma'$, the function $h$ is $\Sigma$-measurable and bounded, 
        and $h'$ is $\Sigma'$-measurable and bounded. Let 
        $L^2(\RR,\Sigma',\mu)$ be the Hilbert space of the equivalence classes of
        square integrable functions that are $\Sigma'$-measurable.
        Let $M$ be the subspace of $L^2(\RR,\Sigma',\mu)$ consisting of the equivalence
        classes of some square-integrable and $\Sigma$-measurable functions. Then, 
        \[
                [h] \in M
                \quad\mbox{and}\quad
                [h'] \in L^2(\RR,\Sigma',\mu)\text,
        \]
        where $[h]$ and $[h']$ are equivalence classes of $L^2(\RR,\Sigma',\mu)$. Furthermore, $[h]$ is 
        the projection of $[h']$ to the subspace $M$, because $h$ and
        $h'$ are conditional expectations of the same bounded function with respect to $\mu$. 
        Thus, when $\Vert {-} \Vert_2$ is the $L^2$ norm with respect to the probability measure $\mu$,
        \[
                \Vert h \Vert_2 \leq \Vert h' \Vert_2\text.
        \]
        The equality holds here if and only if $[h] = [h']$, i.e.~$h$ and $h'$ are equal except at some $\mu$-null
        set in $\Sigma'$. So, it is sufficient to prove that $\Vert h \Vert_2 = \Vert h' \Vert_2$. We
        rewrite $\Vert h \Vert^2_2$ as follows:
        \begin{align*}
                & \Vert h \Vert^2_2 
                \\
                & = \int_\RR h^2 \dd\mu
                \\
                & = \int_{r \in \RR} \left(\int_{s \in \RR} \left[\alpha_o(s)_n \in U_n \right] \dd(k(r))\right)^2 \dd\mu
                \\
                & = \int_{r \in \RR} \left(\int_{s \in \RR} \left[\alpha_o(s)_n \in U_n \right] \dd((k_0 \circ \alpha_e)(r))\right)^2 \dd\mu
                \\
                & = \int_{u \in \XI} \left(\int_{s \in \RR} \left[\alpha_o(s)_n \in U_n \right] \dd(k_0(u))\right)^2 
                \dd\left((\alpha_e)_*(\mu)\right)
                \\
                & = \int_{u \in \XI} \left(\int_{x \in X} \left[x \in U_n \right] \dd\left((\alpha_o(-)_n)_*(k_0(u))\right)\right)^2 
                \\
                & \phantom{= \int_{u \in \XI} \left(\right)}
                \dd\left((\alpha_e)_*(\mu)\right)
                \\
                & = \int_{(x_0,u) \in X \times \XI} \left(\int_{x \in X} \left[x \in U_n \right] \dd\left((\alpha_o(-)_n)_*(k_0(u))\right)\right)^2 
                \\
                & \phantom{= \int_{u \in \XI} \left(\right)}
                \qquad\qquad
                \dd\left((\alpha_o(-)_n,\alpha_e)_*(\mu)\right)\text.
        \end{align*}
        At the last line, $X \times \XI$ denotes the product measurable space
        $(X \times \XI, \qbtosig {\qb X}\otimes \qbtosig {\qb \XI})$. A similar rewriting gives
        \begin{align*}
                & \Vert h' \Vert^2_2 =
                \\
                & \quad \int_{(x_0,u) \in X \times \XI} \left(\int_{x \in X} \left[x \in U_n \right] \dd\left((\alpha_o(-)_n)_*(k_0'(u))\right)\right)^2 
                \\
                & \phantom{= \int_{u \in \XI} \left(\right)} 
                \qquad\qquad
                \dd\left((\alpha_o(-)_n,\alpha'_e)_*(\mu)\right)\text.
        \end{align*}
        By the exchangeability of $(\alpha,\mu)$, 
        \begin{equation}
                \label{eqn:deFinetti:6.5}
                (\alpha_o(-)_n,\alpha_e)_*(\mu) = (\alpha_o(-)_n,\alpha')_*(\mu)
        \end{equation}
        as probability measures on $(X \times \XI,\qbtosig {\qb {(X \times \XI)}})$.
        This equality continues to hold when its LHS and RHS are viewed as
        probability measures on $(X \times \XI, \qbtosig {\qb X} \otimes \qbtosig {\qb \XI})$. 
        Then, the following functions from $X \times \XI$ to $\RR$:
        \begin{align}
                & \lambda (x_0,u).\, \int_{x \in X} \left[x \in U_n \right] \dd\left((\alpha_o(-)_n)_*(k_0(u))\right)
                \label{eqn:deFinetti:7}
                \\
                & \mbox{and}\quad
                \lambda (x_0,u).\, 
                \int_{x \in X} \left[x \in U_n \right] \dd\left((\alpha_o(-)_n)_*(k_0'(u))\right)
                \label{eqn:deFinetti:8}
        \end{align}
        are conditional expectations of $\lambda (x,u).\,[x \in U_n]$ with respect to 
        $(\alpha_o(-)_n,\alpha_e)_*(\mu)$ on $(X \times \XI,\qbtosig {\qb X} \otimes \qbtosig {\qb \XI})$
        and the $\sigma$-algebra generated by the projection $\lambda (x,u).\,u$. This is because
        $k_0$ and $k'_0$ are appropriate conditional probability kernels. Concretely,
        for all $U \in \qbtosig {\qb \XI}$, we can show that
        \begin{align*}
                & \int_{(x_0,u) \in X \times U} 
                \int_{x \in X} \left[x \in U_n \right] \dd((\alpha_o(-)_n)_*(k_0(u)))
                \\
                & \phantom{\int_{(x_0,u) \in X \times U} 
                \int_{x \in X} \left[x \in U_n \right]}
                \quad
                \dd((\alpha_o(-)_n,\alpha_e)_*(\mu))
                \\
                & =
                \int_{u \in U} 
                \int_{s \in \RR} \left[\alpha_o(s)_n \in U_n \right] \dd(k_0(u))\,\dd((\alpha_e)_*(\mu))
                \\
                & =
                \int_{r \in \inv{\alpha}_e(U)} 
                \int_{s \in \RR} \left[\alpha_o(s)_n \in U_n \right] \dd((k_0 \circ \alpha_e)(u))\,\dd\mu
                \\
                & =
                \int_{r \in \inv{\alpha}_e(U)} 
                \left[\alpha_o(r)_n \in U_n \right] \dd\mu\text.
        \end{align*}
        By similar reasoning and the equation in \eqref{eqn:deFinetti:6.5},
        \begin{align*}
                & \int_{(x_0,u) \in X \times U} 
                \int_{x \in X} \left[x \in U_n \right] \dd((\alpha_o(-)_n)_*(k'_0(u)))
                \\
                & \phantom{\int_{(x_0,u) \in X \times U} 
                \int_{x \in X} \left[x \in U_n \right]}
                \quad
                \dd((\alpha_o(-)_n,\alpha)_*(\mu))
                \\
                & = \int_{(x_0,u) \in X \times U} 
                \int_{x \in X} \left[x \in U_n \right] \dd((\alpha_o(-)_n)_*(k'_0(u)))
                \\
                & \phantom{\int_{(x_0,u) \in X \times U} 
                \int_{x \in X} \left[x \in U_n \right]}
                \quad
                \dd((\alpha_o(-)_n,\alpha')_*(\mu))
                \\
                & =
                \int_{r \in \inv{\alpha}_e(U)} 
                \left[\alpha_o(r)_n \in U_n \right] \dd\mu\text.
        \end{align*}
        The outcomes of these calculations imply that
        the functions in \eqref{eqn:deFinetti:7} and \eqref{eqn:deFinetti:8}
        are the claimed conditional expectations. Thus, these functions 
        are equal for $(\alpha_o(-)_n,\alpha_e)_*(\mu)$-almost all $(x_0,u)$. From this it follows that
        $\Vert h \Vert_2^2 = \Vert h' \Vert_2^2$, as desired.
\end{proof}
}
The following calculation combines these lemmas and shows that $\xi$, $k$ and $\gamma$ 
satisfy the requirement in Lemma~\ref{lemma:deFinetti-qbs:paraphrase}: 
\begin{align*} 
        & \int_{r \in \RR} \prod_{i = 1}^n \left[\alpha(r)_i \in U_i\right] \dd\mu
        \\ 
        &
        = \int_{r \in \RR} \prod_{i = 1}^n \left[\alpha_o(r)_i \in U_i\right] \dd\mu 
        & \mbox{Lem.~\ref{lemma:deFinetti-qbs:odd}}
        \\ 
        &
        = \int_{r \in \RR} \left(\int_{s \in \RR} \prod_{i = 1}^n \left[\alpha_o(s)_i \in U_i\right] \dd(k(r))\right)\dd\mu 
        & \mbox{Eq.~\eqref{eqn:deFinetti-qbs:0}}
        \\
        &
        = \int_{r \in \RR} \prod_{i = 1}^n \left(\int_{s \in \RR} \left[\alpha_o(s)_i \in U_i\right] \dd(k(r))\right) \dd\mu 
        & \mbox{Lem.~\ref{lemma:deFinetti-qbs:independence}}
        \\ 
        &
        = \int_{r \in \RR} \prod_{i = 1}^n \left(\int_{s \in \RR} \left[\alpha_o(s)_1 \in U_i\right] \dd(k(r))\right) \dd\mu 
        & \mbox{Lem.~\ref{lemma:deFinetti-qbs:marginal}}
        \\ 
        &
        = \int_{r \in \RR} \prod_{i = 1}^n \left(\int_{s \in \RR} \left[\gamma(s) \in U_i\right] \dd(k(r))\right) \dd\xi
        & \mbox{Def. of $\gamma,\xi$}\text.  
\end{align*}
This concludes our proof outline for Theorem~\ref{thm:deFinetti-qbs}.

\section{Related work}\label{sec:related}

\subsection{Quasi-topological spaces and categories of functors}

Our development of a cartesian closed category from measurable spaces mirrors the development of cartesian closed categories of topological spaces
over the years. 

For example, quasi-Borel spaces are reminiscent of \emph{subsequential spaces}~\cite{johnstone-topological-topos}: a set $X$ together with a collection of functions $Q\subseteq {[\NN\cup \{\infty\}\to X]}$ satisfying some conditions. 
The functions in  $Q$ are thought of as convergent sequences.
Another notion of generalized topological space is \emph{C-space}~\cite{xu-escardo}: a set~$X$ together with a collection $Q\subseteq [2^\NN\to X]$
of `probes' satisfying some conditions;
this is a variation on Spanier's early notion of \emph{quasi-topological space}~\cite{spanier:quasitopologies}.
Another reminiscent notion in the context of differential geometry is a \emph{diffeological space}~\cite{bh-convenient}:
a set $X$ together with a set $Q_U\subseteq [U\to X]$ of `plots' for each open subset $U$ of $\RR^n$ satisfying some conditions.
These examples all form cartesian closed categories. 

A common pattern is that these spaces can be understood as extensional (concrete) sheaves 
on an established category of spaces. 
Let $\SMeas$ be the category of standard Borel spaces and measurable functions.
There is a functor
\shortversion{
  $J\colon \QBS\to[\op\SMeas,\Set]$
}
\longversion{
\[
  J\colon \QBS\to[\op\SMeas,\Set]
\]
}
with $\big(J(X,\qb X))(Y,\sigalg Y\big)\defeq \QBS\big((Y,\sigtoqb{\sigalg Y}),(X,\qb X)\big)$,
which is full and faithful by Prop.~\ref{prop:adjunction}(2).
We can characterize those functors that arise in this way.
\begin{proposition}\label{prop:extensionalpresheaf}
  Let $F\colon\op\SMeas\to\Set$ be a functor.
  The following are equivalent:
  \begin{itemize}
  \item $F$ is naturally isomorphic to $J(X,\qb X)$, for some quasi-Borel space $(X,\qb X)$;
  \item $F$ preserves countable products and $F$ is extensional: the functions $i_{(X,\sigalg X)}\colon F(X,\sigalg X)\to\Set(X,F(1))$ are injective, where $(i_{(X,\sigalg X)}(\xi))(x)=(F(\ulcorner x\urcorner))(\xi)$, and we consider $x\in X$ as a function $\ulcorner x\urcorner \colon 1\to X$. 
  \end{itemize}
\end{proposition}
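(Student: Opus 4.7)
My plan is to treat the two implications separately, placing the real work on $(\Leftarrow)$. For $(\Rightarrow)$, both clauses follow from general facts about $\QBS$. A morphism of quasi-Borel spaces is by definition a structure-preserving function, so the map $i_Y$ sends a morphism $R(Y)\to(X,\qb X)$ to its underlying set-function and is automatically injective; this is extensionality. Countable product preservation reduces to showing that the right adjoint $R$ of Proposition~\ref{prop:adjunction} preserves countable coproducts of standard Borel spaces, which follows by describing a measurable function $\RR\to\coprod_i Y_i$ as a measurable partition of $\RR$ together with measurable extensions on each piece. With this in hand,
\[
J(X,\qb X)(\coprod_i Y_i)=\QBS(R(\coprod_i Y_i),(X,\qb X))\cong \prod_i\QBS(R(Y_i),(X,\qb X))
\]
is just the universal property of a coproduct.

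For $(\Leftarrow)$, assume $F$ is extensional and preserves countable products. The only natural candidate is $X\defeq F(1)$ equipped with $\qb X\defeq i_\RR(F(\RR))\subseteq [\RR\to X]$; extensionality makes $i_\RR\colon F(\RR)\to \qb X$ a bijection. I would verify the three axioms of Definition~\ref{def:qbs} by transporting them along this bijection using point-evaluation chases: for measurable $f\colon \RR\to\RR$ and $\xi\in F(\RR)$ one computes $i_\RR(\xi)\circ f = i_\RR(F(f)(\xi))$; for $x\in X$ the constant map $\lambda r.x$ equals $i_\RR(F(!)(x))$ where $!\colon\RR\to 1$. The gluing axiom is where product preservation does real work: a partition $\RR=\biguplus_i S_i$ by Borel sets gives a measurable isomorphism $\RR\cong \coprod_i S_i$ in $\SMeas$ (each $S_i$ standard Borel), so $F(\RR)\cong \prod_i F(S_i)$ via the restriction maps $F(\iota_i)$; given witnesses $\xi_i\in F(\RR)$ with $i_\RR(\xi_i)=\alpha_i$, the unique $\xi\in F(\RR)$ with $F(\iota_i)(\xi) = F(\iota_i)(\xi_i)$ for all $i$ satisfies $i_\RR(\xi) = \beta$, by factoring each evaluation $\ulcorner r\urcorner\colon 1\to\RR$ through the inclusion $\iota_j$ for which $r\in S_j$.

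Finally I would exhibit a natural isomorphism $\phi\colon F\Rightarrow J(X,\qb X)$ by setting $\phi_Y(\xi)\defeq i_Y(\xi)$. The same evaluation argument shows $\phi_Y(\xi)\circ\alpha = i_\RR(F(\alpha)(\xi))\in \qb X$ for every measurable $\alpha\colon \RR\to Y$, so $\phi_Y(\xi)$ is a morphism $R(Y)\to (X,\qb X)$; naturality is immediate from the definition of $i_Y$. Injectivity of $\phi_Y$ is exactly extensionality. For surjectivity I would invoke the dichotomy in Proposition~\ref{prop:char-sbs}: if $Y\cong \RR$, any morphism $f\colon R(Y)\to(X,\qb X)$ composed with $\id_\RR$ gives $f\in\qb X$, hence $f=i_\RR(\xi)=\phi_Y(\xi)$; if $Y$ is countable and discrete, then $Y\cong \coprod_{y\in Y}1$ and product preservation gives $F(Y)\cong X^Y\cong\Set(Y,X)$, while axioms (ii) and (iii) for $\qb X$ (applied to the partition of $\RR$ by preimages under any measurable $\RR\to Y$) show every function $Y\to X$ is in fact a morphism $R(Y)\to(X,\qb X)$.

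The main obstacle I expect is the bookkeeping in the gluing step: the isomorphism $F(\RR)\cong\prod_i F(S_i)$ coming from product preservation must be compatible with the evaluation maps $\ulcorner r\urcorner$ defining $i_\RR$, and the key identity is that for $r\in S_j$ the evaluation $\ulcorner r\urcorner$ factors through $\iota_j$. Once this coherence is spelled out, everything else is a routine unfolding of definitions.
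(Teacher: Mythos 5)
Your proof is correct, and it follows exactly the route the paper intends: the paper states this proposition without proof, remarking only that it is an instance of the general ``extensional presheaf'' pattern, and your argument is the standard instantiation of that pattern --- take $X\defeq F(1)$ and $\qb X\defeq i_\RR(F(\RR))$, verify the three axioms of Def.~\ref{def:qbs} by functoriality chases plus countable-product preservation for the gluing axiom, and prove that $\phi_Y=i_Y$ is a natural bijection using the dichotomy of Prop.~\ref{prop:char-sbs} (for the converse direction, extensionality and preservation of countable coproducts by $\meastoqbs$, which the paper records explicitly, do the work). The points you gloss are genuinely routine: empty blocks $S_i$ of the partition can simply be discarded before invoking product preservation, the case $Y\cong\RR$ of surjectivity transports along the isomorphism by the naturality you already established, and in the countable discrete case product preservation says precisely that $i_Y$ is a bijection onto $\Set(Y,X)$, so surjectivity onto the subset of morphisms is immediate (your extra observation that every function $Y\to X$ is a morphism, via axioms (ii) and (iii), then upgrades this to $\QBS(\meastoqbs(Y),(X,\qb X))=\Set(Y,X)$). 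No gaps.
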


There are similar characterizations of subsequential spaces~\cite{johnstone-topological-topos}, quasi-topological spaces~\cite{dubuc-concrete-quasitopoi} and diffeological spaces~\cite{bh-convenient}. Prop.~\ref{prop:extensionalpresheaf}
is an instance of a general pattern (e.g.~\cite{bh-convenient,dubuc-concrete-quasitopoi});
but that is not to say that the definition of quasi-Borel space (Def.~\ref{def:qbs})
arises automatically.
The method of 
extensional presheaves also arises in other models of computation
such as finiteness spaces~\cite{ehrhard-extensional} and realizability models~\cite{rosolini-streicher}. 
This work appears to be the first application to probability theory, 
although via Prop.~\ref{prop:extensionalpresheaf} there are connections 
to Simpson's probability sheaves~\cite{simpson-cippmi}.

The characterization of Prop.~\ref{prop:extensionalpresheaf} gives a canonical categorical status to quasi-Borel spaces.
It also connects with our earlier work~\cite{statonyangheunenkammarwood:higherorder}, which used the cartesian closed category of countable-product-preserving functors
in $[\op\SMeas, \Set]$.
Quasi-Borel spaces have several advantages over this functor category. 
For one thing, they are more concrete, leading
to better intuitions for their constructions. For example, measures in~\cite{statonyangheunenkammarwood:higherorder} are built abstractly from left Kan extensions, whereas for quasi-Borel spaces they have a straightforward concrete definition (Def.~\ref{def:probabilitymeasure}). 
For another thing, in contrast to the functor category in~\cite{statonyangheunenkammarwood:higherorder}, quasi-Borel spaces form a well-pointed category:
if two morphisms $(X,\qb X)\to (Y,\qb Y)$ are different
then they disagree on some point in $X$.
From the perspective of semantics of programming languages, 
where terms in context $\Gamma\vdash t :A$ are interpreted as morphisms
$\denot t \colon \denot \Gamma\to\denot A$, well-pointedness is a crucial property.
It says that if two open terms
are different, $\denot t\neq \denot u:\denot \Gamma\to\denot A$, 
then there is a ground context $\mathcal C \colon 1\to\denot \Gamma$ that 
distinguishes them: $\denot{\mathcal C[t]}\neq \denot{\mathcal C[u]}:1\to \denot A$. 

Quasi-Borel spaces add objects to make the category of measurable spaces cartesian closed. Another interesting future direction is to add morphisms to make more objects isomorphic, and so find a cartesian closed subcategory~\cite{steenrod:convenient}.


\subsection{Domains and valuations}
In this paper our starting point has been the standard foundation for probability theory, 
based on $\sigma$-algebras and probability measures. 
An alternative foundation for probability is based on topologies and valuations. 
An advantage of our starting point is that we can reference the canon of work on
probability theory. Having said this, an advantage to the approach based on valuations
is that it is related to domain theoretic methods, which have already been used to 
give semantics to programming languages. 

Jones and Plotkin~\cite{jones-plotkin} showed that valuations form a monad which is 
analogous to our probability monad. However, there is considerable debate
about which cartesian closed category this monad should be based on~(e.g.~\cite{jung-tix,gl-qrb-domains}).
For a discussion of the concerns in the context of programming languages, see e.g.~\cite{escardo-high-type-prob-testing}. 
One recent proposal is to use Girard's probabilistic coherence spaces~\cite{etp-pcoh}.
Another is to use a topological domain theory as a cartesian closed category for analysis and probability~(\cite{bss-convenient-domains,pape-streicher,huang-morrisett}). 

Concerns about probabilistic powerdomains have led instead to domains of random variables~(e.g.~\cite{mislove-randvar,barker-monad,scott-stochastic}). 
We cannot yet connect formally with this work, but there are many intuitive links. For example, our measures on quasi-Borel spaces (Def.~\ref{def:probabilitymeasure}) are reminiscent of continuous random variables on a dcpo.

An additional advantage of a domain theoretic approach is that it naturally  
supports recursion. We are currently investigating a notion of `ordered quasi-Borel
space', by enriching Prop.~\ref{prop:extensionalpresheaf} over dcpo's.


\subsection{Other related work}
Our work is related to two recent semantic studies on probabilistic 
programming languages. The first is Borgstr\"om et al.'s \emph{operational} (not denotational as
in this paper) semantics for
a higher-order probabilistic programming language with continuous 
distributions~\cite{blgs-lambda-prob-untyped},
which has been used to justify a basic inference algorithm for the language. 
Recently, Culpepper and Cobb refined 
this operational approach using logical relations~\cite{Culpepper-esop17}. The second study is Freer 
and Roy's results on a computable variant of de Finetti's theorem and its implication 
on exchangeable random processes implemented in
higher-order probabilistic programming languages~\cite{FreerR12}. One interesting future direction is 
to revisit the results about logical relations and computability in these studies with quasi-Borel spaces,
and to see whether they can be extended to spaces other than standard Borel spaces.


\section{Conclusion}\label{sec:conclusion}

We have shown that quasi-Borel spaces (\S\ref{sec:quasiborel})
support higher-order functions (\S\ref{sec:structure}) 
as well as spaces of probability measures (\S\ref{sec:giry}). 
We have illustrated the power of this new formalism by
giving a semantic analysis of Bayesian regression (\S\ref{sec:example}),
by rephrasing the randomization lemma as a quotient-space construction (\S\ref{sec:functions}),
and by showing that it supports de Finetti's theorem (\S\ref{sec:definetti}).


\shortversion{
\section*{Acknowledgment}
We thank Radha Jagadeesan and Dexter Kozen for encouraging us to think about a well-pointed cartesian closed category for probability theory, Vincent Danos and Dan Roy for nudging us to work on de Finetti's theorem, Mike Mislove for discussions of quasi-Borel spaces, and Martin Escard\'o for explaining C-spaces, and Alex Simpson for detailed report with many suggestions. This research was supported by a Royal Society Research Fellowship and EPSRC grants EP/L002388/2 and EP/N007387/1, and also by Institute for Information \& communications Technology Promotion (IITP) grant funded by the Korea government (MSIP) (No.R0190-16-2011, Development of Vulnerability Discovery Technologies for IoT Software Security). 
}
\longversion{
\section*{Acknowledgment}
We thank Radha Jagadeesan and Dexter Kozen for encouraging us to think about a well-pointed cartesian closed category for probability theory, Vincent Danos and Dan Roy for nudging us to work on de Finetti's theorem, Mike Mislove for discussions of quasi-Borel spaces, Martin Escard\'o for explaining C-spaces, and Alex Simpson for detailed report with many suggestions. This research was supported by a Royal Society Research Fellowship and EPSRC grants EP/L002388/2 and EP/N007387/1, and also by an Institute for Information \& communications Technology Promotion (IITP) grant funded by the Korea government (MSIP) (No.R0190-16-2011, Development of Vulnerability Discovery Technologies for IoT Software Security). 
}


\bibliographystyle{IEEEtranS}
\bibliography{lics2017}
\end{document}